\documentclass[11pt]{article}
\usepackage{amssymb,amsmath,verbatim,amsthm}
\newtheorem{Theorem}{Theorem}[section]
\newtheorem{Lemma}[Theorem]{Lemma}
\newtheorem{Proposition}[Theorem]{Proposition}

\newtheorem{Corollary}[Theorem]{Corollary}
\newtheorem{Remark}[Theorem]{Remark}
\newtheorem{Example}[Theorem]{Example}
\newtheorem{Definition}[Theorem]{Definition}

\usepackage{enumerate}
\usepackage{color,xcolor}
\usepackage{indentfirst}
\usepackage{booktabs}
\usepackage{makecell}
\usepackage{array}
\usepackage{bm}
\newcommand{\ignore}[1]{}

\newcommand{\PGL}{{\operatorname{PGL}}}

\newcommand{\Aut}{{\operatorname{Aut}}}
\newcommand{\MAut}{{\operatorname{MAut}}}
\newcommand{\PAut}{{\operatorname{PAut}}}
\newcommand{\GDD}{{\operatorname{GDD}}}
\newcommand{\Stab}{{\operatorname{Stab}}}
\newcommand{\Tr}{{\operatorname{Tr}}}

\begin{document}
	%------------------------------------------------------------------------
	\renewcommand{\theequation}{\thesection.\arabic{equation}}
	
	\title{\bf Research on Linear Codes Holding $q$-Ary $t$-Designs\footnote{Supported by National Natural Science Foundation under Grant 12571346. (Corresponding Author: Junling Zhou. Email: {\tt jlzhou@bjtu.edu.cn})}}
	\author{
		{ Xinghao Wu\footnote{School of Mathematics and Statistics, Beijing Jiaotong University,  Beijing  100044, China. (Email: {\tt 23111512@bjtu.edu.cn})}}, { Junling Zhou\footnote{School of Mathematics and Statistics, Beijing Jiaotong University,  Beijing  100044, China; Hebei Provincial Key Laboratory of Mathematical Theory and Analysis for Network and Data Science.
		}}\\
	}
	\date{ }
	\maketitle
	%------------------------------------------------------------------------	
	\begin{abstract}
		A $q$-ary $t$-$(n,w,\lambda)$ design is a collection $\mathcal{A}$ of vectors of weight $w$ in $\mathbb{F}_{q}^{n}$ with the property that  every vector of weight $t$ in $\mathbb{F}_{q}^{n}$ is contained in exactly $\lambda$ members of $\mathcal{A}$.
		The supports of the vectors in a $q$-ary $t$-design form an ordinary $t$-design, possibly with repeated blocks. While linear codes supporting ordinary combinatorial designs have been extensively studied, the case where codes hold $q$-ary designs remains largely unexplored. This motivates a systematic investigation into whether codewords of a fixed weight in a linear code can form a $q$-ary $t$-design. Building on previous work, we develop two new criteria for this purpose. Applying these criteria, we show that several families of linear codes hold $q$-ary $2$-designs, including one- and two-weight codes, extremal self-dual codes, as well as certain dual codes, shortened codes, and punctured codes derived from them.
		Moreover, for linear codes that do not satisfy these criteria, we provide an alternative approach based on the automorphism group of the code. This method enables the construction of $q$-ary $2$-designs from doubly-extended Reed-Solomon codes. Notably, for a class of linear codes previously known to support $4$-designs, we demonstrate that their codewords of certain weights give rise to $q$-ary $2$-designs whose parameters are precisely determined.	
		\medskip
		
		\noindent {\bf Keywords}: $t$-design, $q$-ary $t$-design, linear code, doubly-extended Reed-Solomon code, MDS code, trace code.
		\medskip
	\end{abstract}
	%------------------------------------------------------------------------	
	\section{Introduction}
		
	Design theory and coding theory are intimately connected, with each field elegantly drawing upon the results and methodologies of the other. On one hand, the incidence matrix of a $t$-design is applied to generate a linear code over the finite field $\mathbb{F}_q$. By this method many good codes have been produced and documented in the literature (see, for example, \cite{Key, Ding-DS, Tonchev1, Tonchev2}). On the other hand, one may obtain a $t$-design by taking the supports of all codewords of a fixed weight in a code. Our focus lies  primarily in the latter. However, we consider all codewords of a fixed weight (rather than just their supports) and investigate whether they can give rise to a $q$-ary $t$-design. If so, the supports of these codewords form an ordinary $t$-design (possibly with repeated blocks). In the special case where $q=2$, a binary  $t$-design is equivalent to an ordinary $t$-design. Numerous linear and nonlinear codes have been explored for constructing ordinary $t$-designs \cite{Cameron, Ding-Steiner, Ding-five, Ding, Pless}. In recent years, this line of research has attracted growing interest (see, for example, \cite{Awada, Cunsheng2020Infinite, Heng, Sole, Xiang, Cao1, Yan}). Notably, in 2021, Tang and Ding achieved a breakthrough by discovering an infinite family of 4-designs supported by linear codes \cite{BCH-4-design}.
	
	In 1973, Delsarte \cite{dsd's'} introduced four fundamental parameters of a code $\mathcal{C}$, which for a linear code reduce to the minimum weight and the number of distinct nonzero weights of $\mathcal{C}$ and its dual. He revealed the combinatorial significance of these parameters and introduced the notion of a $q$-ary $t$-design (referred to as a ``$t$-design of type $\lambda = q-1$"). Delsarte \cite{Delsarte1973} also presented a method for identifying $q$-ary designs from codes, as stated in Lemma \ref{d'-s} of this paper. Moreover, he explored the association schemes of coding theory from an algebraic perspective \cite{Delsarte1973}, investigating, among other things, the regularity of codes.
	Shortly afterwards, Goethals and van Tilborg \cite{Tilborg1975} and van Tilborg in his PhD thesis \cite{UP} established that $q$-ary $t$-designs can be held by certain $t$-regular codes; see Lemma \ref{d-s'} herein. Consolidating and refining these theoretical developments, we propose the so-called Standard Criterion in Theorem \ref{Standard-Criterion}, which can be regarded as an analog of the well-known Assmus-Mattson theorem. In 1976, Assmus, Goethals, and Mattson \cite{punctured-t} further investigated the combinatorial properties of $q$-ary $t$-designs. Building on their work, we introduce a criterion for identifying $q$-ary designs from punctured and shortened codes, presented as the Puncturing-Shortening Criterion in Theorem \ref{punctured}.
	Following the contributions of Delsarte, Goethals and van Tilborg, and Assmus, Goethals and Mattson, no significant progress was made on the problem of constructing $q$-ary $t$-designs from linear codes for $q > 2$ for over half a century.
	This prolonged stagnation motivates us to undertake a systematic investigation of this problem.
	
	The Standard Criterion and the Puncturing-Shortening Criterion together form a general framework for generating $q$-ary $t$-designs from linear codes when $q > 2$. Applying this framework systematically to prominent families of linear codes, we demonstrate that MDS codes, perfect codes, one- and two-weight codes, and extremal self-dual codes yield a rich variety of $q$-ary designs.
	Recognizing that this general framework does not always guarantee the maximal achievable strength $t$, we further develop a new method. This approach effectively exploits the $t$-transitivity of a code's automorphism group to identify $q$-ary $t$-designs, reducing the verification to checking a single set of $t$ coordinates (see Theorem \ref{transitive-design}). This technique enables us to produce infinite families of $q$-ary $2$-designs from doubly-extended Reed-Solomon codes and a family of trace codes, in cases where the general criteria are not sufficiently sharp.	
	
	The remainder of this paper is organized as follows. Section $2$ introduces	the necessary concepts and preliminary results on linear codes, $t$-designs, $q$-ary $t$-designs, and their relationships.	 Section 3 develops two general criteria and  establishes a framework for producing $q$-ary $t$-designs from linear codes. Section 4 presents the systematic applications of the general framework to classical code families. Section 5 introduces the novel automorphism-group-based method and its applications. Section 6 summarizes and concludes the paper.

	%------------------------------------------------------------------------
	\section{Preliminaries}
	
	\subsection{Linear codes and fundamental parameters}
	Let $q$ be a prime power and $\mathbb{F}_{q}$ be the finite field with $q$ elements. Let $\mathbb{F}_{q}^{n}$ be the $n$-dimensional vector space over $\mathbb{F}_{q}$.
	For a positive integer $n$, denote $[n] := \{1,2,\dots,n\}$.
	A vector $\bm{x}$ in $\mathbb{F}_{q}^{n}$ is denoted by $\bm{x} = (x_{1},x_{2},\dots,x_{n})$.
	For vectors $\bm{x},\bm{y}\in\mathbb{F}_{q}^{n}$, the \emph{(Hamming) distance} $d(\bm{x},\bm{y})$ is the number of coordinates in which they differ, i.e., $ d(\bm{x},\bm{y})= |\{i\in [n]:x_{i}\ne y_{i}\}|$.
	The \emph{(Hamming) weight} $\mathrm{wt}(\bm{x})$ of a vector $\bm{x}$ is its distance to the zero vector. Thus $\mathrm{wt}(\bm{x})$ is the number of nonzero entries in $\bm{x}$.
	The \emph{support} of the vector $\bm{x}$ is the set of coordinate positions where its entries are nonzero, i.e., supp$(\bm{x})=\{i\in[n] : x_{i}\ne0\}$.
	A \emph{linear code} $\mathcal{C}$ of length $n$ over the finite field $\mathbb{F}_q$ is a subspace of $\mathbb{F}_{q}^{n}$.
	An $[n,k]_q$ code denotes a linear code over $\mathbb{F}_{q}$ of length $n$ and dimension $k$. If the \emph{minimum distance} $d$ is specified, then the notation $[n,k,d]_q$ is used.
	Given a code $\mathcal{C}\subseteq \mathbb{F}_{q}^{n}$ and a vector $\bm{x}\in \mathbb{F}_{q}^{n}$, the distance from $\bm{x}$ to $\cal C$ is $d(\bm{x},\mathcal{C}) = \min\{d(\bm{x},\bm{c}) : \bm{c}\in\mathcal{C} \}$.

	With respect to the standard inner product $\langle \bm{x}, \bm{y} \rangle = \sum_{i=1}^{n}x_{i}y_{i}$ on $\mathbb{F}_{q}^{n}$, the \emph{dual code} of $\mathcal{C}$ is defined as
	\[
	\mathcal{C}^{\perp} = \{\bm{x}\in \mathbb{F}_{q}^{n}: \langle \bm{x}, \bm{y} \rangle=0 \text{ for all } \bm{y}\in \mathcal{C} \}.
	\]
	$\mathcal{C}$ is called \emph{self-orthogonal} if $\mathcal{C}\subseteq\mathcal{C}^{\perp}$, and \emph{self-dual} if $\mathcal{C}=\mathcal{C}^{\perp}$.
	When working over $\mathbb{F}_{q^{2}}$, it is often useful to adopt the Hermitian inner product $\langle \bm{x}, \bm{y} \rangle_{H} = \sum_{i=1}^{n}x_{i}y_{i}^{q}$. The \emph{Hermitian dual} of $\mathcal{C}$ is defined as
	\[
	\mathcal{C}^{\perp_{H}} = \{\bm{x}\in \mathbb{F}_{q^2}^{n}: \langle \bm{x}, \bm{y} \rangle_{H}=0 \text{ for all } \bm{y}\in \mathcal{C} \}.
	\]
	Accordingly, $\cal C$ is \emph{Hermitian self-orthogonal} if $\mathcal{C}\subseteq\mathcal{C}^{\perp_{H}}$, and \emph{Hermitian self-dual} if $\mathcal{C}=\mathcal{C}^{\perp_{H}}$.

	For an $[n,k,d]_{q}$ linear code $\mathcal{C}$, let $W(\mathcal{C})$ denote the set of its distinct nonzero weights, and let $s=s(\mathcal{C})$ be the number of nonzero weights.
	The minimum distance of the dual code $\mathcal{C}^{\perp}$, denoted by $d^\perp = d(\mathcal{C}^{\perp})$, is called the \emph{dual distance} of $\cal C$. The number of nonzero weights in $\mathcal{C}^{\perp}$, written as $s^{\perp} = s(\mathcal{C}^{\perp})$, is referred to as the \emph{external distance} of $\cal C$ \cite{dsd's'}.
	The \emph{packing radius} of $\mathcal{C}$ is defined as $e = e(\mathcal{C}) = \lfloor (d-1)/2 \rfloor$, derived from the sphere-packing bound. The \emph{covering radius} $\rho = \rho(\mathcal{C})$ is the smallest integer $\rho$ satisfying $|\mathcal{C}|\sum_{i=0}^{\rho} \binom{n}{i}(q-1)^i \geq q^{n}$.
	By definition, $e \le \rho$. The code is called \emph{perfect} if equality holds, meaning that every vector in $\mathbb{F}_q^n$ is within distance $e$ from exactly one codeword.
	These parameters are interrelated by known bounds. In particular, from \cite[Theorem 3.3]{dsd's'}, we have
	\begin{align}\label{e<rho<s'}
		e\le \rho\le s^{\perp}.
	\end{align}

	A code $\mathcal{C}$ with parameters $[n,k,d]_{q}$ is called \emph{maximum distance separable} (MDS) if it meets the Singleton bound, i.e., if $d=n-k+1$.
	The code $\mathcal{C}$ is said to be \emph{divisible} by an integer $\Delta>1$ if all codewords have weights divisible by $\Delta$; in this case, $\Delta$ is called a \emph{divisor} of $\cal C$.
	
	Let $\cal C$ be a linear $[n,k,d]_{q}$ code.
	The \emph{punctured code} $\mathcal{C}^{\{m\}}$ is obtained by deleting the $m$-th coordinate from every codeword.
	The \emph{shortened code} $\mathcal{C}_{\{m\}}$
	consists of all codewords that are zero at the $m$-th coordinate, with the coordinate removed.
	For the corresponding dual operations, the following hold \cite[Theorem 1.5.7]{fundamentals}:
	$(\mathcal{C^{\perp}})_{\{m\}} = (\mathcal{C}^{\{m\}})^{\perp},
	(\mathcal{C^{\perp}})^{\{m\}} = (\mathcal{C}_{\{m\}})^{\perp}.$

	\subsection{$t$-Designs and $q$-ary $t$-designs}
	
	\begin{Definition}
		A $t\text{-}(n,w,\lambda)$ \emph{design} is an incidence structure $(\mathcal{P},\mathcal{B})$, where $\mathcal{P}$ is a set of $n$ elements (called \emph{points}), and $\mathcal{B}$ is a collection of $w$-subsets of $\mathcal{P}$ (called \emph{blocks}), such that every $t$-subset of $\mathcal{P}$ is contained in exactly $\lambda$ blocks.
		The parameter $t$ is said to be the \emph{strength} and $\lambda$ is the \emph{index} of the design.
	\end{Definition}
	For simplicity, we usually denote a $t$-design solely by its block set $\mathcal{B}$.
	The design is termed \emph{simple} if $\mathcal{B}$ contains no repeated blocks.
	Denote by $\binom{\mathcal{P}}{w}$ the set of all $w$-subsets of $\cal P$. Then $\binom{\mathcal{P}}{w}$ forms a trivial $w$-$(n,w,1)$ design, called the \emph{complete design}.
	It is well-known that a $t$-$(n,w,\lambda)$ design is also an $i$-$(n,w,\lambda_{i})$ design for $0\le i\le t$, where
	\begin{align}\label{lambda-i}
		\lambda_{i}=\lambda\binom{n-i}{t-i}\bigg/\binom{w-i}{t-i}.
	\end{align}
	
	The concept of a $t$-design can be generalized to the $q$-ary setting \cite{punctured-t,dsd's',Delsarte1973,NewCR}.
	To make a distinction, a $t$-design defined above is called an \emph{ordinary} or a \emph{classical} $t$-design.
	For any two vectors $\bm{a},\bm{b} \in \mathbb{F}^{n}_{q}$ with $\bm{a}=(a_{1},a_{2},\ldots,a_{n}), \bm{b}=(b_{1},b_{2},\ldots,b_{n})$, we say that $\bm{a}$ \emph{covers} $\bm{b}$ if for every nonzero $b_{i}$, it holds that $a_{i}=b_{i}$.
	%q-ary t-design定义 √
	\begin{Definition}\label{def-qtdesign}
		Let $\mathcal{A}$ be a nonempty set of vectors (also called blocks) in $\mathbb{F}_{q}^{n}$, all of which have a constant  weight $w$.
		Then $\mathcal{A}$ is called a \emph{$q$-ary $t$-$(n,w,\lambda)$ design}, also denoted as a $t$-$(n,w,\lambda)_{q}$ design, if for every vector $\bm{x} \in \mathbb{F}_{q}^{n}$ of weight $t$, there are exactly $\lambda$ blocks in $\mathcal{A}$ that cover $\bm{x}$.
		The parameters $t$ and $\lambda$ are referred to as strength and index respectively.
	\end{Definition}

	We now explore combinatorial properties of $q$-ary $t$-designs.
	Like ordinary $t$-designs, the existence of a $q$-ary $t$-design implies the existence of $q$-ary designs of lower strength. This is formalized in the following lemma.	
	
	%q-ary t-design的λi
	\begin{Lemma}
		\cite[Lemma 2.4.6]{UP} \label{q-design-li}
		Let $\mathcal{A}$ be a $q$-ary $t$-$(n, w, \lambda)$ design. Then, for every integer $i$ with $0 \le i \le t$, $\mathcal{A}$ also forms a $q$-ary $i$-$(n, w, \lambda_i)$ design, where
		\begin{equation}\label{q-lambda-i}
			\lambda_{i} = \lambda(q-1)^{t-i} \binom{n-i}{t-i} \bigg/\binom{w-i}{t-i}.
		\end{equation}
	\end{Lemma}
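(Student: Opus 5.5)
The plan is a standard double-counting argument that reduces strength $t$ to strength $i$, run by downward induction: it suffices to show that a $q$-ary $t$-$(n,w,\lambda)$ design is a $q$-ary $(t-1)$-design with index $\lambda_{t-1}=\lambda(q-1)\frac{n-t+1}{w-t+1}$. Iterating this one-step reduction and multiplying the factors telescopes to the formula \eqref{q-lambda-i}, since $\binom{n-i}{t-i}/\binom{w-i}{t-i}=\prod_{j=i}^{t-1}\frac{n-j}{w-j}$. Alternatively, one can do the count directly for general $i$ and avoid induction; I will describe the direct version.

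Fix a vector $\bm{y}\in\mathbb{F}_q^n$ of weight $i$, and let $N(\bm{y})$ be the number of blocks $\bm{a}\in\mathcal{A}$ covering $\bm{y}$. We count, in two ways, the pairs $(\bm{x},\bm{a})$ with $\bm{x}$ of weight $t$ covering $\bm{y}$, and $\bm{a}\in\mathcal{A}$ covering $\bm{x}$. (Covering is transitive, so such an $\bm{a}$ automatically covers $\bm{y}$.)

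First, count over $\bm{x}$. A weight-$t$ vector $\bm{x}$ covering $\bm{y}$ is obtained by choosing $t-i$ positions outside $\mathrm{supp}(\bm{y})$ and nonzero entries there. This gives $\binom{n-i}{t-i}(q-1)^{t-i}$ choices, each covered by exactly $\lambda$ blocks. The total is therefore $\lambda(q-1)^{t-i}\binom{n-i}{t-i}$.

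Second, count over $\bm{a}$. Each block $\bm{a}$ covering $\bm{y}$ covers exactly $\binom{w-i}{t-i}$ weight-$t$ vectors $\bm{x}$ that cover $\bm{y}$. Such an $\bm{x}$ is determined by choosing $t-i$ positions of $\mathrm{supp}(\bm{a})\setminus\mathrm{supp}(\bm{y})$, with its entries there forced to equal those of $\bm{a}$. The total is therefore $N(\bm{y})\binom{w-i}{t-i}$.

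Equating the two counts gives $N(\bm{y})=\lambda_i$, independent of $\bm{y}$, which is exactly the claim. The case $i=0$ is consistent: then $\bm{y}=\bm{0}$ is covered by every block, and $|\mathcal{A}|=\lambda_0$.

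The only delicate point is the second count. Because entries must match exactly, rather than merely supports, the number of covered vectors carries no factor of $(q-1)$; this asymmetry between the two counts is precisely what produces the factor $(q-1)^{t-i}$ in the formula. I would also note implicitly that $w\ge t$, so that the binomial coefficients in the denominator are nonzero; this holds whenever the design is nontrivial.
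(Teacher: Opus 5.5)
Your proof is correct. Transitivity of covering ensures that only blocks covering $\bm{y}$ contribute to the count. The two tallies $\lambda(q-1)^{t-i}\binom{n-i}{t-i}$ and $N(\bm{y})\binom{w-i}{t-i}$ are computed correctly, and the tacit assumption $w\ge t$ that you flag is exactly what the statement needs.

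The paper gives no proof of its own here; it only cites van Tilborg's thesis. Your argument is the standard double count for this fact, the same kind of counting the paper uses in the proof of Proposition \ref{qt->t}.
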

	
%	From \eqref{lambda-i}, a necessary condition for the existence of a $t$-$(n,w,\lambda)$ design is that
%	\begin{align}
%		\left. \binom{w-i}{t-i} \right | \left. \lambda\binom{n-i}{t-i}\right., \ 0\le i\le t;
%	\end{align}
%	while for a $q$-ary $t$-$(n,w,\lambda)$ design the necessary condition becomes by \eqref{q-lambda-i}
%	\begin{align}
%		\left. \binom{w-i}{t-i} \right | \left. \lambda(q-1)^{t-i}\binom{n-i}{t-i}\right., \ 0\le i\le t.
%	\end{align}
	
	A natural connection between $q$-ary designs and classical designs is established by considering the supports of the vectors.

	\begin{Proposition}\cite[Lemma 2.2]{punctured-t}\label{qt->t}
		If $\mathcal{A}\subseteq \mathbb{F}_{q}^{n}$ forms a $q$-ary $t$-$(n,w,\lambda)$ design, then the collection of supports of the vectors in $\mathcal{A}$ forms a $t$-$(n,w,(q-1)^{t}\lambda)$ design (maybe non-simple).
	\end{Proposition}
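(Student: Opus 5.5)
The plan is a double count. Fix an arbitrary $t$-subset $T\subseteq[n]$ and count the blocks of the support collection $\{\mathrm{supp}(\bm{a}) : \bm{a}\in\mathcal{A}\}$, taken with multiplicity (one support per vector of $\mathcal{A}$), whose support contains $T$. The aim is to show that this number equals $(q-1)^{t}\lambda$, independently of $T$. Since every vector of $\mathcal{A}$ has weight $w$, each support is a $w$-subset of $[n]$, so this is exactly the condition for a (possibly non-simple) $t$-$(n,w,(q-1)^t\lambda)$ design.

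The key step is a bijective correspondence. Let $X_T$ be the set of vectors $\bm{x}\in\mathbb{F}_q^n$ with $\mathrm{supp}(\bm{x})=T$; there are exactly $(q-1)^t$ of them, each of weight $t$. Now count the pairs $(\bm{x},\bm{a})$ with $\bm{x}\in X_T$, $\bm{a}\in\mathcal{A}$ and $\bm{a}$ covering $\bm{x}$, in two ways.

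\begin{enumerate}[(i)]
\item Summing over $\bm{x}$: Definition \ref{def-qtdesign} gives exactly $\lambda$ covering blocks for each $\bm{x}$, so the count is $(q-1)^t\lambda$.
\item Summing over $\bm{a}$: if $\bm{a}$ covers some $\bm{x}\in X_T$, then $a_i=x_i\neq 0$ for all $i\in T$, so $T\subseteq\mathrm{supp}(\bm{a})$. Conversely, if $T\subseteq \mathrm{supp}(\bm{a})$, then the unique $\bm{x}\in X_T$ covered by $\bm{a}$ is the restriction of $\bm{a}$ to $T$, padded with zeros elsewhere. It is unique because covering forces $x_i=a_i$ on $T$. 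So each $\bm{a}$ with $T\subseteq\mathrm{supp}(\bm{a})$ contributes exactly one pair, and every other $\bm{a}$ contributes none.
\end{enumerate}

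Hence the number of $\bm{a}\in\mathcal{A}$ whose support contains $T$ equals $(q-1)^t\lambda$, as required.

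There is no serious obstacle. The only point needing care is the multiplicity convention: distinct vectors of $\mathcal{A}$ (for instance scalar multiples) can share a support, so the supports must be counted as a multiset. This is why the statement allows non-simple designs, and it is consistent with the count above, which counts vectors $\bm{a}$ rather than distinct supports.
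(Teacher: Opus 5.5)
Your proposal is correct and follows the paper's proof: fix a $t$-subset, sum over the $(q-1)^{t}$ vectors supported on it, and count the $\lambda$ covering blocks of each. You also make explicit a step the paper leaves implicit, namely that each $\bm{a}$ with $T\subseteq\mathrm{supp}(\bm{a})$ covers exactly one such vector, so no block is counted twice.
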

	\begin{proof}
		Let $S \subseteq \{1,2,\dots,n\}$ be an arbitrary $t$-element subset.
		Consider all vectors $\bm{x} \in \mathbb{F}_{q}^{n}$ with $\mathrm{supp}(\bm{x}) = S$.
		There are exactly $(q-1)^{t}$ such vectors.
		By the definition of a $q$-ary $t$-design, each such $\bm{x}$ is covered by exactly $\lambda$ vectors in $\mathcal{A}$.
		Thus, the number of vectors in $\mathcal{A}$ whose support contains $S$ is exactly $(q-1)^{t}\lambda$.
		Therefore, the collection of supports forms a $t$-$(n,w,(q-1)^{t}\lambda)$ design.
	\end{proof}
	
	\begin{Corollary}\label{2-ary-t-design}
		A binary $t$-$(n,w,\lambda)$ design is equivalent to a classical $t$-$(n,w,\lambda)$ design.
	\end{Corollary}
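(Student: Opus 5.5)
The plan is to argue both directions directly from the definitions. The key fact is that over $\mathbb{F}_2$ the only nonzero element is $1$, so a vector is determined by its support. The map $\bm{x}\mapsto \mathrm{supp}(\bm{x})$ is therefore a bijection between vectors of weight $w$ in $\mathbb{F}_2^n$ and $w$-subsets of $[n]$. Under this bijection, for binary vectors $\bm{a},\bm{b}$, the statement ``$\bm{a}$ covers $\bm{b}$'' means that $a_i=b_i=1$ whenever $b_i\neq 0$. This is exactly the inclusion $\mathrm{supp}(\bm{b})\subseteq \mathrm{supp}(\bm{a})$.

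For the forward direction, let $\mathcal{A}$ be a binary $t$-$(n,w,\lambda)$ design. We apply Proposition \ref{qt->t} with $q=2$. Since $(q-1)^t=1$, the supports of the blocks form a $t$-$(n,w,\lambda)$ design on $[n]$. Because the support map is injective on $\mathbb{F}_2^n$, $\mathcal{A}$ has no repeated supports, so this design is simple.

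For the converse, take a classical $t$-$(n,w,\lambda)$ design $\mathcal{B}$ on $[n]$ and let $\mathcal{A}$ be the set of characteristic vectors of its blocks, each of which has weight $w$. Fix any $\bm{x}\in\mathbb{F}_2^n$ of weight $t$, and let $S=\mathrm{supp}(\bm{x})$. By the covering observation above, the blocks of $\mathcal{A}$ covering $\bm{x}$ correspond exactly to the blocks of $\mathcal{B}$ containing $S$. There are $\lambda$ of these, so $\mathcal{A}$ is a binary $t$-$(n,w,\lambda)$ design.

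The only point needing care is multiplicity. Definition \ref{def-qtdesign} takes $\mathcal{A}$ to be a set, so the equivalence is cleanest when $\mathcal{B}$ is simple. If $\mathcal{B}$ has repeated blocks, the converse still goes through once $\mathcal{A}$ is read as a multiset of characteristic vectors, and the same counting argument applies. With that understanding, the two constructions are mutually inverse, which gives the claimed equivalence.
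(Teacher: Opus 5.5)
Your proof is correct and follows the paper's route: the paper states this as an immediate consequence of Proposition~\ref{qt->t} with $q=2$, where $(q-1)^t=1$. The converse is implicit there, since a binary vector is determined by its support, and your characteristic-vector argument spells it out. Your remark on multiplicities is a useful clarification the paper leaves unstated: the equivalence is exact with simple classical designs when $\mathcal{A}$ is a set, and with non-simple ones only if $\mathcal{A}$ is allowed to be a multiset.
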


	From the perspective of design theory, $q$-ary designs can be identified with a type of group divisible designs.
	\begin{Definition}
		\label{GDD} A \emph{group divisible design} $\GDD$$_{\lambda}(t,w,gn)$ of type $g^{n}$ is a triple $(X,\mathcal{G},\mathcal{A})$ satisfying following conditions:
		\begin{itemize}
			\item [(1)] $X$ is a set of $gn$ elements called points;
			\item [(2)] $\mathcal{G}$ is a partition of $X$ into $n$ subsets (called groups), each of cardinality $g$;
			\item [(3)] $\mathcal{A}$ is a collection of subsets of $X$ called blocks where each $A \in \mathcal{A}$ has cardinality $w$;
			\item [(4)] a group and a block intersect in at most one point;
			\item [(5)] any $t$-subset of $X$ from $t$ distinct groups is contained in exactly $\lambda$ blocks.
		\end{itemize}
	\end{Definition}
	It is obvious that a $\GDD_{\lambda}(t,w,n)$ of type $1^{n}$ is the same as a $t$-$(n,w,\lambda)$ design.
	A fundamental equivalence is clear between $q$-ary designs and group divisible designs, as captured by the following proposition.
	\begin{Proposition}\label{qt->GDD}
		The existence of a $q$-ary $t$-$(n,w,\lambda)$ design is equivalent to the existence of a
		${\GDD_{\lambda}(t,w,(q-1)n)}$ of type $(q-1)^{n}$.
	\end{Proposition}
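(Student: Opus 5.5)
The equivalence will come from a single explicit bijection between the two kinds of objects. Let $X=[n]\times\mathbb{F}_q^{*}$, which has $(q-1)n$ points, and let $\mathcal{G}=\{G_1,\dots,G_n\}$ with $G_i=\{i\}\times\mathbb{F}_q^{*}$. Each group then has cardinality $q-1$. To a vector $\bm{a}\in\mathbb{F}_q^n$ we attach the point set
\[
\phi(\bm{a})=\{(i,a_i): i\in\mathrm{supp}(\bm{a})\}.
\]
This map is a bijection from $\mathbb{F}_q^n$ onto the family of subsets of $X$ that meet every group in at most one point. Its inverse sends such a subset $B$ to the vector whose $i$-th entry is $\alpha$ if $(i,\alpha)\in B$ and $0$ if $B\cap G_i=\emptyset$. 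Moreover $|\phi(\bm{a})|=\mathrm{wt}(\bm{a})$.

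The key observation is that $\phi$ translates covering into containment. Suppose $\bm{a}$ and $\bm{b}$ are vectors. Then $\bm{a}$ covers $\bm{b}$, meaning $a_i=b_i$ for every $i\in\mathrm{supp}(\bm{b})$, exactly when every point $(i,b_i)$ of $\phi(\bm{b})$ lies in $\phi(\bm{a})$, i.e.\ when $\phi(\bm{b})\subseteq\phi(\bm{a})$. In addition, the $t$-subsets of $X$ drawn from $t$ distinct groups are precisely the sets $\phi(\bm{x})$ with $\mathrm{wt}(\bm{x})=t$.

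For the forward direction, start with a $q$-ary $t$-$(n,w,\lambda)$ design $\mathcal{A}$. Put $\mathcal{B}=\{\phi(\bm{a}):\bm{a}\in\mathcal{A}\}$, counted as a multiset if $\mathcal{A}$ has repeats. We check the conditions of Definition \ref{GDD} in turn.
\begin{enumerate}[(1)]
\item The point set $X$ has $(q-1)n$ elements.
\item $\mathcal{G}$ partitions $X$ into $n$ groups of size $q-1$.
\item Every block has cardinality $w$, since $|\phi(\bm{a})|=\mathrm{wt}(\bm{a})=w$.
\item Each block meets each group in at most one point, by the construction of $\phi$.
\item Let $T$ be a $t$-subset of $X$ from $t$ distinct groups, and write $T=\phi(\bm{x})$ with $\mathrm{wt}(\bm{x})=t$. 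The blocks containing $T$ correspond exactly to the vectors of $\mathcal{A}$ that cover $\bm{x}$, and there are exactly $\lambda$ of these.
\end{enumerate}
So $\mathcal{B}$ is a $\GDD_{\lambda}(t,w,(q-1)n)$ of type $(q-1)^n$.

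For the converse, start from a $\GDD_{\lambda}(t,w,(q-1)n)$ of type $(q-1)^n$ on an abstract point set. First identify each of its groups bijectively with $\{i\}\times\mathbb{F}_q^{*}$; this is possible because each group has size $q-1$. Every block meets each group at most once, so it equals $\phi(\bm{a})$ for a unique vector $\bm{a}$ of weight $w$. The same covering correspondence, read in reverse, shows that the resulting collection of vectors is a $q$-ary $t$-$(n,w,\lambda)$ design.

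The argument is entirely definitional and has no real obstacle. The only point needing care is the converse: the group elements must be labelled arbitrarily by nonzero field elements. The construction needs no structure on the groups beyond their cardinality, so any labelling works.
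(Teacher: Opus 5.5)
Your proposal is correct and follows the paper's proof: the same point set $[n]\times\mathbb{F}_q^{*}$, groups $G_i=\{i\}\times\mathbb{F}_q^{*}$, block map $\bm{v}\mapsto\{(i,v_i): v_i\neq 0\}$, and reversal of this map for the converse. You spell out the covering-to-containment correspondence and the arbitrary labelling of the groups in the converse more explicitly than the paper, which simply asserts the conclusion is immediate.
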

%	\begin{Proposition}\label{qt->GDD}
%		There exists a $q$-ary $t$-$(n,w,\lambda)$ design if and only if there exists a
%		${\GDD_{\lambda}(t,w,(q-1)n)}$ of type $(q-1)^{n}$.
%	\end{Proposition}
	\begin{proof}
		Let $\mathcal{A}\subseteq\mathbb{F}_{q}^{n}$ be a $q$-ary $t$-$(n,w,\lambda)$ design.
		Define the point set $X=[n]\times \mathbb{F}_{q}^{*}$, where $\mathbb{F}_{q}^{*}=\mathbb{F}_{q}\setminus\{0\}$. Let $\mathcal{G} = \{G_{i} : i\in [n]\}$, where $G_{i}=\{(i,x) : x\in\mathbb{F}_{q}^{*} \}$.
		For each vector $\bm{v}=(v_{1},\dots,v_{n})\in \mathcal{A}$, construct the block $B_{\bm{v}}=\{(i,v_{i}) : v_{i}\ne 0\}$ and take $\mathcal{B} =\{B_{\bm{v}} : \bm{v}\in\mathcal{A}\}$.
		It is immediate that the resulting structure $(X,\mathcal{G},\mathcal{B})$ is a $\GDD$$_{\lambda}(t,w,(q-1)n)$ of type $(q-1)^{n}$.
		The converse follows by reversing the above construction: each block of a $\GDD_{\lambda}(t,w,(q-1)n)$ of type $(q-1)^{n}$ over $X$ with group set $\mathcal{G}$ corresponds uniquely to a vector in $\mathbb{F}_{q}^{n}$ of weight $w$, producing a $q$-ary $t$-$(n,w,\lambda)$ design.
	\end{proof}
	
	\subsection{Designs from linear codes}
	Let $\mathcal{C}$ be an $[n,k,d]_{q}$ linear code.
	The \emph{weight enumerator} of $\mathcal{C}$ is the polynomial $A(z) = 1+A_{1}z+A_{2}z+\dots+A_{n}z^{n}$, where $A_w$ is the number of codewords of weight $w$.	
	We denote by $\mathcal{A}_{w}(\mathcal{C})$ the set of all codewords of weight $w$ in $\mathcal{C}$ and thus $|\mathcal{A}_{w}(\mathcal{C})|=A_{w} $.
	If the set $\mathcal{A}_{w}(\mathcal{C})$ of codewords of weight $w$ itself forms a $q$-ary $t$-$(n, w, \lambda)$ design, we say that $\mathcal{C}$ holds a $q$-ary $t$-design (at weight $w$).
	Index the codeword coordinates with an $n$-set $\mathcal{P}$. For a weight $w$ with $A_w \neq 0$, let $\mathcal{B}_w(\mathcal{C})$ denote the set of supports of the codewords in $\mathcal{A}_{w}(\mathcal{C})$. If the pair $(\mathcal{P}, \mathcal{B}_w(\mathcal{C}))$ forms a $t$-$(n, w, \lambda)$ design, we say that the code $\mathcal{C}$ supports a $t$-design (at weight $w$).
	Define the following strength parameters:
	\begin{align*}
		T_{w} = \max\{ t :  \mathcal{B}_{w}(\mathcal{C}) \text{ is a } t\text{-design}\}, \quad
		T_{w}^{(q)} = \max\{ t : \mathcal{A}_{w}(\mathcal{C})  \text{ is a $q$-ary } t\text{-design}\}.
	\end{align*}
	From Proposition \ref{qt->t} and Corollary \ref{2-ary-t-design}, it follows that $T_{w}\ge T_{w}^{(q)}$ and $T_{w}= T_{w}^{(2)} $.
	
	Recall a lemma that describes a property of codewords of certain weights in a linear code.
	
	\begin{Lemma}\cite[Lemma 4.25]{Ding} \label{repeat_q-1}
		Let $\mathcal{C}$ be an $[n,k,d]_{q}$ linear code. Let $h$ be the largest integer such that $h\le n$ and $h-\left \lfloor \frac{h+q-2}{q-1} \right \rfloor < d$. Then for any weight $w$ with $d\le w\le h$, every support of a codeword of weight $w$ is shared by exactly $q-1$ codewords in $\mathcal{A}_w(\mathcal{C})$.
	\end{Lemma}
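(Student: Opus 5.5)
The plan is to show two things: any two codewords of weight $w$ with the same support are scalar multiples of each other, and every nonzero scalar multiple of a weight-$w$ codeword is a distinct weight-$w$ codeword with that support. Together these say that each support class in $\mathcal{A}_w(\mathcal{C})$ is exactly $\{\alpha\bm{c}:\alpha\in\mathbb{F}_q^{*}\}$ for any of its members $\bm{c}$, so it has exactly $q-1$ elements.

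The second part is immediate. If $\bm{c}\in\mathcal{A}_w(\mathcal{C})$ and $\alpha\in\mathbb{F}_q^{*}$, then $\alpha\bm{c}\in\mathcal{C}$ by linearity. It has the same support as $\bm{c}$, hence weight $w$, and distinct $\alpha$ give distinct vectors. So each support class has at least $q-1$ members.

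For the first part, I argue by contradiction. Suppose $\bm{c},\bm{c}'\in\mathcal{A}_w(\mathcal{C})$ share the support $S$, with $|S|=w$, but are not proportional. For each $i\in S$ the ratio $c'_i/c_i$ lies in $\mathbb{F}_q^{*}$, so the $w$ positions of $S$ split into at most $q-1$ classes according to this ratio. By pigeonhole, some $\beta\in\mathbb{F}_q^{*}$ is the ratio on at least $\lceil w/(q-1)\rceil$ positions. Consider $\bm{c}'-\beta\bm{c}\in\mathcal{C}$:
\begin{itemize}
\item it is nonzero, because $\bm{c}'\neq\beta\bm{c}$ by non-proportionality;
\item its support lies in $S$ and it vanishes wherever the ratio equals $\beta$.
\end{itemize}
Hence $0<\mathrm{wt}(\bm{c}'-\beta\bm{c})\le w-\lceil w/(q-1)\rceil$.

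Next I use the identity $\lceil w/(q-1)\rceil=\lfloor (w+q-2)/(q-1)\rfloor$. This gives $\mathrm{wt}(\bm{c}'-\beta\bm{c})\le w-\lfloor (w+q-2)/(q-1)\rfloor$. The step that needs care is passing from the hypothesis at $h$ to the same inequality at every $w\le h$. For this I show that $f(x)=x-\lceil x/(q-1)\rceil$ is nondecreasing in $x$. Indeed, when $x$ increases by $1$, $\lceil x/(q-1)\rceil$ increases by at most $1$, so $f(x+1)-f(x)\in\{0,1\}$. Therefore $f(w)\le f(h)<d$ for all $d\le w\le h$. This yields a nonzero codeword of weight less than $d$, a contradiction. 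So any two codewords of weight $w$ with the same support are proportional, and each support is shared by exactly $q-1$ codewords.

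The degenerate case $q=2$ needs no separate treatment. Then $f(x)=0$ and $h=n$, and the conclusion is that each support is shared by exactly one codeword, which holds trivially because in a binary code a vector is determined by its support.
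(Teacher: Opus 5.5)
Your proof is correct. The pigeonhole choice of $\beta$, the bound $0<\mathrm{wt}(\bm{c}'-\beta\bm{c})\le w-\lceil w/(q-1)\rceil$, and the monotonicity of $x-\lceil x/(q-1)\rceil$ (needed to pass from $h$ to every $w\le h$) are exactly the required ingredients. The paper gives no proof of its own here and simply quotes the lemma from Ding's book; your argument is essentially the standard proof of that lemma.
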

	
	The following theorem, known as Assmus-Mattson theorem \cite[Theorem 4.2]{AM1969}, provides a fundamental connection between linear codes and ordinary designs.
	\begin{Theorem}\emph{(Assmus-Mattson)}\label{AM-Theorem}
		Let $\mathcal{C}$ be an $[n,k,d]_{q}$ linear code with dual distance $d^{\perp}$. Define $h$ as the largest integer satisfying $h\le n$ and $h-\left \lfloor \frac{h+q-2}{q-1} \right \rfloor < d$.
		Similarly, define $h^{\perp}$ using $d^{\perp}$. Denote the weight enumerator of $\cal C$ and $\mathcal{C^{\perp}}$ by $\sum_{i=0}^{n}A_{i}z^{i}$ and $\sum_{i=0}^{n}A^{\perp}_{i}z^{i}$, respectively.
		For a positive integer $t$ with $t<d$, let $m$ be the number of $i$ such that $A^{\perp}_{i}\ne 0$ for $1\le i\le n-t$. Let $t = d-m > 0$. Then
		\begin{itemize}
			\item [(1)] for every nonzero weight $w$ of $\cal C$ satisfying $d\le w\le h$, $\mathcal{B}_{w}(\mathcal{C})$ forms a $t$-design;
			\item [(2)] for every nonzero weight $w$ of $\mathcal{C^{\perp}}$ satisfying $d^{\perp}\le w\le \min\{n-t,h^{\perp}\}$,  $\mathcal{B}_{w}(\mathcal{C}^{\perp})$ forms a $t$-design.
		\end{itemize}
	\end{Theorem}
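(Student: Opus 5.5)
The plan is the classical route: first prove the design property for the dual weights $w\le n-t$ from a weight-distribution uniqueness argument, then transfer it to $\mathcal{C}$ by complementation and use Lemma~\ref{repeat_q-1} to recover the supports. Fix a $t$-subset $T\subseteq\mathcal{P}$. Form the shortened code $\mathcal{C}_T$, consisting of codewords of $\mathcal{C}$ vanishing on $T$ with $T$ deleted, and the punctured code $(\mathcal{C}^{\perp})^T$. By the duality relations after \cite[Theorem 1.5.7]{fundamentals}, iterated over the points of $T$, these two codes are duals of each other. Since $t<d^{\perp}$, deleting $T$ is injective on $\mathcal{C}^\perp$, so $(\mathcal{C}^{\perp})^T$ has the same dimension as $\mathcal{C}^{\perp}$. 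Hence $\mathcal{C}_T$ has dimension $k-t$, which is independent of $T$.

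The next step is to show that the weight distribution of $\mathcal{C}_T$ does not depend on $T$. Its minimum weight is at least $d$. The nonzero weights of its dual $(\mathcal{C}^{\perp})^T$ lie in the set $\{i-j : A^{\perp}_i\neq 0,\ 0\le j\le t\}$. Here I will use the standard counting: the at most $m$ weights of $\mathcal{C}^\perp$ up to $n-t$ control the relevant range, together with $t=d-m$. With these, the MacWilliams identities for the pair $\mathcal{C}_T$, $(\mathcal{C}^{\perp})^T$ form a linear system whose unknowns are the $A_i(\mathcal{C}_T)$ for $d\le i\le n-t$, and the number of unknown dual weights does not exceed the number of vanishing weights $1,\dots,d-1$. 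This is the usual Vandermonde-type argument: the first $d-1$ moment equations determine the remaining weight distribution of $(\mathcal{C}^\perp)^T$, and hence of $\mathcal{C}_T$, uniquely from $n$, $k$, $t$, $q$. This is the step I expect to be the main obstacle. One must pin down exactly which weights of $(\mathcal{C}^\perp)^T$ can occur, in particular the weights above $n-t$ that puncturing can push into range, and check that the count matches $d-m=t$. I would first try the Pless power moment formulation.

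Given this, for each $w$ the number of codewords of $\mathcal{C}^\perp$ of weight $w$ whose support avoids $T$ is independent of $T$. For $w\le n-t$ this means the complements of the supports of weight-$w$ codewords of $\mathcal{C}^{\perp}$ form a $t$-design. By inclusion--exclusion, the complement of a $t$-design is again a $t$-design, which proves (2) once we know each support is counted the same number of times. The bound $w\le h^{\perp}$ and Lemma~\ref{repeat_q-1} give exactly this: each support occurs for exactly $q-1$ codewords, so the support multiset is $q-1$ copies of $\mathcal{B}_w(\mathcal{C}^\perp)$, and $\mathcal{B}_w(\mathcal{C}^\perp)$ is itself a $t$-design.

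For (1), we run an induction on $w$ from $d$ upward for codewords of $\mathcal{C}$. By the independence established above, the weight enumerator of $\mathcal{C}_T$ is independent of $T$, so the number of weight-$w$ codewords of $\mathcal{C}$ whose support is disjoint from $T$ is constant. Complementation again shows that the complements of the supports of $\mathcal{A}_w(\mathcal{C})$ form a $t$-design, hence so do the supports themselves. For $d\le w\le h$, Lemma~\ref{repeat_q-1} removes the multiplicity $q-1$, giving that $\mathcal{B}_w(\mathcal{C})$ is a $t$-design.
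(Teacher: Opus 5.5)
The paper gives no proof of this statement; it is quoted from Assmus and Mattson. Measured against the classical argument, your proposal has a genuine gap: you pair the wrong operations. You shorten $\mathcal{C}$ and puncture $\mathcal{C}^{\perp}$ on $T$, but the hypothesis is tailored to the opposite pairing.

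Your dimension count already invokes $t<d^{\perp}$, which is not assumed. What is assumed is $t<d$, and that is precisely what makes \emph{puncturing $\mathcal{C}$} injective.

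The decisive failure is the uniqueness step, exactly where you anticipated trouble. The hypothesis bounds only the number $m$ of weights of $\mathcal{C}^{\perp}$ in $[1,n-t]$. Puncturing on $T$ sends a weight $i$ to any of $i,i-1,\dots,i-t$, and it pushes weights above $n-t$ into range. So nothing bounds the number of weights of $(\mathcal{C}^{\perp})^T$ by $d$, which is all the $d-1$ known zeros $A_1(\mathcal{C}_T)=\dots=A_{d-1}(\mathcal{C}_T)=0$ can handle. For instance, for the binary Golay code with $t=5$, $m=3$, the admissible weights of the punctured dual are a priori $3,\dots,16$ and $19$.

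Running the moments the other way, with the weights of $\mathcal{C}_T$ as unknowns, would need the low coefficients of $(\mathcal{C}^{\perp})^T$ to vanish. That means the number of weights of $\mathcal{C}$ in $[1,n-t]$ must be at most $d^{\perp}-t$, which is the Assmus--Mattson hypothesis for $\mathcal{C}^{\perp}$, not for $\mathcal{C}$.

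Moreover, even a $T$-independent weight distribution of $(\mathcal{C}^{\perp})^T$ would not give you the number of weight-$w$ dual codewords whose support avoids $T$. That number is read off from the \emph{shortened} dual.

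The repair is to swap the roles. Use the punctured code $\mathcal{C}^T$, which is $[n-t,k]$ because $t<d$ and has $A_1=\dots=A_{d-t-1}=0$. Pair it with its dual $(\mathcal{C}^{\perp})_T$, whose nonzero weights lie among the $m$ weights of $\mathcal{C}^{\perp}$ in $[1,n-t]$, since shortening creates no new weights. The hypothesis $m=d-t$ is then exactly what makes the first $m$ power moments an invertible Vandermonde system. Hence both weight distributions are independent of $T$, and the rest of your argument for (2) goes through: complements, plus Lemma~\ref{repeat_q-1} with $w\le h^{\perp}$.

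For (1), however, you now control $\mathcal{C}^T$ rather than $\mathcal{C}_T$, so complementation is unavailable. You need the genuine induction on $w$ that your write-up names but never uses. A codeword of weight $w-t$ in $\mathcal{C}^T$ comes from one of two sources:
\begin{itemize}
\item a weight-$w$ codeword of $\mathcal{C}$ whose support contains $T$; or
\item a codeword of weight $w'$, with $d\le w'<w$, meeting $T$ in exactly $w'-w+t$ points.
\end{itemize}
By induction the supports of weight $w'$ (with multiplicity) form $t$-designs, so the second count is a constant given by block intersection numbers. Hence the number of weight-$w$ codewords whose support contains $T$ is constant. Lemma~\ref{repeat_q-1} with $w\le h$ then turns this into the statement about $\mathcal{B}_w(\mathcal{C})$.
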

	The Assmus-Mattson theorem is a powerful tool for constructing $t$-designs from linear codes.
	This work aims to construct $q$-ary $t$-designs from linear codes.
	An analog of the Assmus-Mattson theorem will be established in the next section. The following is immediate.

	\begin{Proposition}\label{repeat}
		Let $\mathcal{C}$ be a linear code with parameters $[n,k,d]_{q}$, and let $h$ be the largest integer such that $h\le n$ and $h-\left \lfloor \frac{h+q-2}{q-1} \right \rfloor < d$.
		For any weight $w$ with $d\le w\le h$, if $\mathcal{A}_{w}(\mathcal{C})$ forms a $q$-ary $t$-$(n,w,\lambda)$ design, then $\mathcal{B}_{w}(\mathcal{C})$ forms a  $t$-$(n,w,(q-1)^{t-1}\lambda)$ design.
	\end{Proposition}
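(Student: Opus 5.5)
Combining Proposition \ref{qt->t} with Lemma \ref{repeat_q-1} should give the result directly. I will treat $\mathcal{B}_w(\mathcal{C})$ as the set of \emph{distinct} supports of codewords of weight $w$. The multiset of supports, with each codeword contributing its own support, is then $q-1$ copies of $\mathcal{B}_w(\mathcal{C})$.

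\textbf{Step 1: the multiset of supports is a design.} Assume $\mathcal{A}_w(\mathcal{C})$ is a $q$-ary $t$-$(n,w,\lambda)$ design. By Proposition \ref{qt->t}, the multiset $\{\mathrm{supp}(\bm{c}) : \bm{c}\in\mathcal{A}_w(\mathcal{C})\}$ is a $t$-$(n,w,(q-1)^t\lambda)$ design. Concretely, fix a $t$-subset $S$ and sum over the $(q-1)^t$ weight-$t$ vectors with support $S$. Each such vector is covered by exactly $\lambda$ codewords, and a codeword whose support contains $S$ covers exactly one of them. Hence exactly $(q-1)^t\lambda$ codewords of weight $w$ have support containing $S$.

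\textbf{Step 2: remove the multiplicity.} Since $d\le w\le h$, Lemma \ref{repeat_q-1} says every support in $\mathcal{B}_w(\mathcal{C})$ is the support of exactly $q-1$ codewords of weight $w$. These are the scalar multiples $\alpha\bm{c}$ with $\alpha\in\mathbb{F}_q^*$, and no other codeword of weight $w$ has the same support. So the $(q-1)^t\lambda$ codewords counted in Step 1 fall into classes of size exactly $q-1$, one class per distinct block. The number of distinct blocks of $\mathcal{B}_w(\mathcal{C})$ containing $S$ is therefore $(q-1)^t\lambda/(q-1)=(q-1)^{t-1}\lambda$. This number does not depend on $S$, so $\mathcal{B}_w(\mathcal{C})$ is a $t$-$(n,w,(q-1)^{t-1}\lambda)$ design.

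\textbf{Main obstacle.} The argument itself is short. The only delicate point is bookkeeping: making sure $\mathcal{B}_w(\mathcal{C})$ is the set of distinct supports, so that division by $q-1$ is legitimate. That legitimacy is exactly what the hypothesis $w\le h$ supplies through Lemma \ref{repeat_q-1}. If $w>h$, two codewords of weight $w$ that are not scalar multiples of each other could share a support, and the count would fail.
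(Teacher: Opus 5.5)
Your proof is correct and follows essentially the same route as the paper. Proposition \ref{qt->t} makes the multiset of supports a $t$-$(n,w,(q-1)^t\lambda)$ design, and Lemma \ref{repeat_q-1} lets you divide by the uniform multiplicity $q-1$; your explicit point that $\mathcal{B}_w(\mathcal{C})$ must be read as the set of distinct supports is a useful clarification that the paper leaves implicit by only calling the resulting design simple.
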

	\begin{proof}
		Assume that $d\le w\le h$ and that $\mathcal{A}_{w}(\mathcal{C})$ forms a $q$-ary $t$-$(n,w,\lambda)$ design.
		By Lemma \ref{repeat_q-1}, each support appearing in $\mathcal{B}_{w}(\mathcal{C})$ corresponds to exactly $q-1$ codewords of $\mathcal{A}_{w}(\mathcal{C})$.
		Combining this observation with Proposition \ref{qt->t} shows that $\mathcal{B}_{w}(\mathcal{C})$ forms a simple $t$-$(n,w,(q-1)^{t-1}\lambda)$ design.
	\end{proof}
	
	\subsection{Automorphism groups}
	The automorphism groups of linear codes play a significant role in establishing design properties.
	Let $\mathcal{C}$ be a linear code with parameters $[n,k,d]_{q}$. Denote by $\mathcal{P}$ the set of coordinate positions of codewords of $\mathcal{C}$, and let Sym$(\mathcal{P})$ be the symmetric group acting on $\mathcal{P}$. A codeword $\bm{c}\in \mathcal{C}$ can be written as $\bm{c}=(c_{x})_{x\in\mathcal{P}}$.
	The \emph{permutation automorphism group} $\PAut$$(\mathcal{C})$ is the subgroup of Sym$(\mathcal{P})$ that leaves $\mathcal{C}$ invariant. Explicitly, it is the subgroup consisting of all permutations $g\in$ Sym$(\mathcal{P})$ satisfying
	\[
	g(c_{x})_{x\in\mathcal{P}} = (c_{g^{-1}x})_{x\in\mathcal{P}} \in \mathcal{C} \text{ for all } (c_{x})_{x\in\mathcal{P}}\in \mathcal{C}.
	\]
	The \emph{monomial automorphism group} $\MAut$$(\mathcal{C})$ is the subgroup of the semidirect product $(\mathbb{F}_{q}^{*})^{n} \rtimes$ Sym$(\mathcal{P})$ that leaves $\mathcal{C}$ invariant. Specifically, it is the subgroup of Sym($\mathcal{P}$) consisting of all elements $((a_{x})_{x\in\mathcal{P}};g)$ satisfying
	\[
	((a_{x})_{x\in\mathcal{P}};g)(c_{x})_{x\in\mathcal{P}} = (a_{x}c_{g^{-1}x})_{x\in\mathcal{P}}\in \mathcal{C} \text{ for all } (c_{x})_{x\in\mathcal{P}} \in \mathcal{C}.
	\]
	Let Gal$(\mathbb{F}_{q})$ denote the Galois group of $\mathbb{F}_{q}$ over its prime field. The \emph{automorphism  group} $\Aut$$(\mathcal{C})$ is the subgroup of $(\mathbb{F}_{q}^{*})^{n} \rtimes$ (Sym$(\mathcal{P})\times$ Gal $(\mathbb{F}_{q})$) that maps $\mathcal{C}$ onto itself. Precisely, it is the subgroup consisting of all elements $((a_{x})_{x\in\mathcal{P}};g,\gamma)$ satisfying
	\[
	((a_{x})_{x\in\mathcal{P}};g,\gamma)(c_{x})_{x\in\mathcal{P}} = (a_{x}\gamma c_{g^{-1}x})_{x\in\mathcal{P}}\in \mathcal{C} \text{ for all } (c_{x})_{x\in\mathcal{P}}\in \mathcal{C}.
	\]
	The automorphism group $\Aut(\mathcal{C})$ is said to be $t$-\emph{transitive} (respectively, $t$-\emph{homogeneous}) if, for any two ordered $t$-tuples (respectively, $t$-subsets) of coordinate positions, there exists an element $((a_{x})_{x\in\mathcal{P}};g,\gamma)\in$ $\Aut(\mathcal{C})$ such that the permutation component $g$ maps the first tuple  (or set) to the second.
	It follows that if $\Aut(\mathcal{C})$ is $t$-transitive (or $t$-homogeneous), then it is also $i$-transitive (respectively $i$-homogeneous) for every positive integer $i\le t$. A $1$-transitive group is said to be transitive.
	
	Using the automophism group of a linear code, the following theorem gives a sufficient condition for the code to support ordinary designs.
	\begin{Theorem}\cite[Theorem 8.4.7]{fundamentals}\label{transitive-t-design}
		Let $\mathcal{C}\subseteq\mathbb{F}_{q}^{n}$ be a linear code whose automorphism group $\Aut(\mathcal{C})$ is $t$-transitive. Then every nonempty set $\mathcal{B}_{w}(\mathcal{C})$ with $w\ge t$ forms a $t$-design.
	\end{Theorem}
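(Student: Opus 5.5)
The argument is the standard double-counting one. Fix a weight $w\ge t$ with $\mathcal{B}_w(\mathcal{C})$ nonempty; we show that the number of blocks (counted with multiplicity, one per codeword) containing a $t$-subset $T\subseteq\mathcal{P}$ does not depend on $T$. For a $t$-subset $T$, set
\[
N(T)=|\{\bm{c}\in\mathcal{A}_w(\mathcal{C}) : T\subseteq \mathrm{supp}(\bm{c})\}|.
\]
Given two $t$-subsets $T_1,T_2$, order them as tuples. By $t$-transitivity of $\Aut(\mathcal{C})$ there is an element $\sigma=((a_x)_{x\in\mathcal{P}};g,\gamma)\in\Aut(\mathcal{C})$ whose permutation part satisfies $g(T_1)=T_2$. (Only $t$-homogeneity is actually needed.)

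The key step is to check that $\sigma$ maps $\mathcal{A}_w(\mathcal{C})$ bijectively onto itself and transforms supports by $g$. For $\bm{c}\in\mathcal{C}$, the $x$-th coordinate of $\sigma\bm{c}$ is $a_x\gamma(c_{g^{-1}x})$. Since $a_x\neq 0$ and $\gamma$ is a field automorphism fixing $0$, this coordinate is nonzero iff $g^{-1}x\in\mathrm{supp}(\bm{c})$. Hence
\[
\mathrm{supp}(\sigma\bm{c})=g(\mathrm{supp}(\bm{c})).
\]
In particular $\sigma$ preserves weight. Also $\sigma\bm{c}\in\mathcal{C}$ by definition of $\Aut(\mathcal{C})$, and $\sigma$ is injective on $\mathbb{F}_q^n$. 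So $\sigma$ restricts to a bijection of the finite set $\mathcal{A}_w(\mathcal{C})$.

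Therefore $T_1\subseteq\mathrm{supp}(\bm{c})$ iff $T_2=g(T_1)\subseteq\mathrm{supp}(\sigma\bm{c})$, so $\sigma$ is a bijection between the codewords counted in $N(T_1)$ and those counted in $N(T_2)$. Thus $N(T_1)=N(T_2)=:\lambda$, and $\mathcal{B}_w(\mathcal{C})$, taken as a multiset of supports (one per codeword), is a $t$-$(n,w,\lambda)$ design.

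If the design is wanted as the simple set of distinct supports, the same argument applies: $g$ permutes the set $\mathcal{B}_w(\mathcal{C})$, since $\mathrm{supp}(\sigma\bm{c})=g(\mathrm{supp}(\bm{c}))$ and $\sigma$ is onto $\mathcal{A}_w(\mathcal{C})$. So $g$ bijects the distinct blocks containing $T_1$ onto those containing $T_2$. The condition $w\ge t$ only ensures the statement is meaningful; when $\lambda$ could be $0$ it remains a (possibly trivial) design.

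The only step needing care is that the twisting by $\gamma$ and by the scalars $a_x$ does not affect supports. This is immediate, so I expect no real obstacle.
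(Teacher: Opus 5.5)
Your proof is correct and is the standard argument behind this result, which the paper quotes from Huffman--Pless without proof: the permutation part of an automorphism transports supports, so it bijects the weight-$w$ codewords (or the distinct supports) containing $T_1$ onto those containing $T_2$. This is the same mechanism the paper uses in its own proof of Theorem~\ref{transitive-design}, phrased there as two inequalities $\lambda\le\lambda'$ and $\lambda'\le\lambda$ rather than a single bijection.

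One small correction to your closing remark: since $\mathcal{B}_w(\mathcal{C})$ is nonempty and $w\ge t$, some block contains a $t$-subset, so $\lambda\ge 1$ and the caveat about $\lambda=0$ is unnecessary.
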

	\begin{Example}\label{transitive-no-q-design}
		As observed in \cite[Corollary 6.48, Example 6.50]{Ding}, the permutation automorphism group of the Dilix code $\overline{\Omega(3,3,2)}$ is doubly transitive. This code has parameters $[27,8,14]_{3}$ with weight enumerator
		$$1+810z^{14}+702z^{15}+1404z^{17}+780z^{18}+2106z^{20}+702z^{21}+54z^{26}+2z^{27}.$$
		$\mathcal{A}_{14}(\overline{\Omega(3,3,2)})$ does not form a ternary $2$-$(27,14,\lambda)$ design because the value of $\lambda$ obtained from
		$810 = \lambda \cdot 2^{2}{\binom{27}{2}}/{\binom{14}{2}}$
		is not an integer.
	\end{Example}
	Although Example \ref{transitive-no-q-design} shows that the $t$-transitivity of $\Aut(\mathcal{C})$ cannot be directly extended to guarantee $q$-ary $t$-designs, the  transitivity of $\Aut(\mathcal{C})$ is sufficient to produce $q$-ary $1$-designs, as shown in the following.
	\begin{Theorem}
		Let $\mathcal{C}\subseteq\mathbb{F}_{q}^{n}$ be a linear code whose automorphism group $\Aut$$(\mathcal{C})$ is transitive.
		Then any nonempty set $\mathcal{A}_{w}(\mathcal{C})$ forms a $q$-ary $1$-design.
	\end{Theorem}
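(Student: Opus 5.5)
We must show that for each position $i$ and each nonzero $a\in\mathbb{F}_q^*$, the number of codewords of weight $w$ with $c_i=a$ is independent of $(i,a)$. The argument has two steps: show the count does not depend on $a$ for a fixed position, then use transitivity to show it does not depend on the position.

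For the first step, fix a position $i$ and let $N_i(a)=|\{\bm{c}\in\mathcal{A}_w(\mathcal{C}): c_i=a\}|$. Since $\mathcal{C}$ is linear, multiplication by a scalar $\mu\in\mathbb{F}_q^*$ is a bijection of $\mathcal{A}_w(\mathcal{C})$ that preserves weight. It sends codewords with $c_i=a$ to codewords with $c_i=\mu a$. Taking $\mu=b/a$ gives $N_i(a)=N_i(b)$ for all $a,b\in\mathbb{F}_q^*$. Hence $N_i(a)=r_i/(q-1)$, where $r_i$ is the number of codewords of weight $w$ whose support contains $i$. Note that this step needs only linearity; the Galois component of $\Aut(\mathcal{C})$ is never used.

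For the second step, we show $r_i$ is independent of $i$. Take positions $i,j$ and, by transitivity, an element $\sigma=((a_x);g,\gamma)\in\Aut(\mathcal{C})$ with $g(i)=j$. The map $\sigma$ sends $\mathcal{C}$ onto itself and preserves Hamming weight, because $a_x\neq0$ and $\gamma$ is a field automorphism fixing $0$. So $\sigma$ restricts to a bijection of $\mathcal{A}_w(\mathcal{C})$.

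It remains to check how supports transform. The $x$-th entry of $\sigma\bm{c}$ is $a_x\gamma(c_{g^{-1}x})$, which is nonzero exactly when $g^{-1}x\in\mathrm{supp}(\bm{c})$. Therefore $\mathrm{supp}(\sigma\bm{c})=g(\mathrm{supp}(\bm{c}))$. It follows that $\sigma$ maps the codewords of weight $w$ whose support contains $i$ bijectively onto those whose support contains $j$, so $r_i=r_j$.

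Combining the two steps, every weight-$1$ vector $a\bm{e}_i$ is covered by exactly $\lambda=r/(q-1)$ blocks. Equivalently, by double counting, $\lambda=wA_w/(n(q-1))$. Thus $\mathcal{A}_w(\mathcal{C})$ is a $q$-ary $1$-$(n,w,\lambda)$ design. There is no serious obstacle; the only point needing care is the support computation $\mathrm{supp}(\sigma\bm{c})=g(\mathrm{supp}(\bm{c}))$ under the paper's action convention.
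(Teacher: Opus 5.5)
Your proof is correct and is essentially the paper's argument: both combine scalar multiplication by some $\alpha\in\mathbb{F}_q^*$ (from linearity) with an element $h\in\Aut(\mathcal{C})$ moving one position to another, and the paper simply composes these into a single map $\alpha h$ sending a weight-$1$ vector $\bm{x}$ to $\bm{y}$. The only difference is bookkeeping. The paper fixes $\bm{x}$ covered by the maximum number $\lambda$ of weight-$w$ codewords and concludes from the injection $\bm{c}\mapsto\alpha h(\bm{c})$ together with maximality, whereas you exhibit the bijections directly, which also gives the explicit value $\lambda=wA_w/(n(q-1))$.
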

	\begin{proof}
		Among all the vectors of weight 1 in $\mathbb{F}_{q}^{n}$, let $\bm{x}$ be one of such vectors that are covered by the maximum number of codewords of weight $w$. Suppose these distinct codewords are $\bm{c}_{1},\ldots,\bm{c}_{\lambda}$. Let $\bm{y}$ be any other vectors of weight 1 in $\mathbb{F}_{q}^{n}$. Since $\cal C$ is linear and $\Aut$$(\mathcal{C})$ is transitive, there exists an automorphism mapping $h = ((a_{i})_{i\in\mathcal{P}};g,\gamma)\in$ $\Aut$$(\mathcal{C})$ and an element $\alpha$ of $\mathbb{F}_{q}^{*}$ such that $\alpha h(\bm{x}) = \bm{y}$. Then $\alpha h(\bm{c}_{1}),\ldots,\alpha h(\bm{c}_{\lambda})$ are distinct codewords of $\mathcal{C}$ that cover $\bm{y}$. By the maximality of $\lambda$, there are no more codewords that cover $\bm{y}$. Hence the codewords of any weight $w$ of $\mathcal{C}$ form a $q$-ary $1$-$(n,w,\lambda)$ design.
	\end{proof}
	
	We proceed in the next section to establish a general framework for constructing $q$-ary $t$-designs from linear codes.
	For convenience, we summarize some notations and definitions adopted throughout the paper in Table \ref{table-all}.
	
	\begin{table}[ht]
		\caption{Table of definitions and notations}\vspace{0.2cm}	
		\footnotesize
		\centering
		\renewcommand\arraystretch{1.1}
		\begin{tabular}{ccc}
			\toprule
			Notation & Meaning & Remarks  \\
			\midrule
			$\mathbb{F}_{q}$ & Finite field with $q$ elements & Sec.2  \\
			$\mathbb{F}_{q}^{*}$ & $\mathbb{F}_{q}\setminus\{\bm{0}\}$ & Sec.2 \\
			$\mathbb{F}_{q}^{n}$ & $n$-Dimensional vector space over $\mathbb{F}_{q}$ & Sec.2 \\
			$\binom{\mathcal{P}}{k}$ & Set of all $k$-subsets of $\mathcal{P}$ & Sec.2 \\
			$[n]$ & The set $\{1,2,\dots,n\}$ & Sec.2 \\
			$\mathrm{wt}$($\bm{x}$) & Hamming weight of vector $\bm{x}$ & Sec.2  \\
			$\mathrm{supp}(\bm{x})$ & Support of vector $\bm{x}$ & Sec.2 \\
			$W(\cal C)$ & Set of distinct nonzero weights in a code $\cal C$  & Sec.2  \\
			$d=d(\mathcal{C})$ &Minimum distance of $\mathcal{C}$ & Sec.2 \\
			$d^{\perp}=d(\mathcal{C}^{\perp})$ & Dual distance of $\mathcal{C}$ & Sec.2  \\
			$s=s(\mathcal{C})$ & Number of nonzero weights in $\cal C$ & Sec.2 \\
			$s^{\perp}=s(\mathcal{C}^{\perp})$ & External distance of $\mathcal{C}$ & Sec.2 \\
			$e=e(\mathcal{C})$ & Packing radius of $\mathcal{C}$ & Sec.2 \\
			$\rho=\rho(\mathcal{C})$ & Covering radius of $\mathcal{C}$ & Sec.2 \\
			$\mathcal{C}^{\{m\}}$ & Punctured code of $\mathcal{C}$ on the $m$-th coordinate  & Sec.2  \\
			$\mathcal{C}_{\{m\}}$ & Shortened code of $\mathcal{C}$ on the $m$-th coordinate  & Sec.2  \\
			$A_{w}$ & Number of codewords of weight $w$ in $\cal C$ & Sec.2 \\
			$\mathcal{A}_{w}(\mathcal{C})$ & Set of all codewords  of weight $w$ in $\cal C$ & Sec.2 \\
			$\mathcal{B}_{w}(\mathcal{C})$ & Set of supports of all codewords in $\mathcal{A}_{w}(\mathcal{C})$ & Sec.2 \\
			$B_{\bm{x},i}$ & Number of codewords at distance $i$ from $\bm{x}$ & Sec.3 \\
			$B_{\bm{x},i}(w)$ & Number of weight-$w$ codewords in $B_{\bm{x},i}$ & Sec.3 \\
			$\sigma_{l}$ & Elementary symmetric polynomial of degree $l$ & Sec.5 \\
			$U$ & Subgroup of $\mathbb{F}_{q^{2}}^{*}$ comprised of all $(q+1)$-th roots of unity & Sec.5 \\		
			\bottomrule
		\end{tabular}
		\label{table-all}
	\end{table}

	\section{General framework of constructing $q$-ary designs from linear codes}
	\setcounter{equation}{0}
	This section establishes two principal criteria for obtaining $q$-ary $t$-designs from linear codes: a Standard Criterion and a Puncturing-Shortening Criterion. Additionally, we establish an equivalence between $t$-regular codes and $q$-ary $t$-designs under certain conditions.
	
	\subsection{Standard Criterion}
	
	Delsarte \cite{dsd's'} introduced the concept of a $q$-ary $t$-design and showed that a $q$-ary linear code can yield such designs when its dual distance $d^{\perp}$ and the number $s$ of its nonzero weights satisfy $d^{\perp} > s$.
	In \cite{Delsarte1973}, regularity properties of a code were linked to its minimum distance $d$ and external distance $s^{\perp}$.
	Later, Goethals and van Tilborg \cite{Tilborg1975,UP} noted that a $t$-regular code produces a $q$-ary $t$-design when $d \ge 2t$. Building on these works, we develop these ideas into a Standard Criterion for identifying $q$-ary designs from linear codes.
	
	\begin{Lemma}\label{d'-s}\cite[Theorem 5.3]{dsd's'}
		Let $\mathcal{C}$ be a $q$-ary code with dual distance $d^{\perp}$ and let $s$ be the number of its nonzero weights. If $t = d^{\perp}-s > 0$, then the codewords of any fixed nonzero weight in $\mathcal{C}$ form a $q$-ary $t$-design.
	\end{Lemma}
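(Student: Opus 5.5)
The plan is to reduce the $q$-ary covering condition to a counting statement about codewords restricted to $t$-subsets of coordinates, and then use the fact that $\mathcal{C}$ is an orthogonal array of strength $d^{\perp}-1$. I treat $\mathcal{C}$ as linear, which is the setting of this paper. Fix a weight $w\in W(\mathcal{C})$ and a vector $\bm{x}$ of weight $t$ with support $T$. For any codeword $\bm{c}$ and any $\bm{y}$ with $\mathrm{supp}(\bm{y})\subseteq T$, set
\[
N_j(\bm{y})=\#\{\bm{c}\in\mathcal{C}:\ \mathrm{wt}(\bm{c})=j,\ \bm{c}|_T=\bm{y}|_T\}.
\]
The number of weight-$w$ codewords covering $\bm{x}$ is $N_w(\bm{x})$. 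So the goal is to show that $N_w(\bm{x})$ depends only on $t$ and $w$, and not on $T$ or on the nonzero entries of $\bm{x}$.

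The first step is to obtain linear equations for the unknowns $N_j(\bm{x})$, $j\in W(\mathcal{C})\cup\{0\}$. For each $i$ with $0\le i\le s-1$ and each $i$-subset $S\subseteq[n]\setminus T$, I count the codewords $\bm{c}$ with $\bm{c}|_T=\bm{x}|_T$ and $\bm{c}|_S=\bm{0}$. The restriction to $T\cup S$ has size $t+i\le t+s-1=d^{\perp}-1$. Since the dual distance is $d^{\perp}$, $\mathcal{C}$ is an orthogonal array of strength $d^{\perp}-1$, so every pattern on $t+i$ coordinates occurs exactly $|\mathcal{C}|/q^{t+i}$ times.

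Next I sum over all $S$ and double count. A codeword $\bm{c}$ of weight $j$ agreeing with $\bm{x}$ on $T$ has exactly $w_T:=j-\mathrm{wt}(\bm{c}|_T)$ nonzeros outside $T$, hence $n-t-(j-t)=n-j$ zeros outside $T$. Note that $\bm{c}|_T=\bm{x}|_T$ has weight exactly $t$. This gives
\[
\sum_{j} \binom{n-j}{i} N_j(\bm{x}) = \binom{n-t}{i}\frac{|\mathcal{C}|}{q^{t+i}},\qquad i=0,1,\dots,s-1.
\]
Here $j$ ranges over the nonzero weights, since the zero word does not agree with $\bm{x}$ on $T$.

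Hence there are $s$ unknowns $N_j(\bm{x})$, $j\in W(\mathcal{C})$, and $s$ equations whose right-hand sides are independent of $\bm{x}$. The coefficient matrix $\bigl(\binom{n-j}{i}\bigr)_{i,j}$ is, after a triangular change of basis from $\binom{m}{i}$ to $m^i$, a Vandermonde matrix in the distinct values $n-j$. It is therefore nonsingular, and each $N_j(\bm{x})$, in particular $N_w(\bm{x})$, is uniquely determined by $n,t,s,|\mathcal{C}|$ and the weight set. The result is a constant $\lambda$, which proves that $\mathcal{A}_w(\mathcal{C})$ is a $q$-ary $t$-$(n,w,\lambda)$ design.

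The main point needing care is the orthogonal-array step: that dual distance $d^{\perp}$ implies every set of at most $d^{\perp}-1$ columns of a generator matrix is linearly independent. The restriction map to such a coordinate set is then surjective, with fibers of equal size $|\mathcal{C}|/q^{t+i}$. This is standard and follows because a dependency among those columns would give a nonzero dual codeword of weight less than $d^{\perp}$. The remaining concern is bookkeeping: the count must use $\bm{c}|_T=\bm{x}|_T$ exactly, not covering in a weaker sense, and it only requires $t+s-1\le d^{\perp}-1$, which is precisely the hypothesis $t=d^{\perp}-s$.
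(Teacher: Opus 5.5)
Your proof is correct. The paper gives no proof of this lemma; it simply imports it from Delsarte's Theorem 5.3, where the result sits inside his algebraic framework. There the dual distance is defined through the MacWilliams transform of the distance distribution, so the statement covers arbitrary, not necessarily linear, codes.

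Your route is self-contained and elementary. The orthogonal-array property of strength $d^{\perp}-1$ gives, for $0\le i\le s-1$, the identities $\sum_{j\in W(\mathcal{C})}\binom{n-j}{i}N_j(\bm{x})=\binom{n-t}{i}|\mathcal{C}|/q^{t+i}$. Since the $s\times s$ matrix $\bigl(\binom{n-j}{i}\bigr)$ is nonsingular (Vandermonde in the distinct values $n-j$, up to a triangular change of basis), each $N_j(\bm{x})$ is independent of $\bm{x}$. This is the same mechanism as the classical Assmus--Mattson proof. The only change is that you prescribe the values of $\bm{c}$ on $T$ rather than merely requiring $T\subseteq\mathrm{supp}(\bm{c})$, and that is exactly what turns an ordinary design into a $q$-ary one.

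Your version makes the dependence on the hypothesis transparent: the system needs strength exactly $t+s-1=d^{\perp}-1$. It also gives an explicit recipe for $\lambda$: solve the linear system, which depends only on $n,t,q,|\mathcal{C}|$ and $W(\mathcal{C})$. What you give up is generality, since you assume linearity. That costs nothing here, because the paper defines $d^{\perp}$ only for linear codes and applies the lemma only to them. Moreover, your counting uses linearity only to obtain the orthogonal-array property. Delsarte showed that property holds for any code whose dual distance, in his sense, is $d^{\perp}$, so the argument goes through unchanged for unrestricted codes.
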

	
	For a code $\mathcal{C} \subseteq \mathbb{F}_{q}^{n}$ and any vector $\bm{x} \in \mathbb{F}_{q}^{n} $, let $B_{\bm{x},i}$ denote the number of codewords at distance $i$ from $\bm{x}$. The \emph{outer distribution matrix} of $\mathcal{C}$ is a $q^{n} \times (n+1)$ matrix $B$ with entries
	\begin{equation}\label{Bxi}
		B_{\bm{x},i} = |\{\bm{v} \in \mathcal{C} : d(\bm{x},\bm{v})=i \}|.
	\end{equation}
	We further denote by $B_{\bm{x},i}(w)$ the number of codewords of weight $w$ at distance $i$ from $\bm{x}$.
	
	\begin{Definition}
		Let $0\le t\le \rho$. The code $\mathcal{C}$ with covering radius $\rho$ is \emph{$t$-regular} if for all $i =0,1,\dots,\rho$ the value $B_{\bm{x},i}$ depends only on $i$ and $d(\bm{x},\mathcal{C})$ for all $\bm{x}$ with $d(\bm{x},\mathcal{C}) \le t$. The code is \emph{completely regular} if it is $\rho$-regular.
	\end{Definition}
	
	%d-s regular
	\begin{Lemma}\cite[Theorems 5.11-5.13]{Delsarte1973}\label{d-s regular}
		Let $\mathcal{C}$ be a $q$-ary code with minimum distance $d$ and external distance $s^{\perp}$.
		\begin{itemize}
			\item [(1)] If $s^{\perp} \le d \le 2s^{\perp}-2$, the code $\mathcal{C}$ is $(d-s^{\perp})$-regular;
			\item [(2)] If $d$ takes one of the values $2s^{\perp}-1$, $2s^{\perp}$ or $2s^{\perp}+1$, then $\mathcal{C}$ is completely regular with covering radius $\rho = s^{\perp}$.
		\end{itemize}
	\end{Lemma}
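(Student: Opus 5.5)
The plan is to treat each row of the outer distribution matrix as the solution of a linear system whose data depend only on $d(\bm{x},\mathcal{C})$. I give the argument for linear $\mathcal{C}$, which is the case used in this paper; for general codes the same argument runs with the dual inner distribution in place of $\mathcal{C}^{\perp}$.

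\textbf{Step 1: the MacWilliams transform of a row.} Let $K_k(i)$ denote the $q$-ary Krawtchouk polynomials, and fix $\bm{x}\in\mathbb{F}_q^n$. A character-sum computation gives
\[
\sum_{i=0}^{n} B_{\bm{x},i}K_k(i)=|\mathcal{C}|\sum_{\bm{u}\in\mathcal{C}^{\perp},\,\mathrm{wt}(\bm{u})=k}\chi(\langle \bm{u},\bm{x}\rangle).
\]
The right-hand side vanishes unless $k\in S:=\{0\}\cup W(\mathcal{C}^{\perp})$. Invert the transform, writing $b_k(\bm{x})$ for the right-hand side. This gives
\[
B_{\bm{x},i}=q^{-n}\sum_{k\in S} b_k(\bm{x})K_i(k).
\]
Moreover $b_0(\bm{x})=|\mathcal{C}|$ for every $\bm{x}$. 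So each row is determined by the $s^{\perp}$ unknowns $b_k(\bm{x})$ with $k\in W(\mathcal{C}^{\perp})$.

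\textbf{Step 2: known entries when $d(\bm{x},\mathcal{C})=j$.} Suppose $d(\bm{x},\mathcal{C})=j$.
\begin{itemize}
\item For every $j$ we have $B_{\bm{x},i}=0$ for $i<j$.
\item If in addition $j\le e$, the nearest codeword is unique, so $B_{\bm{x},j}=1$.
\item If $j\le e$, the triangle inequality also gives $B_{\bm{x},i}=0$ for $j<i<d-j$.
\end{itemize}
Thus when $j\le e$ the entries $B_{\bm{x},i}$ for $0\le i\le d-j-1$ are known and depend only on $j$. When $j>e$ we still know the $j$ entries with $0\le i\le j-1$.

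\textbf{Step 3: unique solvability.} Suppose the entries with indices $0,\dots,s^{\perp}-1$ are known. The coefficient matrix $\big(K_i(k)\big)$, with $0\le i\le s^{\perp}-1$ and $k\in W(\mathcal{C}^{\perp})$, is nonsingular. Indeed, $K_i$ has degree exactly $i$, so these polynomials span all polynomials of degree $<s^{\perp}$. Evaluating at the $s^{\perp}$ distinct points of $W(\mathcal{C}^{\perp})$ is therefore equivalent to a Vandermonde system.

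Hence the unknowns $b_k(\bm{x})$, and so the whole row $B_{\bm{x},\cdot}$, are determined by $j$ alone. This works whenever one of the following holds:
\begin{itemize}
\item $j\le e$ and $d-j\ge s^{\perp}$;
\item $j\ge s^{\perp}$, with no further condition.
\end{itemize}

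\textbf{Step 4: the two cases.}
\begin{itemize}
\item[(1)] Let $s^{\perp}\le d\le 2s^{\perp}-2$ and put $t=d-s^{\perp}$. Then $t\le s^{\perp}-2<(d-1)/2+1$, so $t\le e$, and $t\le\rho$ by \eqref{e<rho<s'}. Every $j\le t$ satisfies both $j\le e$ and $d-j\ge s^{\perp}$, so $\mathcal{C}$ is $t$-regular.
\item[(2)] Let $d\in\{2s^{\perp}-1,2s^{\perp},2s^{\perp}+1\}$. Then $e\ge s^{\perp}-1$ and $d-(s^{\perp}-1)\ge s^{\perp}$, so every $j\le s^{\perp}-1$ is covered by the first condition. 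The value $j=s^{\perp}$ is covered by the second condition. Since $\rho\le s^{\perp}$ by \eqref{e<rho<s'}, every possible distance $j\le\rho$ is handled, and $\mathcal{C}$ is completely regular.
\end{itemize}

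\textbf{Main obstacle: showing $\rho=s^{\perp}$ in case (2).} If $\rho<s^{\perp}$, then $e\le\rho\le s^{\perp}-1\le e$ forces $e=\rho=s^{\perp}-1$, so $\mathcal{C}$ is perfect. For a perfect code, Lloyd's theorem (equivalently, Delsarte's results that a perfect code has $s^{\perp}=e$) gives $s^{\perp}=e=s^{\perp}-1$, a contradiction. I would invoke that known fact here rather than reprove it.
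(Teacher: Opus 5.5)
The paper gives no proof of this lemma; it quotes it from Delsarte. Your argument is essentially Delsarte's. Each row of the outer distribution matrix is a combination of the $s^{\perp}+1$ Krawtchouk rows indexed by $\{0\}\cup W(\mathcal{C}^{\perp})$, so it is fixed by its first $s^{\perp}$ entries. For $\bm{x}$ close to $\mathcal{C}$, those entries are forced by $d$ and $d(\bm{x},\mathcal{C})$.

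The proof is correct apart from one slip in Step 4(1). The chain $t\le s^{\perp}-2<(d-1)/2+1$ does not follow, because $d\le 2s^{\perp}-2$ bounds $s^{\perp}$ from below, not from above. For instance, $d=s^{\perp}=10$ satisfies the hypothesis, yet $s^{\perp}-2=8>(d+1)/2$. The correct derivation is $2t=2d-2s^{\perp}\le d-2$. Hence $t\le (d-2)/2$, and since $t$ is an integer, $t\le\lfloor (d-1)/2\rfloor=e$.

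For $\rho=s^{\perp}$ in part (2), citing Lloyd's theorem is legitimate, but you can stay inside your own framework. The functions $\bm{x}\mapsto b_k(\bm{x})$ for $k\in\{0\}\cup W(\mathcal{C}^{\perp})$ are nonzero and pairwise orthogonal on $\mathbb{F}_q^n$, being sums of characters over disjoint nonempty subsets of $\mathcal{C}^{\perp}$. The corresponding Krawtchouk rows are linearly independent. Hence the matrix $(B_{\bm{x},i})$ has rank exactly $s^{\perp}+1$. Once you have shown that each row depends only on $d(\bm{x},\mathcal{C})\in\{0,\dots,\rho\}$, the matrix has at most $\rho+1$ distinct rows, so $\rho\ge s^{\perp}$. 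Together with $\rho\le s^{\perp}$ this gives equality without the detour through perfect codes.
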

	%	In particular, every perfect code is completely regular.
	The significance of $t$-regularity lies in its powerful combinatorial consequence as follows.
	%t-regular  -->  t-design
	\begin{Lemma}
		\label{d-s'}\cite[Theorem 2.4.7]{UP} Let $\mathcal{C}$ be a $t$-regular code with minimum distance $d \ge 2t$. Then every nonempty set $\mathcal{A}_{w}(\mathcal{C})$ is a $q$-ary $t$-design.
	\end{Lemma}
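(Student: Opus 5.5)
The plan is to fix a weight $w$ with $\mathcal{A}_w(\mathcal{C})\neq\emptyset$ and show that, for every $\bm{x}\in\mathbb{F}_q^n$ of weight $t$, the number of codewords of weight $w$ covering $\bm{x}$ is a constant depending only on $w$ and $t$. Since the definition of $t$-regularity only involves $B_{\bm{x},i}$ for vectors within distance $t$ of $\mathcal{C}$, the first step is to locate every weight-$t$ vector there. This is immediate: $d(\bm{x},\bm{0})=t$, so $d(\bm{x},\mathcal{C})\le t$. We will also need the distance to $\mathcal{C}$ exactly. For any nonzero codeword $\bm{c}$ we have $d(\bm{x},\bm{c})\ge \mathrm{wt}(\bm{c})-t\ge d-t\ge t$, so $d(\bm{x},\mathcal{C})=t$ for all such $\bm{x}$.

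Second, we translate "covering" into a distance condition. A codeword $\bm{c}$ of weight $w$ covers $\bm{x}$ exactly when $d(\bm{x},\bm{c})=w-t$. In general $d(\bm{x},\bm{c})\ge w-t$, with equality iff $\mathrm{supp}(\bm{x})\subseteq\mathrm{supp}(\bm{c})$ and $c_i=x_i$ on $\mathrm{supp}(\bm{x})$. Hence the quantity to control is $B_{\bm{x},w-t}(w)$, and the task reduces to showing it is independent of $\bm{x}$.

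Third, we deduce $B_{\bm{x},i}(w)$ from the totals $B_{\bm{x},i}$, which $t$-regularity fixes. Triangle-type inequalities give that a codeword of weight $u$ at distance $i$ from $\bm{x}$ satisfies $|u-t|\le i\le u+t$. The distance-$(w-t)$ codewords therefore have weights in $[w-2t,\,w]$; at the lower end $u=w-2t$ with $d(\bm{x},\bm{c})=w-t=u+t$, meaning the supports are disjoint. So $B_{\bm{x},w-t}$ mixes several weights, and this is the main obstacle. I would handle it by induction on $w$, counting the codewords of weight $u$ at each distance $i$ and using the constraint $d\ge 2t$ to separate weights. The cleaner route is a generating-function argument. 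Write
\[
B_{\bm{x},i}=\sum_{u}B_{\bm{x},i}(u),
\]
and use that $t$-regularity with $d(\bm{x},\mathcal{C})=t$ fixes these sums for all $i$. I would then argue inductively over $j\le t$, applying regularity to vectors $\bm{y}$ with $\mathrm{supp}(\bm{y})\subseteq\mathrm{supp}(\bm{x})$ of weight $j$, which by the first step also have $d(\bm{y},\mathcal{C})=j$. Inclusion–exclusion over sub-vectors should then express $B_{\bm{x},i}(u)$ through quantities that depend only on $j$, $i$ and $u$.

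Alternatively, if this inductive bookkeeping becomes messy, I would fall back on the Delsarte–van Tilborg viewpoint. There, $t$-regularity with $d\ge 2t$ is equivalent to the weight distribution of each coset of minimum weight $j\le t$ being determined by $j$. One then applies the weight-$t$ coset $\bm{x}+\mathcal{C}$ together with a Möbius inversion over the $2^t$ restrictions of $\bm{x}$. The key fact is that the numbers $|\{\bm{c}\in\mathcal{A}_w:\bm{c}|_S=\bm{x}|_S\text{ and }c_i\neq x_i\text{ for } i\notin S\}|$ for $S\subseteq\mathrm{supp}(\bm{x})$ are determined by the coset weight distributions of the vectors $\bm{x}|_S$. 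This yields the covering count as a constant $\lambda$, completing the proof.
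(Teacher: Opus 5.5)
Your Steps 1 and 2 are correct. Every weight-$t$ vector $\bm{x}$ satisfies $d(\bm{x},\mathcal{C})=t$, and a weight-$w$ codeword covers $\bm{x}$ exactly when it lies at distance $w-t$ from it. You also correctly see that $B_{\bm{x},w-t}$ mixes codewords of weights in $[w-2t,w]$. But the step that actually proves the lemma is never supplied. Your first instinct, induction on $w$, is the right one, yet you do not say what makes the induction close, and the routes you switch to do not close it.

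Applying regularity to sub-vectors $\bm{x}|_R$ only produces more numbers $B_{\bm{x}|_R,i}$, and each of these is again a sum over several weights. Indeed, a weight-$u$ codeword that agrees with $\bm{x}$ on $X$, vanishes on $Z$, and is nonzero but different on the rest of $\mathrm{supp}(\bm{x})$ lies at distance $u+|R\cap Z|-|R\cap X|$ from $\bm{x}|_R$. So inclusion--exclusion or M\"obius inversion over $R$ does not by itself isolate a single weight. Your ``key fact'' in the last paragraph is asserted without proof, and for $S=\mathrm{supp}(\bm{x})$ it is literally the statement that the covering number is constant. Finally, $d\ge 2t$ does not ``separate weights''; its only role is your Step 1.

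The missing idea is that the induction hypothesis itself controls the lower-weight terms, via Lemma~\ref{l-xyz}. Since $d(\bm{x},\bm{c})\ge \mathrm{wt}(\bm{c})-t$, with equality iff $\bm{c}$ covers $\bm{x}$, the number of weight-$w$ codewords covering $\bm{x}$ is $B_{\bm{x},w-t}-\sum_{u<w}B_{\bm{x},w-t}(u)$. The first term is independent of $\bm{x}$ by $t$-regularity and Step 1. For $u=0$ the summand is $1$ or $0$ according as $w=2t$ or not.

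For $u\ge d$, assume inductively that $\mathcal{A}_u(\mathcal{C})$ is a $q$-ary $t$-design. Let $X,Y,Z$ partition $\mathrm{supp}(\bm{x})$ with sizes $a,b,c$. Lemma~\ref{l-xyz} then says that the number of members of $\mathcal{A}_u(\mathcal{C})$ that agree with $\bm{x}$ on $X$, are nonzero but different on $Y$, and vanish on $Z$ is a constant $\lambda^{(u)}_{a,b,c}$. Such a codeword is at distance $u+c-a$ from $\bm{x}$, so
\[
B_{\bm{x},w-t}(u)=\sum_{\substack{a+b+c=t\\ u+c-a=w-t}}\binom{t}{a}\binom{t-a}{b}\,\lambda^{(u)}_{a,b,c}
\]
is independent of $\bm{x}$, and the induction closes.

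The paper only cites van Tilborg for this lemma, but this is exactly the argument it runs in the proof of Theorem~\ref{perfect-design}. It needs regularity only for vectors of weight exactly $t$, so the sub-vector bookkeeping you propose is unnecessary.
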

	Observe that if $t = d-s^{\perp}$ in Lemma \ref{d-s'}, the condition $d\ge2t$ becomes redundant.
	\begin{Corollary} \label{d-s'*}
		Let $\mathcal{C}$ be a $q$-ary linear code with minimum distance $d$ and external distance $s^{\perp}$.
		If $d>s^{\perp}$, then every nonempty set $\mathcal{A}_{w}(\mathcal{C})$ is a $q$-ary $(d-s^{\perp})$ design.
	\end{Corollary}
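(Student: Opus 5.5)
Set $t=d-s^{\perp}$, which is positive by hypothesis. The plan is to combine Lemma \ref{d-s regular} with Lemma \ref{d-s'}. First I will show that $\mathcal{C}$ is $t$-regular, and then that $d\ge 2t$. Lemma \ref{d-s'} will then give that every nonempty $\mathcal{A}_w(\mathcal{C})$ is a $q$-ary $t$-design.

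\emph{The inequality $d\ge 2t$.} This reads $d\ge 2d-2s^{\perp}$, i.e.\ $d\le 2s^{\perp}$. So the regime $d\le 2s^{\perp}$ is handled directly. When $d\ge 2s^{\perp}+1$ we have $2t>d$, and a substitute for Lemma \ref{d-s'} is needed there.

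\emph{The regime $s^{\perp}<d\le 2s^{\perp}+1$.} If $d\le 2s^{\perp}-2$, Lemma \ref{d-s regular}(1) gives that $\mathcal{C}$ is $(d-s^{\perp})$-regular, and we have $d\ge 2t$, so Lemma \ref{d-s'} applies. If $d\in\{2s^{\perp}-1,2s^{\perp}\}$, Lemma \ref{d-s regular}(2) makes $\mathcal{C}$ completely regular with $\rho=s^{\perp}$. Here $t=d-s^{\perp}\le s^{\perp}=\rho$, so $\mathcal{C}$ is in particular $t$-regular, and again $d\ge 2t$. If $d=2s^{\perp}+1$, then $e=s^{\perp}$, and by \eqref{e<rho<s'} we get $\rho=e$, so $\mathcal{C}$ is perfect. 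Now $t=s^{\perp}+1$ exceeds $\rho$ and Lemma \ref{d-s'} does not literally apply. Instead I would argue directly. For $\bm{x}$ of weight $t$, let $\bm{x}'$ be $\bm{x}$ with one nonzero coordinate deleted, so $\bm{x}'$ has weight $e$. By perfectness, every vector $\bm{y}$ of weight $e+1$ that covers $\bm{x}'$ lies within distance $e$ of a unique codeword. Double counting then yields constant covering numbers. This is the standard argument that the minimum-weight codewords of a perfect code form a $q$-ary $(e+1)$-design. For larger weights, the outer distribution of a perfect code is determined, since it is completely regular, which handles the remaining weights.

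\emph{The regime $d>2s^{\perp}+1$.} This contradicts $d\le 2\rho+1\le 2s^{\perp}+1$. The bound $d\le 2\rho+1$ holds because the balls of radius $\lfloor (d-1)/2\rfloor$ around codewords are disjoint, and \eqref{e<rho<s'} gives $e\le\rho$. So this case cannot occur.

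\textbf{Main obstacle.} The delicate point is the boundary case $d=2s^{\perp}+1$, where $t>\rho$ and regularity alone is not formally enough. I would first try to invoke the perfect-code argument above. If that becomes messy for general weights $w$, I would fall back on Delsarte's Lemma \ref{d'-s} applied with the roles of $\mathcal{C}$ and $\mathcal{C}^\perp$ interchanged via the MacWilliams transform, as an alternative way to certify the design property.
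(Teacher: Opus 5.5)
Your proposal is correct and is essentially the paper's proof. The paper splits on $s^{\perp}=e$ versus $s^{\perp}>e$, which is exactly your split $d=2s^{\perp}+1$ versus $d\le 2s^{\perp}$. It handles the perfect case by the known fact that every weight class of a perfect code is a $q$-ary $(e+1)$-design (cf.\ Theorems \ref{perfect-q(e+1)} and \ref{perfect-design}), and the other case by Lemma \ref{d-s regular} together with Lemma \ref{d-s'}, using $d\ge 2t$.

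One small fix is needed in excluding $d\ge 2s^{\perp}+2$. Disjointness of the radius-$e$ balls plus $e\le\rho$ only gives $d\le 2\rho+2$. To rule out $d=2s^{\perp}+2$ you must add that this would force $e=\rho=s^{\perp}$, i.e.\ a perfect code of even minimum distance, which is impossible. Equivalently, $\rho\ge\lfloor d/2\rfloor$, since a vector at distance $\lfloor d/2\rfloor$ from a codeword is at that distance from $\mathcal{C}$.

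Also, your fallback of applying Lemma \ref{d'-s} with the roles of $\mathcal{C}$ and $\mathcal{C}^{\perp}$ swapped would produce designs in $\mathcal{C}^{\perp}$, not in $\mathcal{C}$. So it could not replace the perfect-code argument.
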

	\begin{proof}
		Let $\mathcal{C}$ be an $[n,k,d]_{q}$ code with packing radius $e=\lfloor\frac{d-1}{2}\rfloor$. It is known that $s^{\perp}\ge \lfloor\frac{d-1}{2}\rfloor $, see \eqref{e<rho<s'}.
		If $s^{\perp}=\lfloor\frac{d-1}{2}\rfloor$, then  $\mathcal{C}$ is a perfect code and it has odd minimum distance, see, for example, \cite[Section 10.4]{Ding}. Then $d = 2s^{\perp}+1$ and consequently $e=\rho=s^{\perp}$.
		Applying \cite[Theorem 2]{CR}, in a perfect code, every nonempty set $\mathcal{A}_{w}(\mathcal{C})$ forms a $q$-ary $(e+1)$ design, i.e., a $q$-ary $(d-s^{\perp})$ design, see also Theorem \ref{perfect-q(e+1)} herein.
		If $s^{\perp}>\lfloor\frac{d-1}{2}\rfloor $, then $d\ge2(d-s^{\perp})$.
		In this case, $\cal C$ is $(d-s^{\perp})$-regular by Lemma \ref{d-s regular}. Combining with Lemma \ref{d-s'} yields that each nonempty set $\mathcal{A}_{w}(\mathcal{C})$ is a $q$-ary $(d-s^{\perp})$ design.
	\end{proof}
	
	We combine Lemma \ref{d'-s} with Corollary \ref{d-s'*} to generate a main approach of identifying $q$-ary designs from linear codes, which could be viewed as an analog of the Assmus-Mattson theorem.
	\begin{Theorem}\label{Standard-Criterion}
		\emph{(Standard Criterion)} Let $\mathcal{C}$ be a $q$-ary linear code with minimum distance $d$, $s$ nonzero weights,  dual distance $d^{\perp}$ and external distance $s^{\perp}$.
		Assume that $d>s^{\perp}$ or $d^{\perp}>s$. Define $t = \max\{d-s^{\perp}, d^{\perp}-s\}$.
		Then the following hold:
		\begin{enumerate}
			\item [(1)] For any nonzero weight $w$ of $\cal C$, every nonempty set $\mathcal{A}_{w} (\cal C)$ forms a $q$-ary $t$-design.
			\item [(2)] For any nonzero weight $w$ of $\mathcal{C^{\perp}}$, every nonempty set $\mathcal{A}_{w} (\mathcal{C^{\perp}})$ forms a $q$-ary $t$-design.
		\end{enumerate}
	\end{Theorem}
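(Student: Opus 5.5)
The plan is to apply Lemma \ref{d'-s} and Corollary \ref{d-s'*} to $\mathcal{C}$ and to $\mathcal{C}^{\perp}$, using the duality of the four fundamental parameters. The two quantities $d-s^{\perp}$ and $d^{\perp}-s$ trade places when $\mathcal{C}$ is replaced by $\mathcal{C}^{\perp}$. Indeed, $(\mathcal{C}^{\perp})^{\perp}=\mathcal{C}$, so $\mathcal{C}^{\perp}$ has minimum distance $d^{\perp}$, number of nonzero weights $s^{\perp}$, dual distance $d$ and external distance $s$. Hence $d(\mathcal{C}^{\perp})-s(\mathcal{C}^{\perp})^{\perp}=d^{\perp}-s$ and $d(\mathcal{C}^{\perp})^{\perp}-s(\mathcal{C}^{\perp})=d-s^{\perp}$. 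The value $t=\max\{d-s^{\perp},d^{\perp}-s\}$ is therefore the same whether computed for $\mathcal{C}$ or for $\mathcal{C}^{\perp}$, and by hypothesis $t>0$.

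For part (1), I would split according to which quantity attains the maximum. If $t=d-s^{\perp}$, then $d>s^{\perp}$, and Corollary \ref{d-s'*} applied to $\mathcal{C}$ says every nonempty $\mathcal{A}_w(\mathcal{C})$ is a $q$-ary $(d-s^{\perp})$-design. If instead $t=d^{\perp}-s>0$, Lemma \ref{d'-s} applied to $\mathcal{C}$ gives directly that the codewords of each fixed nonzero weight form a $q$-ary $t$-design.

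For part (2), I would apply the same two results to $\mathcal{C}^{\perp}$, using the parameter translation above. If $t=d-s^{\perp}$, this equals the dual distance of $\mathcal{C}^{\perp}$ minus its number of nonzero weights, so Lemma \ref{d'-s} applies to $\mathcal{C}^{\perp}$. If $t=d^{\perp}-s$, this equals the minimum distance of $\mathcal{C}^{\perp}$ minus its external distance, which is positive, so Corollary \ref{d-s'*} applies to $\mathcal{C}^{\perp}$.

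I do not expect a real obstacle; the proof is bookkeeping once the duality of parameters is in place. The only points to check are these. First, Lemma \ref{d'-s} and Corollary \ref{d-s'*} are stated for arbitrary (respectively linear) $q$-ary codes, and $\mathcal{C}^{\perp}$ is linear, so both results apply to it. Second, the degenerate case where $\mathcal{C}^{\perp}=\{\bm{0}\}$ or $\mathcal{C}=\{\bm{0}\}$ should be treated with the convention that $d^{\perp}=n+1$ and the statement about $\mathcal{C}^{\perp}$ is vacuous. Third, the claim "every nonempty $\mathcal{A}_w$" is exactly the form of the conclusions of the invoked lemmas, so nothing further is needed.
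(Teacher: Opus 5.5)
Your proposal is correct and takes the same route as the paper, which obtains the Standard Criterion by combining Lemma~\ref{d'-s} with Corollary~\ref{d-s'*}. You also spell out the swap $d\leftrightarrow d^{\perp}$, $s\leftrightarrow s^{\perp}$ under $\mathcal{C}\mapsto\mathcal{C}^{\perp}$, which the paper leaves implicit and which is exactly what is needed to get part (2) from the same two results.
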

	
	\subsection{Puncturing-Shortening Criterion}
	Puncturing and shortening are fundamental operations in coding theory for constructing new codes from existing ones.
	Assmus, Goethals, and Mattson studied properties of $q$-ary $t$-designs in detail and examined their relationship with codes and punctured codes \cite{punctured-t}.
	This subsection extends this idea to establish a Puncturing-Shortening Criterion.
	
	\begin{Lemma}\label{DM0}\cite[Corollary 2.4]{punctured-t}
		Let $\mathcal{A}$ be a $q$-ary $t$-$(n,w,\lambda)$ design with $t\ge 1$. For any coordinate position m, and any $\alpha \in \mathbb{F}^{*}_{q}$, the sets
		\[
		D_{m}(\alpha) = \{ \bm{u} \in \mathcal{A} : u_m = \alpha \} \quad  \text{and} \quad D_{m}(0) = \{ \bm{u} \in \mathcal{A} : u_m = 0 \}
		\]
		respectively yield a $q$-ary $(t-1)$-$(n-1,w-1,\lambda)$ design and a $q$-ary $(t-1)$-$(n-1,w,\eta)$ design, where $\eta = \lambda (q-1)(n-w)/(w-t+1)$.
	\end{Lemma}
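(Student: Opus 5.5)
Both parts reduce to double counting. Fix a coordinate $m$ and $\alpha\in\mathbb{F}_q^*$, and index the remaining coordinates by $[n]\setminus\{m\}$. For a vector $\bm{y}$ of weight $t-1$ supported on $[n]\setminus\{m\}$, let $\bm{y}^{+}\in\mathbb{F}_q^n$ be $\bm{y}$ with the entry $\alpha$ inserted at position $m$. Then $\bm{y}^{+}$ has weight $t$. A block $\bm{u}\in\mathcal{A}$ covers $\bm{y}^{+}$ exactly when $u_m=\alpha$ and $\bm{u}$ restricted to the other coordinates covers $\bm{y}$. Hence the number of blocks of $D_m(\alpha)$ (with coordinate $m$ deleted) covering $\bm{y}$ equals the number of blocks of $\mathcal{A}$ covering $\bm{y}^{+}$, which is $\lambda$. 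The deleted vectors have weight $w-1$, and distinct blocks of $D_m(\alpha)$ stay distinct after deletion since they agree at $m$. So $D_m(\alpha)$ is a $q$-ary $(t-1)$-$(n-1,w-1,\lambda)$ design.

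For $D_m(0)$, fix $\bm{y}$ of weight $t-1$ on $[n]\setminus\{m\}$. By Lemma \ref{q-design-li}, $\mathcal{A}$ is a $q$-ary $(t-1)$-design with index $\lambda_{t-1}=\lambda(q-1)(n-t+1)/(w-t+1)$. Viewing $\bm{y}$ in $\mathbb{F}_q^n$ with $y_m=0$, exactly $\lambda_{t-1}$ blocks of $\mathcal{A}$ cover $\bm{y}$. Split these blocks according to the value $u_m$. For each $\alpha\neq 0$, the blocks with $u_m=\alpha$ are exactly those covering $\bm{y}^{+}$, so there are $\lambda$ of them by the first part. The number of blocks with $u_m=0$ that cover $\bm{y}$ is therefore $\lambda_{t-1}-(q-1)\lambda$. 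This simplifies to
\[
\lambda(q-1)\Bigl(\frac{n-t+1}{w-t+1}-1\Bigr)=\lambda(q-1)\frac{n-w}{w-t+1}=\eta,
\]
which does not depend on $\bm{y}$. The blocks of $D_m(0)$ keep weight $w$ after deleting the zero coordinate and remain distinct, so $D_m(0)$ is a $q$-ary $(t-1)$-$(n-1,w,\eta)$ design.

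The argument has no real obstacle. The two points needing care are that the correspondence $\bm{y}\mapsto\bm{y}^{+}$ respects the covering relation, and that the derived block sets are nonempty, as Definition \ref{def-qtdesign} requires. Nonemptiness follows from $\lambda>0$ for the $D_m(\alpha)$ case, and from $\eta>0$ (that is, $w<n$) for the $D_m(0)$ case.
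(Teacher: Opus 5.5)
Your proof is correct. The restriction argument gives $D_m(\alpha)$ index $\lambda$, and the count $\lambda_{t-1}-(q-1)\lambda=\eta$ for $D_m(0)$ is right; you also handle distinctness after deletion and nonemptiness, correctly flagging the degenerate case $w=n$. The paper does not prove this lemma itself (it is quoted from Assmus--Goethals--Mattson), but your subtraction step is exactly the recurrence $\lambda_{x,y,z}=\lambda_{x,y,z-1}-(q-1)\lambda_{x+1,y,z-1}$ from the proof of Lemma~\ref{l-xyz}, taken at $(x,y,z)=(t-1,0,1)$, so your approach is essentially the paper's.
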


	Applying Lemma \ref{DM0} to a linear code $\mathcal{C}$, we obtain the following: if $\mathcal{A}_{w}(\mathcal{C})$ is a $q$-ary $t$-design,
	then the shortened code $\mathcal{C}_{\{m\}}$ holds a $q$-ary $(t-1)$-design.
	A natural question is under what conditions the strength $t$ of the $q$-ary design held by the code is still retained for the punctured code or the shortened code.
	A known result \cite[Theorem 4.4]{punctured-t} establishes such conditions for puncturing.
	By developing this with the structural insight from the Standard Criterion and Lemma \ref{DM0}, we obtain the following.
	
	\begin{Theorem}\label{punctured} \emph{(Puncturing-Shortening Criterion)}
		Let $\mathcal{C}$ be an $[n,k,d]_{q}$ linear code with external distance $s^{\perp}< d$.
		Suppose that $n\in W(\mathcal{C}^{\perp})$.
		Define $t=d-s^{\perp}$ and $W'=\{w\in W(\mathcal{C}) : w-1\notin W(\mathcal{C})\}$.
		Then the following hold:
		\begin{itemize}
			\item[(1)] For every nonzero weight $w\in W(\mathcal{C})$, $\mathcal{A}_{w}(\mathcal{C})$ is a $q$-ary $t$-$(n,w,\lambda)$ design for some integer $\lambda$.
			\item[(2)] For the punctured code $\mathcal{C}^{\{m\}}$,  $\mathcal{A}_{w-1}(\mathcal{C}^{\{m\}})$ is a $q$-ary $t$-$(n-1,w-1,\mu)$ design for every nonzero weight $w\in W(\mathcal{C})$. Specifically, if $w\in W'$, then $\mu = \lambda\frac{w-t}{n-t}$.
			\item[(3)] For the shortened code $\mathcal{C}_{\{m\}}$, $\mathcal{A}_{w}(\mathcal{C}_{\{m\}})$ is a $q$-ary $t$-$(n-1,w,\lambda\frac{n-w}{n-t})$ design for every nonzero weight $w\in W'$.
		\end{itemize}
	\end{Theorem}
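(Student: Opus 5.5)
Part (1) is immediate from Corollary \ref{d-s'*} (equivalently Theorem \ref{Standard-Criterion}(1)), since $d>s^{\perp}$ gives that every nonempty $\mathcal{A}_w(\mathcal{C})$ is a $q$-ary $(d-s^{\perp})$-design. For (2) and (3) the plan is to apply the same corollary to the punctured code. The key observation concerns the external distance of $\mathcal{C}^{\{m\}}$. Its dual is $(\mathcal{C}^{\perp})_{\{m\}}$, whose nonzero weights are the weights of codewords of $\mathcal{C}^{\perp}$ vanishing at $m$. A codeword of weight $n$ never vanishes at $m$, so $W((\mathcal{C}^{\perp})_{\{m\}})\subseteq W(\mathcal{C}^{\perp})\setminus\{n\}$. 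Hence $s(\mathcal{C}^{\{m\}\perp})\le s^{\perp}-1$. Moreover, since $d>s^{\perp}\ge 1$ means $d\ge 2$, puncturing is injective on $\mathcal{C}$, and the minimum distance of $\mathcal{C}^{\{m\}}$ is at least $d-1$. Therefore $d(\mathcal{C}^{\{m\}})-s(\mathcal{C}^{\{m\}\perp})\ge (d-1)-(s^{\perp}-1)=t$. By Corollary \ref{d-s'*} (noting that strength $t'\ge t$ implies strength $t$ by Lemma \ref{q-design-li}), every nonempty weight class of $\mathcal{C}^{\{m\}}$ and of its dual $(\mathcal{C}^{\perp})_{\{m\}}$ is a $q$-ary $t$-design. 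In particular, $\mathcal{A}_{w-1}(\mathcal{C}^{\{m\}})$ is one whenever it is nonempty. It is nonempty for $w\in W(\mathcal{C})$: by (1) with $t\ge1$, some weight-$w$ codeword is nonzero at $m$, and deleting that coordinate gives weight $w-1$.

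For the index $\mu$ when $w\in W'$, note that a codeword of $\mathcal{C}^{\{m\}}$ of weight $w-1$ comes from a codeword of $\mathcal{C}$ of weight $w-1$ or $w$. Since $w-1\notin W(\mathcal{C})$, these codewords are exactly the weight-$w$ codewords of $\mathcal{C}$ with $c_m\neq 0$, so $\mathcal{A}_{w-1}(\mathcal{C}^{\{m\}})=\bigcup_{\alpha\neq0}D_m(\alpha)$ with coordinate $m$ deleted. Thus $|\mathcal{A}_{w-1}(\mathcal{C}^{\{m\}})|=(q-1)\lambda_1$, where $\lambda_1$ comes from \eqref{q-lambda-i}. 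Since we already know this set is a $t$-design, comparing block counts via \eqref{q-lambda-i} determines $\mu$:
\[
\mu\,(q-1)^t\binom{n-1}{t}\Big/\binom{w-1}{t}=(q-1)\lambda(q-1)^{t-1}\binom{n-1}{t-1}\Big/\binom{w-1}{t-1}.
\]
This simplifies to $\mu=\lambda\frac{w-t}{n-t}$.

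For (3) with $w\in W'$, the shortened code $\mathcal{C}_{\{m\}}$ is the dual of $(\mathcal{C}^{\perp})^{\{m\}}$, but it is more direct to use the set identity. By part (2) applied at weight $w-1$ only, or more simply, $\mathcal{A}_w(\mathcal{C}_{\{m\}})$ is the set of weight-$w$ codewords of $\mathcal{C}$ vanishing at $m$, punctured. These are also weight-$w$ codewords of $\mathcal{C}^{\{m\}}$, but that class may also contain images of weight-$(w+1)$ codewords. So I would instead argue that $\mathcal{A}_w(\mathcal{C}_{\{m\}})=\mathcal{A}_w(\mathcal{C}^{\{m\}})\setminus\mathcal{A}'$, where $\mathcal{A}'$ is the set of punctured weight-$(w+1)$ codewords nonzero at $m$. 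This is the expected main obstacle. The cleanest route, which I would try first, uses that $\mathcal{C}_{\{m\}}\subseteq\mathcal{C}^{\{m\}}$. Its dual $(\mathcal{C}^{\perp})^{\{m\}}$ has weights in $W(\mathcal{C}^{\perp})\cup(W(\mathcal{C}^{\perp})-1)$, which gives no useful bound. So the fallback is a difference argument. If $w+1\in W(\mathcal{C})$, then $\mathcal{A}'$ is the punctured $\bigcup_{\alpha}D_m(\alpha)$ at weight $w+1$. Because $w+1-1=w\in W(\mathcal{C})$, we get $w+1\notin W'$, and that set is still a $t$-design, namely $\mathcal{A}_w(\mathcal{C}^{\{m\}})$ minus the shortened part. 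Showing that the difference of two $q$-ary $t$-designs with a containment is again a $t$-design would finish the proof. Concretely, for a fixed weight-$t$ vector $\bm{x}$ on $[n]\setminus\{m\}$, I would count the codewords of $\mathcal{C}$ of weight $w$ with $c_m=0$ covering $\bm{x}$. By Lemma \ref{DM0} this equals the covering count of $\bm{x}$ in $\mathcal{A}_w(\mathcal{C})$ minus the covering counts of the extensions $(\bm{x},\alpha)$ for $\alpha\ne0$. Those counts are $\lambda_t$ and $\lambda$ respectively for the $t$-design, except that $(\bm{x},\alpha)$ has weight $t+1$. That count is handled by the $t$-design property of the punctured weight-$(w-1)$ class from (2): for $w\in W'$ it equals the number of $\bm{c}\in\mathcal{A}_w(\mathcal{C})$ with $c_m\ne0$ covering $\bm{x}$, which is $\mu$. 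Hence the count is $\lambda-\mu=\lambda\frac{n-w}{n-t}$, independent of $\bm{x}$, which proves (3).
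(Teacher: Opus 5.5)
Your proposal is correct and follows the paper's proof essentially step for step. Part (1) comes from the Standard Criterion. Part (2) uses $W((\mathcal{C}^{\{m\}})^{\perp})\subseteq W(\mathcal{C}^{\perp})\setminus\{n\}$ together with the same block-count comparison to get $\mu$. Part (3) splits the $\lambda$ weight-$w$ codewords covering $\bm{x}$ according to whether $c_m=0$, and counts the $c_m\neq 0$ part as $\mu$ because $w-1\notin W(\mathcal{C})$.

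The detour in (3) through $\mathcal{A}'$ and weight $w+1$ is unnecessary, and as stated it would be circular, since $\mathcal{A}'$ on its own is not known to be a design. You can drop it and keep only the final direct count.
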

	\begin{proof}
		Part $(1)$ follows directly from the Standard Criterion.
		For $(2)$, the dual code $(\mathcal{C}^{\{m\}})^{\perp}$ is obtained from the subcode $\{\bm{u}\in\mathcal{C^{\perp}} : u_{m}=0\}$  by deleting the $m$-th coordinate. Consequently, $d(\mathcal{C}^{\{m\}})=d-1$, and $W((\mathcal{C}^{\{m\}})^{\perp})\subseteq W(\mathcal{C^{\perp}})\setminus\{n\}$, so $s^{\perp}(\mathcal{C}^{\{m\}})\le s^{\perp}-1$ by noting $n\in W(\mathcal{C^{\perp}})$.
		Applying the Standard Criterion shows that $\mathcal{A}_{w-1}(\mathcal{C}^{\{m\}})$ is a $q$-ary $t$-$(n-1,w-1,\mu)$ design for some $\mu$.
		When $w\in W'$, comparing the block counts of
		$\{\bm{c}\in\mathcal{A}_{w}(\mathcal{C}) : c_{m}\in\mathbb{F}_{q}^{*}\}$
		and $\mathcal{A}_{w-1}(\mathcal{C}^{\{m\}})$ via Lemma \ref{DM0} yields
		\[
		(q-1)\frac{\lambda\binom{n-1}{t-1}(q-1)^{t-1}}{\binom{w-1}{t-1}} = \dfrac{\mu\binom{n-1}{t}(q-1)^{t}}{\binom{w-1}{t}}.
		\]
		Then we have $\mu = \lambda\frac{w-t}{n-t}$.
		
		For $(3)$, let $w\in W'$.  $\mathcal{A}_{w}(\mathcal{C})$ is a $q$-ary $t$-$(n,w,\lambda)$ design by $(1)$. $\mathcal{A}_{w-1}(\mathcal{C}^{\{m\}})$ is a $q$-ary $t$-$(n-1,w-1,\mu)$ design with $\mu = \frac{\lambda (w-t)}{n-t}$ by $(2)$.
		The set of codewords of weight $w$ in $\mathcal{C}$ can be partitioned into two disjoint subsets: those with a zero at coordinate $m$, which correspond to the codewords of weight $w$ in the shortened code $\mathcal{C}_{\{m\}}$, and those with a nonzero entry at coordinate $m$, which project onto codewords of weight $w-1$ in $\mathcal{C}^{\{m\}}$. Therefore, the codewords of weight $w$ in $\mathcal{C}_{\{m\}}$ is a $q$-ary $t$-$(n-1,w,\lambda-\mu)$ design.
		Substituting $\mu = \lambda\frac{w-t}{n-t}$ gives $ \lambda-\mu = \frac{\lambda (n-w)}{n-t}$, which completes the proof.
	\end{proof}
	
	\begin{Corollary}
		Let $\mathcal{C}$ be a $q$-ary $[n,k,d]$ linear code with external distance $s^{\perp}< d$. Define $t=d-s^{\perp}$. Then the following statements hold:
		\begin{itemize}
			\item[(1)] If $n\in W(\mathcal{C}^{\perp})$, $\mathcal{A}_{d}(\mathcal{C}_{\{m\}})$ is a $q$-ary $t$-design.
			\item[(2)] If $n\in W(\mathcal{C}^{\perp})$ and $\mathcal{C}$ is a divisible code, then $\mathcal{A}_{w}(\mathcal{C}_{\{m\}})$ is a $q$-ary $t$-design for every nonzero weight $w\in W(\mathcal{C})$.
			\item[(3)] If $\mathcal{C}$ is an even-like code (with all codewords $\bm{c}\in\mathcal{C}$ satisfying $\sum_{i=1}^{n}c_{i}=0$), then the punctured code $\mathcal{C}^{\{m\}}$ holds a $q$-ary $t$-design at any nonzero weight $w$.
		\end{itemize}
	\end{Corollary}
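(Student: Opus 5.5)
Each part reduces to Theorem \ref{punctured}, after checking its hypotheses or the condition $w\in W'$. Throughout, $t=d-s^{\perp}>0$.

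\textbf{Part (1).} Here $n\in W(\mathcal{C}^{\perp})$ by assumption. We need $d\in W'$, that is, $d-1\notin W(\mathcal{C})$. This is immediate because $d$ is the minimum nonzero weight, so there is no nonzero codeword of weight $d-1$. Theorem \ref{punctured}(3) then says that $\mathcal{A}_{d}(\mathcal{C}_{\{m\}})$ is a $q$-ary $t$-$(n-1,d,\lambda\frac{n-d}{n-t})$ design. If $n=d$ this set is empty and there is nothing to prove, so we assume it is nonempty.

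\textbf{Part (2).} Let $\Delta>1$ be a divisor of $\mathcal{C}$. Every weight in $W(\mathcal{C})$ is a multiple of $\Delta\ge 2$, so two consecutive integers $w-1,w$ cannot both lie in $W(\mathcal{C})$. Hence $W'=W(\mathcal{C})$, and Theorem \ref{punctured}(3) applies to every $w\in W(\mathcal{C})$. This gives the claim for each weight of $\mathcal{C}$ that actually occurs in the shortened code.

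Strictly, the statement asks about the nonzero weights of $\mathcal{C}_{\{m\}}$. Since $\mathcal{C}_{\{m\}}$ is a subcode of $\mathcal{C}$ with one zero coordinate deleted, its weights are weights of $\mathcal{C}$, so this covers every nonempty $\mathcal{A}_w(\mathcal{C}_{\{m\}})$.

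\textbf{Part (3).} This part has no hypothesis that $n\in W(\mathcal{C}^{\perp})$, so we derive it. If $\mathcal{C}$ is even-like, then $\mathbf{1}=(1,\dots,1)$ is orthogonal to every codeword, so $\mathbf{1}\in\mathcal{C}^{\perp}$ and it has weight $n$. Thus $n\in W(\mathcal{C}^{\perp})$, and Theorem \ref{punctured}(2) shows that $\mathcal{A}_{w-1}(\mathcal{C}^{\{m\}})$ is a $q$-ary $t$-design for every $w\in W(\mathcal{C})$.

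It remains to see that every nonzero weight $u$ of $\mathcal{C}^{\{m\}}$ has the form $w-1$ with $w\in W(\mathcal{C})$. This is the main obstacle, because a priori a codeword with $c_m=0$ punctures to a vector of unchanged weight. We handle it as follows. Puncturing is injective here since $d>s^{\perp}\ge1$, so $d\ge 2$. Take any nonzero codeword $\bm c$ with $c_m=0$ and weight $w$. Because $\bm c$ is even-like and nonzero, it has at least two nonzero entries, so it has some coordinate $j\ne m$ with $c_j\neq0$.

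To relate such codewords to ones nonzero at $m$, we use the Standard Criterion. Every $\mathcal{A}_w(\mathcal{C})$ is a $q$-ary $t$-design with $t\ge1$, hence a $q$-ary $1$-design by Lemma \ref{q-design-li}. It therefore covers every weight-$1$ vector, in particular $\alpha\bm e_m$, and so some codeword of weight $w$ has a nonzero entry at $m$. Puncturing that codeword shows that $w-1$ is a weight of $\mathcal{C}^{\{m\}}$.

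The weights of $\mathcal{C}^{\{m\}}$ arising from codewords with $c_m=0$ are exactly $W(\mathcal{C}_{\{m\}})$. When such a weight $u=w$ also equals $w'-1$ for some $w'\in W(\mathcal{C})$, the set $\mathcal{A}_u(\mathcal{C}^{\{m\}})$ mixes two kinds of codewords. The cleanest fix is to avoid Theorem \ref{punctured}(2) and apply the Standard Criterion directly to $\mathcal{C}^{\{m\}}$, as in its proof. We have $d(\mathcal{C}^{\{m\}})\ge d-1$ and $s^{\perp}(\mathcal{C}^{\{m\}})\le s^{\perp}-1$, the latter because $\mathbf{1}$ has weight $n$ and cannot survive into $(\mathcal{C}^{\{m\}})^{\perp}$. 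Hence $d(\mathcal{C}^{\{m\}})-s^{\perp}(\mathcal{C}^{\{m\}})\ge t$, and Theorem \ref{Standard-Criterion}(1) gives a $q$-ary $t$-design at every nonzero weight of $\mathcal{C}^{\{m\}}$. I would present part (3) this way, with the direct Standard Criterion argument as the primary proof.
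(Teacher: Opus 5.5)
Your proof is correct and follows the paper's route. Parts (1) and (2) come from Theorem \ref{punctured}(3), via $d\in W'$ and via $W'=W(\mathcal{C})$ for divisible codes. Part (3) uses $\mathbf{1}\in\mathcal{C}^{\perp}$ to get $n\in W(\mathcal{C}^{\perp})$, and then you apply the Standard Criterion directly to $\mathcal{C}^{\{m\}}$, using $W((\mathcal{C}^{\{m\}})^{\perp})=W((\mathcal{C}^{\perp})_{\{m\}})\subseteq W(\mathcal{C}^{\perp})\setminus\{n\}$. That is exactly the argument inside the proof of Theorem \ref{punctured}(2), and it is the right way to reach \emph{every} nonzero weight of $\mathcal{C}^{\{m\}}$, not only those of the form $w-1$.

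The intermediate detour showing $w-1\in W(\mathcal{C}^{\{m\}})$ proves the opposite inclusion from the one you said was needed, so you can simply delete it.
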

	\begin{proof}
		The statements of (1) and (2) are obvious by Theorem $\ref{punctured}$ as the minimum distance $d\in W'$ and $W(\mathcal{C}) = W'$ for a divisible code. For (3),
		if $\mathcal{C}$ is even-like, then the all-one vector $(1,\dots,1)$ belongs to $\mathcal{C}^{\perp}$, implying $n \in W(\mathcal{C}^{\perp})$. The statement is then a direct consequence of Theorem~\ref{punctured}.
	\end{proof}

	\subsection{$q$-Ary $t$-designs and $t$-regular codes}
	We now turn to the deeper structural relationship in a code between $t$-regularity and holding $q$-ary $t$-designs.
	To facilitate further analysis, we introduce a lemma that exploits the properties of $q$-ary $t$-designs.
	\begin{Lemma}
		\label{l-xyz}
		Let $H$ be the $n\times |\mathcal{A}|$ matrix whose columns are all blocks of a $q$-ary $t$-$(n, w, \lambda)$ design $\mathcal{A}$. For any non-negative integers $x,y,z$ satisfying $x+y+z\le t$, let $H'$ be any $(x+y+z)\times |\mathcal{A}|$ submatrix of $H$ and let $\alpha$ be a fixed $(x+y+z)$-tuple with all entries nonzero. Partition the coordinates of $\alpha$ into three disjoint sets $X,Y,Z$ of sizes $x,y,z$ respectively.
		Among the columns of $H'$, denote by $\lambda_{x,y,z}(\alpha)$ the number of blocks that coincide with $\alpha$ on $X$, differ from $\alpha$ and are nonzero on $Y$, and are identically zeros on $Z$. Then $\lambda_{x,y,z}(\alpha)$ is a constant $\lambda_{x,y,z}$ depending only on $x,y,z$ and $\lambda$.
	\end{Lemma}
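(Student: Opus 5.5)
Since $X$, $Y$, $Z$ are disjoint with $|X\cup Y\cup Z|\le t$, we argue by inclusion--exclusion, reducing every count to the number of blocks covering some vector of weight at most $t$. By Lemma \ref{q-design-li}, $\mathcal{A}$ is a $q$-ary $i$-$(n,w,\lambda_i)$ design for every $0\le i\le t$, with $\lambda_i$ given by \eqref{q-lambda-i}. So for any vector $\bm{x}$ of weight $i\le t$, the number of blocks covering $\bm{x}$ is $\lambda_i$, depending only on $i$ (and $n,w,\lambda,q$).

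\emph{Handling $Z$.} For a block $\bm{u}$ and a coordinate $j\in Z$, the condition $u_j=0$ is the complement of ``$u_j=\beta$ for some $\beta\in\mathbb{F}_q^*$''. These $q-1$ events are disjoint. So, with $X$ and $Y$ fixed, we apply inclusion--exclusion over subsets $S\subseteq Z$ and over choices of nonzero values $\beta_j$ on $S$. This expresses $\lambda_{x,y,z}(\alpha)$ as
\[
\sum_{S\subseteq Z}(-1)^{|S|}\sum_{\beta\in(\mathbb{F}_q^*)^{S}} N(X\cup S,\,Y),
\]
where $N(X\cup S, Y)$ counts blocks that equal the prescribed values on $X\cup S$ and are nonzero but differ from $\alpha$ on $Y$.

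\emph{Handling $Y$.} For $j\in Y$, the condition ``$u_j\ne0$ and $u_j\ne\alpha_j$'' is the disjoint union over the $q-2$ values $\gamma\in\mathbb{F}_q^*\setminus\{\alpha_j\}$ of the events $u_j=\gamma$. Summing over these choices turns $N$ into a sum of $(q-2)^{y}$ counts. Each count is the number of blocks covering a fixed vector of weight $x+|S|+y\le t$, hence equals $\lambda_{x+|S|+y}$.

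\emph{Closed formula.} Combining the two steps gives
\[
\lambda_{x,y,z}=\sum_{l=0}^{z}(-1)^{l}\binom{z}{l}(q-1)^{l}(q-2)^{y}\,\lambda_{x+y+l},
\]
which visibly depends only on $x,y,z$ (and the design parameters), not on $\alpha$ or on the chosen rows of $H'$. The result could equally be organized as an induction on $z$ using Lemma \ref{DM0}, but the direct expansion is cleaner.

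The only delicate point is bookkeeping. Every vector to which the design property is applied has weight at most $x+y+z\le t$, so Lemma \ref{q-design-li} is always applicable. The $Y$-events and $Z$-events also involve disjoint coordinate sets, so the two expansions commute and no cross terms arise. The degenerate case $q=2$ needs a remark: there $y>0$ forces $\lambda_{x,y,z}=0$, consistent with the factor $(q-2)^y$.
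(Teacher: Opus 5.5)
Your proof is correct and is essentially the paper's argument. The paper treats $z=0$ with the same $(q-2)^{y}\lambda_{x+y}$ splitting, then removes one coordinate of $Z$ at a time via $\lambda_{x,y,z}=\lambda_{x,y,z-1}-\sum_{i=1}^{q-1}\lambda_{x+1,y,z-1}$; unrolled, this is exactly your inclusion--exclusion over $S\subseteq Z$. Your alternating sum collapses, via $\sum_{l}(-1)^{l}\binom{z}{l}\binom{n-x-y-l}{w-x-y-l}=\binom{n-x-y-z}{w-x-y}$, to the paper's closed form $\lambda(q-2)^{y}(q-1)^{t-x-y}\binom{n-x-y-z}{w-x-y}\big/\binom{n-t}{w-t}$.
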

	\begin{proof}
		First, observe that $\lambda_{0,0,0}(\alpha)=|\mathcal{A}|$. When $z=0$, the condition on $Z$ is void, and each of the $y$ positions in $Y$ must be a nonzero element different from the corresponding entry of $\alpha$, giving $q-2$ choices. Hence $\lambda_{x,y,0}(\alpha)=(q-2)^{y}\lambda_{x+y}$ is independent of the choice of $\alpha$ where $\lambda_{i}$ is defined in \eqref{q-lambda-i}.
		Now assume that $z>0$. Fix a position $m\in Z$.
		Then let $\alpha'$ be the $(x+y+z-1)$-tuple obtained from $\alpha$ by suppressing $\alpha_{m}$, and let $\alpha^{(1)},\alpha^{(2)},\dots,\alpha^{(q-1)}$ be the $q-1$ distinct $(x+y+z)$-tuple obtained from $\alpha$ by replacing $\alpha_{m}$ by each of the $q-1$ nonzero elements of $\mathbb{F}_{q}$. Then we have
		$$\lambda_{x,y,z}(\alpha)=\lambda_{x,y,z-1}(\alpha')-\sum_{i=1}^{q-1}\lambda_{x+1,y,z-1}(\alpha^{(i)}).$$
		Combining this recurrence with the initial condition $\lambda_{x,y,0}(\alpha)=(q-2)^{y}\lambda_{x+y} = \lambda(q-2)^{y}(q-1)^{t-x-y}\binom{n-x-y}{t-x-y}\big/\binom{w-x-y}{t-x-y}$, we obtain	
		$$\lambda_{x,y,z}(\alpha) = \lambda(q-2)^{y}(q-1)^{t-x-y}\binom{n-x-y-z}{w-x-y}\bigg/\binom{n-t}{w-t},$$
		which completes the proof.
	\end{proof}
	
	While Lemma \ref{d-s'} establishes that a $t$-regular code may hold $q$-ary $t$-designs, it is natural to ask whether the converse holds. Using Lemma \ref{l-xyz}, we prove that this is indeed the case, leading to the following theorem.	
	\begin{Theorem}
		Let $t$ be a positive integer and let $\mathcal{C}$ be a linear $[n,k,d]_{q}$ code whose minimum distance satisfies $d\ge 2t$. Then $\mathcal{C}$ is $t$-regular if and only if, for every nonzero weight $w$, $\mathcal{A}_{w}(\mathcal{C})$ is a $q$-ary $t$-design.
	\end{Theorem}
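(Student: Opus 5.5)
The forward direction is exactly Lemma \ref{d-s'}: if $\mathcal{C}$ is $t$-regular and $d\ge 2t$, then every nonempty $\mathcal{A}_{w}(\mathcal{C})$ is a $q$-ary $t$-design. So the work lies in the converse. Assume every nonempty $\mathcal{A}_{w}(\mathcal{C})$ is a $q$-ary $t$-$(n,w,\lambda^{(w)})$ design. Take $\bm{x}\in\mathbb{F}_{q}^{n}$ with $d(\bm{x},\mathcal{C})=j\le t$. We must show that $B_{\bm{x},i}$ depends only on $i$ and $j$.

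\textbf{Step 1: reduce to a single coset representative.} Since $d\ge 2t\ge 2j$, the nearest codeword to $\bm{x}$ is unique when $j<d/2$. In the boundary case $j=t=d/2$ it need not be unique, but we pick any nearest codeword $\bm{c}$; the translation argument below does not need uniqueness. Translating by $-\bm{c}$ preserves distances to $\mathcal{C}$, because $\mathcal{C}$ is linear. We may therefore assume $\bm{x}$ itself has weight $j$, i.e. $\bm{x}$ is a coset leader of weight $j\le t$, and
\[
B_{\bm{x},i}=\sum_{w}B_{\bm{x},i}(w).
\]
It suffices to show that each $B_{\bm{x},i}(w)$ depends only on $w$, $i$ and $j=\mathrm{wt}(\bm{x})$. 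For $w=0$ this is immediate: $B_{\bm{x},i}(0)=[i=j]$.

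\textbf{Step 2: count via Lemma \ref{l-xyz}.} Fix $w\ge 1$ and a codeword $\bm{u}\in\mathcal{A}_{w}(\mathcal{C})$, and let $S=\mathrm{supp}(\bm{x})$ with $|S|=j$. Classify the positions of $S$ into three types:
\begin{itemize}
\item[(a)] positions where $u_{m}=x_{m}$ (call their number $a$);
\item[(b)] positions where $u_{m}$ is nonzero but differs from $x_{m}$ (number $b$);
\item[(c)] positions where $u_{m}=0$ (number $c$), with $a+b+c=j$.
\end{itemize}
Outside $S$, $\bm{u}$ has $w-a-b$ nonzero entries while $\bm{x}$ is zero there. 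Hence
\[
d(\bm{x},\bm{u})=b+c+(w-a-b)=w-a+c,
\]
which depends only on $(a,c)$. Therefore
\[
B_{\bm{x},i}(w)=\sum_{\substack{a+b+c=j\\ w-a+c=i}}\binom{j}{a,b,c}\lambda_{a,b,c},
\]
where we sum over the ways of choosing the sets $X,Y,Z$ inside $S$. By Lemma \ref{l-xyz}, applied to $\alpha=\bm{x}|_{S}$ with $x+y+z=j\le t$, each count $\lambda_{a,b,c}$ depends only on $a,b,c$ and the design parameters of $\mathcal{A}_{w}(\mathcal{C})$, and not on $\bm{x}$.

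\textbf{Step 3: conclude.} Summing over $w$, $B_{\bm{x},i}$ depends only on $i$ and $j$. This holds for all $i$, in particular for all $i\le\rho$, which is exactly $t$-regularity.

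\textbf{Expected obstacle.} The delicate point is Step 1, namely justifying that we may take $\bm{x}$ of weight exactly $d(\bm{x},\mathcal{C})$. Translating by a nearest codeword handles this, and Step 2 never uses uniqueness of the nearest codeword. The hypothesis $d\ge 2t$ is not needed in the converse computation itself. It enters through the forward direction (Lemma \ref{d-s'}), and it also guarantees that $t\le\rho$, so that $t$-regularity is well-defined: $\rho\ge e=\lfloor (d-1)/2\rfloor\ge t-1$. The remaining case $\rho=t-1$ is handled by reading the definition with $t$ replaced by $\min\{t,\rho\}$.
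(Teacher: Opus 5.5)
Your proof is correct and is essentially the paper's argument: decompose $B_{\bm{x},i}$ by the weight of the codeword and by its agreement pattern on $\mathrm{supp}(\bm{x})$, then apply Lemma~\ref{l-xyz}. Your explicit translation to a coset leader, which the paper leaves implicit by treating only $\mathrm{wt}(\bm{x})\le t$, and your multinomial factor $\binom{j}{a,b,c}$, which is absent from the paper's sum \eqref{sumBxj}, both make the argument more rigorous. The patch for $\rho=t-1$ is unnecessary: since $d\ge 2t$, every vector $\bm{y}$ of weight $t$ satisfies $d(\bm{y},\bm{c})\ge d-t\ge t$ for all nonzero $\bm{c}\in\mathcal{C}$, so $d(\bm{y},\mathcal{C})=t$ and hence $\rho\ge t$.
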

	\begin{proof}
		The necessity follows from Lemma \ref{d-s'}.
		For sufficiency, assume that for each nonzero weight $w$, $\mathcal{A}_{w}(\mathcal{C})$ is a $q$-ary $t$-$(n,w,\lambda^{(w)}_{t})$ design.
		By Lemma \ref{q-design-li}, $\mathcal{A}_{w}(\mathcal{C})$ is also a $q$-ary $i$-$(n,w,\lambda^{(w)}_{i})$ design for every $1\le i\le t$.
		Fix a vector $\bm{x}\in \mathbb{F}_{q}^{n}$ with wt$(\bm{x})\le t$. Since $d\ge 2t$, we have $d(\bm{x},\mathcal{C})=d(\bm{x},\bm{0})=$ wt$(\bm{x})$.
		We analyze the numbers $B_{\bm{x},j}$ denoted in $\eqref{Bxi}$ for $j=0,1,\dots,n$.
		
		%		\noindent\textbf{Case 1:}
		If $j\in\{0,1,\dots,d-\mathrm{wt}(\bm{x})-1\}\setminus\{\mathrm{wt}(\bm{x})\}$, then $B_{\bm{x},j}=0$. Moreover, $B_{\bm{x},\mathrm{wt}(\bm{x})}=1$ except possibly when $d=2t$ and wt$(\bm{x})\ne t$.
		
		Now consider the value $B_{\bm{x},j}$ with $j=d-\mathrm{wt}(\bm{x})$. Any codeword $\bm{c}\in \mathcal{C}$ with $d(\bm{x},\bm{c})=j$ must satisfy wt$(\bm{c})\le d$ by the triangle inequality;  hence either $\bm{c}=\bm{0}$ or wt$(\bm{c})=d$.
		If wt$(\bm{c})=d$, then $\bm{c}$ covers $\bm{x}$.
		Therefore, the number of such nonzero codewords is exactly $\lambda^{(d)}_{\mathrm{wt}(\bm{x})}$.
		When $d=2t$ and wt$(\bm{x})=t$, the zero codeword  also lies at distance $j$ from $\bm{x}$. Consequently,
		\[
		B_{\bm{x},j}=
		\begin{cases}
			\lambda^{(d)}_{\mathrm{wt}(\bm{x})}+1, & \text{if } d=2t \text{ and } \mathrm{wt}(\bm{x})=t, \\[4pt]
			\lambda^{(d)}_{\mathrm{wt}(\bm{x})}, & \text{otherwise}.
		\end{cases}
		\]
		In either situation $B_{\bm{x},j}$ is a constant depending only on $d(\bm{x},\mathcal{C})$.
		
		Finally suppose $j\ge d-\mathrm{wt}(\bm{x})$.
		We decompose $B_{\bm{x},j}$ according to the weight of the codewords:
		\begin{align}\label{eq-BXJ}
			B_{\bm{x},j} = B_{\bm{x},j}(\mathrm{wt}(\bm{x})+j) \;+\; \sum_{z=d}^{\mathrm{wt}(\bm{x})+j-1}B_{\bm{x},j}(z),
		\end{align}
		The first term equals $\lambda^{(\mathrm{wt}(\bm{x})+j)}_{\mathrm{wt}(\bm{x})}$, because a codeword of weight $\mathrm{wt}(\bm{x})+j$ that is at distance $j$ from $\bm{x}$ must cover $\bm{x}$.
		For a codeword $\bm{c}$ of weight $z$ with $d\le z\le \mathrm{wt}(\bm{x})+j-1$ and $d(\bm{x},\bm{c}) = j$, note that
		\begin{align}
			d(\bm{x},\bm{c}) = \mathrm{wt}(\bm{x})+\mathrm{wt}(\bm{c})-|\mathrm{supp}(\bm{x})\cap \mathrm{supp}(\bm{c})| - |\{i\in[n] : {x}_{i}={c}_{i}\ne 0\}|.
		\end{align}
		Let $s = |\mathrm{supp}(\bm{x})\cap \mathrm{supp}(\bm{c})|$ and $m = |\{i\in[n] : {x}_{i}={c}_{i}\ne 0\}|$.
		Then $j=z+\mathrm{wt}(\bm{x})-s-m$.
		Applying Lemma \ref{l-xyz} to the $q$-ary $t$-design $\mathcal{A}_{z}(\mathcal{C})$ (noting that $m+(s-m)+(\mathrm{wt}(\bm{x})-s) = \mathrm{wt}(\bm{x})\le t$), we obtain
		\begin{equation}\label{sumBxj}
			\sum_{z = d}^{\mathrm{wt}(\bm{x})+j-1}B_{\bm{x},j}(z) =
			\sum_{z=d}^{\mathrm{wt}(\bm{x})+j-1}\sum_{0\le m\le s\le \mathrm{wt}(\bm{x})\atop z+\mathrm{wt}(\bm{x})-s-m=j}\lambda^{(z)}_{m,s-m,\mathrm{wt}(\bm{x})-s},
		\end{equation}
		where each $\lambda^{(z)}_{m,s-m,\mathrm{wt}(\bm{x})-s}$ is a constant depending only on $m$, $s$, $j$ and $d(\bm{x},\mathcal{C})$.
		
		Combining the above arguments, we see that $B_{\bm{x},j}$ depends only on $\mathrm{wt}(\bm{x})$ and $j$. Hence $\mathcal{C}$ is $t$-regular.
	\end{proof}

	\section{Applications of general framework}
	\setcounter{equation}{0}
	This section presents specific families of linear codes that hold $q$-ary $t$-designs through the general framework established in Section $3$. We focus on five prominent classes: MDS codes, perfect codes, one-weight and two-weight codes, and extremal self-dual codes.
	
	\subsection{MDS codes and perfect codes}
	This subsection presents equivalent characterizations of linear MDS codes and linear perfect codes in the framework of $q$-ary designs.
	
	\begin{Lemma}\cite{AM1969}\label{MDS-1}
		A linear code is $\mathrm{MDS}$ if and only if the supports of its minimum weight codewords form a complete design.
	\end{Lemma}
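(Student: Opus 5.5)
The plan is to prove the two directions separately, working with an $[n,k,d]_q$ linear code $\mathcal{C}$ and its minimum weight codewords. We use the standard fact that $\mathcal{C}$ is MDS if and only if every $k$ columns of a generator matrix $G$ are linearly independent. Equivalently, $d^{\perp}\ge k+1$, since the dual is then MDS too. We also use that a codeword is determined by its values on any information set.

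\emph{Necessity.} Assume $d=n-k+1$, and fix any $d$-subset $S\subseteq\mathcal{P}$. Its complement $\bar S$ has size $k-1$. I would consider the subcode
\[
\mathcal{C}(\bar S)=\{\bm{c}\in\mathcal{C} : c_i=0 \text{ for all } i\in\bar S\}.
\]
It is the kernel of the restriction map onto the $k-1$ coordinates of $\bar S$, so it has dimension at least $1$. Any nonzero $\bm{c}\in\mathcal{C}(\bar S)$ has support inside $S$. It therefore has weight at most $d$, hence exactly $d$, and its support is $S$. So every $d$-subset occurs as a block of $\mathcal{B}_d(\mathcal{C})$. Since blocks are considered as a set of supports, this means $\mathcal{B}_d(\mathcal{C})=\binom{\mathcal{P}}{d}$, the complete design.

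\emph{Sufficiency.} Assume the supports of the minimum weight codewords form the complete design $\binom{\mathcal{P}}{d}$. The goal is to show $d=n-k+1$. The Singleton bound gives $d\le n-k+1$, so it suffices to rule out $d\le n-k$. Suppose $d\le n-k$, and pick a minimum weight codeword $\bm{c}$ with support $S$. Since $|S|=d\le n-k<n$, there is a coordinate $j\notin S$. Take $i\in S$ and form $S'=(S\setminus\{i\})\cup\{j\}$. By completeness, $S'$ is the support of some minimum weight codeword $\bm{c}'$. Scale $\bm{c}'$ so that $\bm{c}-\bm{c}'$ vanishes at some $l\in S\cap S'$; this is possible when $d\ge2$. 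Then $\bm{c}-\bm{c}'$ is nonzero, because it is nonzero at $j$. Its support lies in $S\cup\{j\}\setminus\{l\}$, so its weight is at most $d$. Iterating this kind of exchange does not immediately give a contradiction, so I would instead count dimensions.

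The counting argument goes as follows. Every $d$-subset being a support forces, for each $d$-set $S$, a nonzero codeword vanishing on the $(n-d)$-set $\bar S$. I would aim to show this forces every $(n-d+1)$-set to be an information set, i.e.\ of rank $k$. Take any $(n-d+1)$-set $T$ and suppose it has rank $<k$. Then there is a nonzero codeword vanishing on $T$, of weight $\le d-1$. That contradicts the definition of $d$, so every $(n-d+1)$-set contains an information set. I do not yet see how this alone yields $n-d+1\le k$. The extra input from completeness is this: for the $d$-set $S$, the codewords supported exactly on $S$ form, together with $\bm{0}$, a $1$-dimensional space by minimality (Lemma~\ref{repeat_q-1} in spirit). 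Hence $\bar S$ has rank exactly $k-1$ for every $d$-set $S$. So every $(n-d)$-subset has rank $k-1$, and adding any point gives rank $k$. Now take an $(n-d)$-set $R$ and note $n-d\ge k$ by assumption. Every column outside $R$ increases the rank, so no outside column lies in $\mathrm{span}(R)$. Choose a subset $R_0\subset R$ with $|R_0|=n-d-1$; it also has rank $k-1$, and it spans the same hyperplane as $R$. Then swap a point of $R\setminus R_0$ with an outside column $j$. The set $R_0\cup\{j\}$ is an $(n-d)$-set containing a point outside the hyperplane, so its rank is $k$, not $k-1$. This is a contradiction.

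The main obstacle is the step that every $(n-d)$-set of columns has rank exactly $k-1$. I would make it precise using the fact that codewords vanishing on $\bar S$ lie in $\mathcal{C}(\bar S)$ and all have support within $S$, together with minimality of $d$. These force $\dim\mathcal{C}(\bar S)=1$, so $\mathrm{rank}(\bar S)=k-1$.
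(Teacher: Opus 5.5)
The paper gives no proof of this lemma; it simply cites Assmus--Mattson \cite{AM1969}, so your argument has to stand on its own. Your necessity direction is correct as written. Your final rank argument for sufficiency (not the abandoned exchange attempt) is also correct in substance, but one step needs repair. You assert that an $(n-d-1)$-subset $R_0\subset R$ \emph{also has rank $k-1$}, and this does not follow for an arbitrary $R_0$: deleting a column of $R$ that is not in the span of the remaining columns drops the rank to $k-2$. You must choose $R_0=R\setminus\{r\}$ where the column indexed by $r$ lies in the span of the other columns of $R$. Such an $r$ exists because $|R|=n-d\ge k>k-1=\operatorname{rank}(R)$. This is the only place where the hypothesis $d\le n-k$ actually enters, so say it explicitly. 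With this choice, $\operatorname{span}(R_0)=\operatorname{span}(R)$. For any $j\notin R$, the $(n-d)$-set $R_0\cup\{j\}$ then has rank $k$, which contradicts your correctly proved fact that $\bar S$ has rank $k-1$ for every $d$-set $S$. In a final write-up, also drop the exploratory exchange paragraph and the remark that you do not yet see how to conclude.

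Sufficiency also has a much shorter direct proof, close to your first exchange idea. By completeness, for each $j=1,\dots,n-d+1$ there is a minimum weight codeword $\bm{c}_j$ with support $\{1,\dots,d-1\}\cup\{d-1+j\}$. Coordinate $d-1+j$ is nonzero in $\bm{c}_j$ and zero in every other $\bm{c}_i$. Hence $\bm{c}_1,\dots,\bm{c}_{n-d+1}$ are linearly independent, so $k\ge n-d+1$, and the Singleton bound then gives $d=n-k+1$. Your route through column ranks is valid and yields extra structure: every $(n-d)$-set of columns of a generator matrix spans a hyperplane of $\mathbb{F}_q^k$ containing no other column. However, it needs two separate minimality arguments and a proof by contradiction. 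The private-coordinate argument uses only completeness and the Singleton bound.
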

	This equivalence can also be expressed in the language of $q$-ary design as follows.
	\begin{Theorem}\label{MDS-Q1}
		Let $\cal C$ be an $[n,k,d]_{q}$ linear code. If $\cal C$ is $\mathrm{MDS}$, then for every nonzero weight $w$, $\mathcal{A}_{w}(\mathcal{C})$ forms a $q$-ary $1$-design. Moreover, $\cal C$ is $\mathrm{MDS}$ if and only if $\mathcal{A}_{d}(\mathcal{C})$ forms a $q$-ary $1$-$\left (n,d,\binom{n-1}{d-1}\right )$ design.
	\end{Theorem}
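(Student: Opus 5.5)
For the first assertion I will use the Standard Criterion (Theorem \ref{Standard-Criterion}). If $\mathcal{C}$ is an $[n,k,d]_q$ MDS code, then $\mathcal{C}^{\perp}$ is MDS with $d^{\perp}=k+1$. Moreover $\mathcal{C}$ has at most $n-d+1=k$ nonzero weights, namely those in $\{d,\dots,n\}$, so $s\le k$. Hence $d^{\perp}-s\ge 1$, and every nonempty $\mathcal{A}_w(\mathcal{C})$ is a $q$-ary $t$-design with $t\ge 1$. By Lemma \ref{q-design-li} it is in particular a $q$-ary $1$-design. The trivial cases $k=0$ or $k=n$ will be dismissed separately. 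For $k=n$ we have $\mathcal{C}=\mathbb{F}_q^n$, and each $\mathcal{A}_w$ is the set of all weight-$w$ vectors, which is obviously a $q$-ary design.

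For the characterization, first assume $\mathcal{C}$ is MDS. Any $d=n-k+1$ coordinates contain the complement of an information set of size $k-1$. So for each $d$-subset $S$ there are exactly $q-1$ codewords with support exactly $S$: the codewords vanishing on the $k-1$ positions outside $S$ form a $1$-dimensional space, and its nonzero members have full support $S$ because $d$ is the minimum weight. Given a weight-one vector $\bm{x}$ with $x_i=\alpha\neq0$, each $d$-set $S\ni i$ carries exactly one codeword whose $i$-th entry is $\alpha$, by scaling. Therefore $\bm{x}$ is covered by exactly $\binom{n-1}{d-1}$ codewords of weight $d$, which gives the stated $1$-design.

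For the converse, assume $\mathcal{A}_d(\mathcal{C})$ is a $q$-ary $1$-$(n,d,\binom{n-1}{d-1})$ design. By Lemma \ref{q-design-li} with $i=0$, we get $A_d=\lambda(q-1)n/d=(q-1)\binom{n}{d}$. By Lemma \ref{repeat_q-1}, since $w=d\le h$ always holds, each support occurs exactly $q-1$ times. So $\mathcal{B}_d(\mathcal{C})$ consists of $\binom{n}{d}$ distinct $d$-sets, i.e.\ it is the complete design, and Lemma \ref{MDS-1} gives that $\mathcal{C}$ is MDS.

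The only delicate point is checking that $h\ge d$ in Lemma \ref{repeat_q-1}, which is routine because $d-\lfloor (d+q-2)/(q-1)\rfloor<d$. Alternatively, one can argue directly: two codewords of weight $d$ with the same support that are not scalar multiples can be combined to cancel a coordinate, producing a nonzero codeword of weight less than $d$, a contradiction.
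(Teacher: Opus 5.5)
Your proof is correct. The first assertion and the converse follow the paper's structure, and the only genuine difference is in the forward direction of the characterization.

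For the first assertion you apply the Standard Criterion through $d^{\perp}-s\ge (k+1)-k=1$. The paper uses the mirror-image bound $d-s^{\perp}\ge (n-k+1)-(n-k)=1$, obtained from the MDS property of $\mathcal{C}^{\perp}$. The two are equally short.

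For the forward direction of the characterization, the paper combines the $1$-design property from the first part with the known value $A_d=(q-1)\binom{n}{d}$ for MDS codes, and solves $A_d=\lambda(q-1)n/d$ for $\lambda$. You instead show directly that every $d$-subset $S$ supports exactly $q-1$ codewords: the codewords vanishing on the $k-1$ positions outside $S$ form a one-dimensional space. You then count covers of a weight-one vector by scaling. Your route is self-contained: it does not need the Standard Criterion for this direction, and it proves the weight count the paper only cites, at the cost of a few extra lines.

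Your converse is the same as the paper's. Your check that $h\ge d$ in Lemma \ref{repeat_q-1} and your separate treatment of $k=n$ (where $\mathcal{C}^{\perp}=\{\bm{0}\}$ and $d^{\perp}$ is undefined) fill in details the paper leaves implicit.

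One wording issue: the sentence ``any $d$ coordinates contain the complement of an information set of size $k-1$'' is garbled. What you need, and actually use, is that the $k-1$ coordinates outside $S$ extend to an information set, since any $k$ coordinates of an MDS code form one. Restriction to those $k-1$ coordinates is therefore surjective with a one-dimensional kernel.
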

	\begin{proof}
		Assume $\cal C$ is a linear $\mathrm{MDS}$ code with minimum distance $d = n-k+1$. Its dual code $\mathcal{C}^{\perp}$ is also MDS with parameters $[n,n-k,k+1]_{q}$. Hence, the external distance $s^{\perp}$ of $\cal C$ satisfies $s^{\perp}\le n-(k+1)+1 = n-k$ which gives $d-s^{\perp}\ge 1$.
		By the Standard Criterion, for every nonzero weight $w$, $\mathcal{A}_{w}(\mathcal{C})$ forms a $q$-ary $1$-design.
		Furthermore, by the fact that the number of codewords of weight $d$ in $\cal C$ equals $(q-1)\binom{n}{d}$, we conclude that  $\mathcal{A}_{d}(\mathcal{C})$ forms a $q$-ary $1$-$(n,d,\binom{n-1}{d-1})$ design.
		
		Conversely, suppose that $\mathcal{A}_{d}(\mathcal{C})$ is a $q$-ary $1$-$(n,d,\binom{n-1}{d-1})$ design. Then the number of codewords of weight $d$ in $\cal C$ is  $(q-1)\binom{n}{d}$, implying from Lemma \ref{repeat_q-1} that every $d$-subset of $[n]$ appears as the support of a codeword. Consequently, $\mathcal{B}_{d}(\mathcal{C})$ forms a complete design.
		Applying Lemma \ref{MDS-1}, we conclude that $\cal C$ is $\mathrm{MDS}$.
	\end{proof}
	
	While the above theorem guarantees $q$-ary $1$-designs, certain MDS codes can yield $q$-ary $t$-designs with strength $t>1$, such as the doubly-extended Reed-Solomon codes (see Theorem \ref{DRS-q2}). Conversely, some MDS codes yield only $q$-ary $1$-designs, as shown by the following example.
	\begin{Example}\label{Example-RS}
		Consider the Reed-Solomon code $\cal C$ with parameters $[15,4,12]_{16}$, which is an $\mathrm{MDS}$ code. In this case, the set $\mathcal{A}_{12}(\mathcal{C})$ forms only a $16$-ary $1$-$(15,12,364)$ design, i.e., $T_{d}^{(q)} = 1$.
	\end{Example}
	
	The following lemma and theorem establish the fundamental connections between perfect codes and combinatorial designs, first in the classical and then in the $q$-ary setting.
	\begin{Lemma}\cite[Theorem 3.1]{AM1974}\label{perfect-t}
		A linear code $\mathcal{C}\subseteq \mathbb{F}_{q}^{n}$ with minimum distance $d=2e+1$ is perfect if and only if  $\mathcal{B}_{d}(\mathcal{C})$ forms an $(e+1)$-$(n,d,(q-1)^{e})$ design.
	\end{Lemma}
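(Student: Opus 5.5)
The plan is to prove the two directions separately, working throughout with the $q$-ary design $\mathcal{A}_d(\mathcal{C})$ and moving to supports only at the end. The fact that links the two settings is that each support of a minimum-weight codeword is shared by exactly $q-1$ codewords, namely the scalar multiples. This follows from Lemma \ref{repeat_q-1}, since $w=d$ always satisfies $d-\lfloor (d+q-2)/(q-1)\rfloor<d$, so $d\le h$. With this fact, a $q$-ary $(e+1)$-$(n,d,1)$ design on $\mathcal{A}_d(\mathcal{C})$ corresponds exactly to an $(e+1)$-$(n,d,(q-1)^e)$ design on $\mathcal{B}_d(\mathcal{C})$, via Proposition \ref{repeat}.

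\emph{Necessity.} Assume $\mathcal{C}$ is perfect, and let $\bm{x}$ have weight $e+1$. Then there is exactly one codeword $\bm{c}$ with $d(\bm{x},\bm{c})\le e$. It is nonzero, since $d(\bm{x},\bm{0})=e+1$. The triangle inequality gives $\mathrm{wt}(\bm{c})\le 2e+1=d$, so $\mathrm{wt}(\bm{c})=d$. Equality then forces $d(\bm{x},\bm{c})=e$ and $\mathrm{supp}(\bm{x})\subseteq\mathrm{supp}(\bm{c})$, and the count $d(\bm{x},\bm{c})=\mathrm{wt}(\bm{c})-(e+1)$ shows that $\bm{c}$ agrees with $\bm{x}$ on $\mathrm{supp}(\bm{x})$, i.e.\ $\bm{c}$ covers $\bm{x}$. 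Conversely, every weight-$d$ codeword covering $\bm{x}$ lies at distance $e$ from $\bm{x}$. Hence $\bm{x}$ is covered by exactly one member of $\mathcal{A}_d(\mathcal{C})$. So $\mathcal{A}_d(\mathcal{C})$ is a $q$-ary $(e+1)$-$(n,d,1)$ design, and Proposition \ref{repeat} gives the $(e+1)$-$(n,d,(q-1)^e)$ design.

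\emph{Sufficiency.} Assume $\mathcal{B}_d(\mathcal{C})$ is an $(e+1)$-$(n,d,(q-1)^e)$ design. The first step is that each weight-$(e+1)$ vector $\bm{x}$ is covered by \emph{at most one} minimum-weight codeword. If $\bm{c}\neq\bm{c}'$ both cover $\bm{x}$, they agree on $e+1$ nonzero positions, so
\[
\mathrm{wt}(\bm{c}-\bm{c}')\le 2d-2(e+1)=d-1,
\]
which contradicts the minimum distance. The second step is a count. There are $b=(q-1)^e\binom{n}{e+1}/\binom{d}{e+1}$ supports. Each support carries $q-1$ codewords, and each of these covers $\binom{d}{e+1}$ distinct weight-$(e+1)$ vectors. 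The total number of (vector, covering codeword) incidences is therefore
\[
(q-1)\,b\binom{d}{e+1}=(q-1)^{e+1}\binom{n}{e+1},
\]
which is the number of all weight-$(e+1)$ vectors. Combined with the first step, every weight-$(e+1)$ vector is covered by exactly one minimum-weight codeword.

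The last step, which I expect to need the most care, is to show the covering radius is $e$. Suppose instead some $\bm{y}$ has $r=d(\bm{y},\mathcal{C})\ge e+1$, and let $\bm{c}$ be a nearest codeword, so $\bm{z}=\bm{y}-\bm{c}$ has weight $r$. Restrict $\bm{z}$ to $e+1$ positions of its support to get a weight-$(e+1)$ vector $\bm{x}$ covered by $\bm{z}$. By the count above, some $\bm{c}'\in\mathcal{A}_d(\mathcal{C})$ covers $\bm{x}$. The vectors $\bm{z}$ and $\bm{c}'$ agree on those $e+1$ positions, so
\[
d(\bm{z},\bm{c}')\le (r-(e+1))+(d-(e+1))=r-1 .
\]
Then $d(\bm{y},\bm{c}+\bm{c}')\le r-1$, which contradicts the minimality of $r$. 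Hence every vector lies within distance $e$ of $\mathcal{C}$. Since the spheres of radius $e$ about codewords are disjoint because $d=2e+1$, $\mathcal{C}$ is perfect.
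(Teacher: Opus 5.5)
Your proof is correct. The paper gives no proof of this lemma; it cites Assmus--Mattson and then uses the lemma as a black box for the converse direction of Theorem~\ref{perfect-q(e+1)}.

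Your necessity direction is essentially the paper's argument for the forward direction of Theorem~\ref{perfect-q(e+1)}: the unique codeword within distance $e$ of a weight-$(e+1)$ vector must have weight $d$ and cover it. You then pass to supports with Proposition~\ref{repeat}, as the paper does.

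Your sufficiency direction is a self-contained argument the paper does not contain, and it reverses the paper's logical order. Steps~1--2 prove directly that $\mathcal{A}_d(\mathcal{C})$ is a $q$-ary $(e+1)$-$(n,d,1)$ design:
\begin{itemize}
\item the bound $\mathrm{wt}(\bm{c}-\bm{c}')\le 2e$ shows each weight-$(e+1)$ vector has at most one cover;
\item the double count $(q-1)\,b\binom{d}{e+1}=(q-1)^{e+1}\binom{n}{e+1}$ upgrades this to exactly one cover.
\end{itemize}
Step~3 then shows the covering radius is $e$: if $d(\bm{y},\mathcal{C})=r\ge e+1$, the codeword $\bm{c}+\bm{c}'$ lies at distance at most $r-1$ from $\bm{y}$. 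Together with disjointness of the radius-$e$ spheres, this gives exactly the paper's notion of perfect.

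What your route buys is independence from the citation. It makes the converse of Theorem~\ref{perfect-q(e+1)} provable without the external reference. It also shows that, for minimum-weight codewords with $d=2e+1$, a classical $(e+1)$-design of index $(q-1)^e$ on $\mathcal{B}_d(\mathcal{C})$ is equivalent to a $q$-ary $(e+1)$-design of index $1$ on $\mathcal{A}_d(\mathcal{C})$. The paper's route is shorter only because it defers to the literature.

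One wording point: your opening paragraph says this equivalence holds ``via Proposition~\ref{repeat}''. That proposition only gives the direction from $q$-ary to classical. The reverse direction is exactly what your Steps~1--2 prove, and it depends essentially on $d=2e+1$.
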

	\begin{Theorem}\label{perfect-q(e+1)}
		A linear code $\mathcal{C}\subseteq \mathbb{F}_{q}^{n}$ with minimum distance $d=2e+1$ is perfect if and only if $\mathcal{A}_{d}(\mathcal{C})$ forms a $q$-ary $(e+1)$-$(n,2e+1,1)$ design.
	\end{Theorem}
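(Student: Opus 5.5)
The plan is to reduce both directions to Lemma \ref{perfect-t}, translating between the ordinary design $\mathcal{B}_d(\mathcal{C})$ and the $q$-ary design $\mathcal{A}_d(\mathcal{C})$ by elementary counting.

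\emph{Forward direction.} Assume $\mathcal{C}$ is perfect with $d=2e+1$, and let $\bm{x}$ be any vector of weight $e+1$. By perfectness there is a unique codeword $\bm{c}$ with $d(\bm{x},\bm{c})\le e$. We have $\bm{c}\neq\bm{0}$, because $d(\bm{x},\bm{0})=e+1$. The triangle inequality then gives
\[
2e+1\le \mathrm{wt}(\bm{c})\le \mathrm{wt}(\bm{x})+d(\bm{x},\bm{c})\le 2e+1,
\]
so equality holds throughout. From $\mathrm{wt}(\bm{c})=\mathrm{wt}(\bm{x})+d(\bm{x},\bm{c})$ and the distance formula used in the proof of the $t$-regularity theorem, $\bm{c}$ agrees with $\bm{x}$ on all of $\mathrm{supp}(\bm{x})$, i.e., $\bm{c}$ covers $\bm{x}$, and $\mathrm{wt}(\bm{c})=d$. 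Conversely, any weight-$d$ codeword covering $\bm{x}$ lies at distance $d-(e+1)=e$ from $\bm{x}$. Uniqueness of $\bm{c}$ therefore gives exactly one covering codeword. Hence $\mathcal{A}_d(\mathcal{C})$ is a $q$-ary $(e+1)$-$(n,2e+1,1)$ design. Alternatively, this direction follows from Lemma \ref{perfect-t} together with the next paragraph's reasoning, but the direct argument is cleaner.

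\emph{Reverse direction.} Assume $\mathcal{A}_d(\mathcal{C})$ is a $q$-ary $(e+1)$-$(n,2e+1,1)$ design. By Proposition \ref{qt->t}, $\mathcal{B}_d(\mathcal{C})$ is an $(e+1)$-$(n,d,(q-1)^{e+1})$ design counted with multiplicity. Each support of a weight-$d$ codeword is shared by exactly $q-1$ codewords, namely its scalar multiples. This holds because two codewords of weight $d$ with the same support have a linear combination of weight less than $d$, which must therefore be zero, so one is a scalar multiple of the other; this is also Lemma \ref{repeat_q-1} with $w=d\le h$. Removing the multiplicity $q-1$ shows that the simple design $\mathcal{B}_d(\mathcal{C})$ is an $(e+1)$-$(n,d,(q-1)^e)$ design, as in Proposition \ref{repeat}. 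Lemma \ref{perfect-t} then gives perfectness.

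The step needing most care is the multiplicity bookkeeping between $\mathcal{A}_d$ and $\mathcal{B}_d$, specifically checking that $d\le h$ so that Lemma \ref{repeat_q-1} applies. The direct argument above avoids this, since it is immediate for $w=d$. Everything else is routine.
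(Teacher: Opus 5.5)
Your proof is correct and follows the paper's route. The forward direction is the same unique-nearest-codeword argument, with the triangle inequality and distance formula making explicit why that codeword has weight $d$ and covers $\bm{x}$; the reverse direction passes, as the paper does, through Proposition~\ref{repeat} (multiplicity $q-1$) to Lemma~\ref{perfect-t}. Your parenthetical alternative for the forward direction would additionally need that no two distinct weight-$d$ codewords cover the same weight-$(e+1)$ vector (their difference would have weight at most $2e$), but you do not rely on it.
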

	\begin{proof}
		Suppose that $\mathcal{C}\subseteq \mathbb{F}_{q}^{n}$ with minimum distance $d=2e+1$ is a linear perfect code. For any vector $\bm{x}\in \mathbb{F}_{q}^{n}$ of weight $e+1$, there exists exactly one codeword $\bm{c}\in \mathcal{C}$ at distance $e$ from $\bm{x}$, and wt$(\bm{c})=2e+1$. Hence, $\bm{c}$ covers $\bm{x}$, and $\mathcal{A}_{d}(\mathcal{C})$ is a $q$-ary $(e+1)$-$(n,2e+1,1)$ design.
		Conversely, assume that $\mathcal{A}_{d}(\mathcal{C})$ forms a $q$-ary $(e+1)$-$(n,2e+1,1)$ design.
		By Proposition $\ref{repeat}$, after disregarding scalar multiples, $\mathcal{B}_{d}(\mathcal{C})$ form an $(e+1)$-$(n,2e+1,(q-1)^{e})$ design. By Lemma \ref{perfect-t}, $\mathcal{C}$ is perfect.
	\end{proof}
	More generally, the properties of linear perfect codes guarantee that all their nonzero weight codewords yield $q$-ary $(e+1)$-designs.
	\begin{Theorem}
		\label{perfect-design}
		Let $\mathcal{C}$ be a $q$-ary linear perfect code of length $n$ with minimum distance $d=2e+1$. Then for every nonzero weight $w$, $\mathcal{A}_{w}(\mathcal{C})$ is a $q$-ary $(e+1)$-design.
	\end{Theorem}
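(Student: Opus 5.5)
The plan is to reduce the statement to the Standard Criterion, Theorem \ref{Standard-Criterion}(1), by showing that a linear perfect code with $d=2e+1$ satisfies $d-s^{\perp}=e+1$. Every nonempty $\mathcal{A}_{w}(\mathcal{C})$ is then a $q$-ary $(e+1)$-design.

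First, pin down the external distance. By \eqref{e<rho<s'} we have $e\le\rho\le s^{\perp}$, and perfection gives $\rho=e$. It therefore suffices to prove $s^{\perp}\le e$, i.e.\ that $\mathcal{C}^{\perp}$ has at most $e$ distinct nonzero weights. I would invoke Lloyd's theorem, the classical characterization of perfect codes going back to Delsarte \cite{Delsarte1973}. A code with $\rho=e$ has the Lloyd polynomial $L_e(x)=\sum_{i=0}^{e}K_i(x)$ of degree $e$. This polynomial has $e$ distinct integer roots in $[1,n]$, and the nonzero weights of $\mathcal{C}^{\perp}$ are exactly those roots. Hence $s^{\perp}=e$.

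An alternative route I would keep in reserve is the sphere-packing argument combined with Delsarte's inequality $|\mathcal{C}|\le q^{n}/\sum_{i=0}^{e}\binom{n}{i}(q-1)^i$ and its dual form $|\mathcal{C}|\ge q^n/\sum_{i=0}^{s^{\perp}}\binom{n}{i}(q-1)^i$. Perfection forces equality in the sphere-packing bound. Since the sum is strictly increasing in the upper index, the dual (covering-type) bound then yields $s^{\perp}\le e$.

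With $s^{\perp}=e$ we get $d-s^{\perp}=2e+1-e=e+1>0$, so $d>s^{\perp}$. The Standard Criterion then gives that every nonempty $\mathcal{A}_{w}(\mathcal{C})$ is a $q$-ary $t$-design with $t\ge e+1$. A design of larger strength is in particular a design of strength $e+1$ by Lemma \ref{q-design-li}.

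One caveat concerns the proof of Corollary \ref{d-s'*}, which for perfect codes cites \cite{CR} (and this very theorem). To avoid circularity, in the perfect case I would argue directly rather than quote that corollary. Concretely, $d=2e+1=2s^{\perp}+1$, so Lemma \ref{d-s regular}(2) shows that $\mathcal{C}$ is completely regular with $\rho=e$. In particular $\mathcal{C}$ is $e$-regular, and $d\ge 2e$, so Lemma \ref{d-s'} gives $q$-ary $e$-designs. This is one short of the target, so the step to strength $e+1$ needs a separate argument. I would prove it by fixing $\bm{x}$ of weight $e+1$ and exploiting the fact that the unique codeword within distance $e$ of $\bm{x}$ has weight exactly $2e+1$ (Theorem \ref{perfect-q(e+1)}). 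The counts $B_{\bm{x},j}$ are determined by complete regularity at distance $d(\bm{x},\mathcal{C})=e$. I would then decompose each $B_{\bm{x},j}$ by codeword weight, as in the proof of the converse theorem, using Lemma \ref{l-xyz} at strength $e$, to peel off the number of weight-$w$ codewords covering $\bm{x}$ by induction on $w$.

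I expect the main obstacle to be this last induction. It requires showing that the system relating the $B_{\bm{x},j}(w)$ to the covering numbers is triangular with nonzero diagonal. The first thing I would try is ordering by $w+\mathrm{wt}(\bm{x})-j$, so that the codewords covering $\bm{x}$ appear as the extreme term $j=w-(e+1)$.
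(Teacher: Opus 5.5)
As you note yourself, the reduction to Theorem~\ref{Standard-Criterion} is circular here. In the perfect case, Corollary~\ref{d-s'*} merely imports the present statement from the literature, so everything rests on your direct route. That route is exactly the paper's:
\begin{itemize}
\item complete regularity with $\rho=e$;
\item every $\bm{x}$ of weight $e+1$ has $d(\bm{x},\mathcal{C})=e$, so $B_{\bm{x},j}$ depends only on $j$;
\item base case $w=2e+1$ from Theorem~\ref{perfect-q(e+1)};
\item decomposition of $B_{\bm{x},j}$ by codeword weight.
\end{itemize}
However, you stop at precisely the step that constitutes the proof, and the tool you name for it does not work. Lemma~\ref{l-xyz} at strength $e$ cannot evaluate the lower-weight contributions $B_{\bm{x},j}(z)$. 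For $\mathrm{wt}(\bm{x})=e+1$ these depend on the pattern (agree, nonzero but different, or zero) of a codeword on all $e+1$ positions of $\mathrm{supp}(\bm{x})$. An $e$-design does not control such counts; it does not even determine how many weight-$z$ codewords cover $\bm{x}$, which is the very quantity at issue. So as written the induction does not close, and your hoped-for triangular system would still contain unknown lower-weight covering numbers.

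The missing idea is to take the full conclusion as the induction hypothesis: every nonempty $\mathcal{A}_z(\mathcal{C})$ with $z<w$ is already a $q$-ary $(e+1)$-design. Put $j=w-(e+1)$. A codeword at distance $j$ from $\bm{x}$ has weight at most $j+\mathrm{wt}(\bm{x})=w$. A codeword of weight exactly $w$ is at distance $j$ if and only if it covers $\bm{x}$. Hence
\[
\bigl|\{\bm{c}\in\mathcal{A}_w(\mathcal{C}) : \bm{c}\ \text{covers}\ \bm{x}\}\bigr| = B_{\bm{x},j}-\sum_{z=0}^{w-1}B_{\bm{x},j}(z).
\]
Here $B_{\bm{x},j}$ is constant by complete regularity. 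The $z=0$ term is $1$ if $j=e+1$ and $0$ otherwise. Every term with $d\le z<w$ is constant by Lemma~\ref{l-xyz} applied at strength $e+1$, which is legitimate since the three position sets there have total size $e+1$. Ordered by weight, your ``triangular system'' has unit diagonal and nothing needs inverting; this is the paper's proof.

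Finally, drop your reserve argument for $s^{\perp}\le e$. Combining $|\mathcal{C}|=q^n/\sum_{i\le e}\binom{n}{i}(q-1)^i$ with the covering-type bound $|\mathcal{C}|\ge q^n/\sum_{i\le s^{\perp}}\binom{n}{i}(q-1)^i$ yields $s^{\perp}\ge e$, which is just \eqref{e<rho<s'}. The upper bound genuinely needs Lloyd's theorem. Alternatively, you can bypass $s^{\perp}$ altogether by quoting the complete regularity of perfect codes directly, as the paper does.
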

	\begin{proof}
		Let $\mathcal{C}$ be an $[n,k,d]_{q}$ linear perfect code. Then $\mathcal{C}$ is completely regular with covering radius $\rho=e$ by \cite[Corollary 10.15]{Ding}.
		Take a vector $\bm{x}\in\mathbb{F}_{q}^{n}$ of weight $e+1$.
		Then $d(\bm{x},\mathcal{C})=e$, and by the complete regularity of $\cal C$, the numbers $B_{\bm{x},i}$ (see \eqref{Bxi}) depends only on $i$ and the parameter $e$.
		First consider the case $w=d=2e+1$. By Theorem \ref{perfect-q(e+1)}, $\mathcal{A}_{d}(\mathcal{C})$ is a $q$-ary $(e+1)$-$(n,d,1)$ design.
		Now assume by induction that for weight $k$ with $k<w$ and $\mathcal{A}_{k}(\mathcal{C})\ne\emptyset$, the set $\mathcal{A}_{k}(\mathcal{C})$ is a $q$-ary $(e+1)$-$(n,k,\lambda^{(k)})$ design. We prove the statement for weight $w$.
		The number of codewords of weight $w$ that cover $\bm{x}$ is given by $B_{\bm{x},w-\mathrm{wt}(\bm{x})}(w)$. Analogous to \eqref{sumBxj}, we obtain
		\begin{align*}
			B_{\bm{x},w-\mathrm{wt}(\bm{x})}(w) &= B_{\bm{x},w-\mathrm{wt}(\bm{x})}-\sum_{z=0}^{w-1}B_{\bm{x},w-\mathrm{wt}(\bm{x})}(z) \\
			&= B_{\bm{x},w-\mathrm{wt}(\bm{x})}-\sum_{z=0}^{w-1} \sum_{0\le m\le s\le \mathrm{wt}(\bm{x})\atop z+\mathrm{wt}(\bm{x})-s-m=w-\mathrm{wt}(\bm{x})}\lambda^{(z)}_{m,s-m,\mathrm{wt}(\bm{x})-s}.
		\end{align*}
		where each constant $\lambda^{(z)}_{m,s-m,\mathrm{wt}(\bm{x})-s}$ is determined by Lemma \ref{l-xyz}. Because $B_{\bm{x},w-\mathrm{wt}(\bm{x})}$ is a constant and the inner sums depend only on wt$(\bm{x})$, the expression on the right-hand side is independent of the particular choice of $\bm{x}$. Therefore, the number of codewords of weight $w$ covering any fixed vector of weight $e+1$ is a constant. This implies that $\mathcal{A}_{w}(\mathcal{C})$ is a $q$-ary $(e+1)$-$(n,w,\lambda)$ design for some $\lambda$. By induction, the theorem holds for all admissible weights $w$.
	\end{proof}
	
	By Theorem \ref{perfect-design}, the codewords of a fixed weight in linear perfect codes yield $q$-ary $t$-designs. Specific instances are provided below.
	\begin{Example}
		Let $\mathcal{C}$ be an $[n,n-m,3]_{q}$ Hamming code of length $n = \frac{q^{m}-1}{q-1}$, where $m\ge2$. Then every nonempty set $\mathcal{A}_{w}(\mathcal{C})$ forms a $q$-ary $2$-design. In particular, $\mathcal{A}_{3}(\mathcal{C})$ is a $q$-ary $2$-$(n,3,1)$ design and $\mathcal{B}_{3}(\mathcal{C})$ is a $2$-$(n,3,q-1)$ design. In this case, $T_{3}=T_{3}^{(q)} = 2$.
	\end{Example}
	\begin{Example}\label{Ex-golay}
		Let $\cal C$ be the $[11,6,5]_{3}$ ternary Golay code. Then every nonempty set $\mathcal{A}_{w}(\mathcal{C})$ is a ternary $3$-design as listed in the first row of Table \ref{Table-dual}. Specifically, $\mathcal{A}_{5}(\mathcal{C})$ is a ternary $3$-$(11,5,1)$ design and $\mathcal{B}_{5}(\mathcal{C})$ is a $4$-$(11,5,1)$ design.
	\end{Example}
	
	\begin{Remark}
		By Theorem \ref{perfect-q(e+1)}, we have $T_{d}\ge T_{d}^{(q)} = e+1$ for linear perfect codes with $d=2e+1$.
		By Example \ref{Ex-golay}, for the ternary Golay code, $\mathcal{A}_{5}(\mathcal{C})$ and $\mathcal{B}_{5}(\mathcal{C})$ are ternary $3$-designs and ordinary $4$-designs respectively, both with index $1$. We say that the minimum weight codewords hold double Steiner systems. This is precious in that it is the only $q$-ary code with $q>2$ having this property to the best of our knowledge.
	\end{Remark}

	\subsection{One- and two-weight codes}
	Linear codes with few distinct weights can be used to construct $q$-ary $t$-designs by the Standard Criterion. We begin with a complete determination of one-weight linear codes that hold $q$-ary $2$-designs.
	The dual of the Hamming codes are called \emph{simplex codes}.
	It is well-known that the simplex code has parameters $[(q^{m}-1)/(q-1),m,q^{m-1}]_{q}$.
	A linear code with only one nonzero weight is called an equidistant or constant-weight code. The structure of such codes is completely known.
	%一个重量的码都是由simplex码产出
	\begin{Lemma}\label{one-weight}
		\cite{bonisoli}
		Any equidistant linear code over $\mathbb{F}_{q}$ can be obtained, up to monomial equivalence, by replicating a $q$-ary simplex code and possibly appending zero coordinates.
	\end{Lemma}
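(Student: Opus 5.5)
The plan is the standard geometric argument via the columns of a generator matrix. Let $\mathcal{C}$ be an equidistant $[n,k]_q$ code with constant nonzero weight $w$, and fix a $k\times n$ generator matrix $G$ with columns $\bm{g}_1,\dots,\bm{g}_n\in\mathbb{F}_q^k$.

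\emph{Step 1: strip zero coordinates.} A column $\bm{g}_i=\bm{0}$ is a coordinate on which every codeword vanishes. Deleting all such coordinates gives a code with the same weights, and they are restored at the end as the ``appended zero coordinates''. So assume every $\bm{g}_i\ne\bm{0}$.

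\emph{Step 2: pass to projective geometry.} Let $P$ run over the points of $\PGL$-geometry $\mathrm{PG}(k-1,q)$, i.e.\ the one-dimensional subspaces of $\mathbb{F}_q^k$; there are $N=(q^k-1)/(q-1)$ of them. Let $m_P$ be the number of columns lying in $P$. For a message $\bm{u}\ne\bm{0}$, the codeword $\bm{u}G$ has $i$-th entry $\langle \bm{u},\bm{g}_i\rangle$. Hence
\[
\mathrm{wt}(\bm{u}G)=n-\sum_{P\subseteq \bm{u}^{\perp}} m_P ,
\]
where $\bm{u}^{\perp}$ ranges over all hyperplanes as $\bm{u}$ ranges over nonzero vectors. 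The one-weight hypothesis therefore says that every hyperplane $H$ of $\mathrm{PG}(k-1,q)$ carries the same total multiplicity $\sum_{P\in H}m_P=n-w$. In matrix form, $M\bm{m}=(n-w)\bm{1}$, where $M$ is the hyperplane--point incidence matrix and $\bm{m}=(m_P)_P$.

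\emph{Step 3: show the multiplicities are constant.} This is the key step. Every hyperplane contains exactly $(q^{k-1}-1)/(q-1)$ points, so the constant vector $c\bm{1}$ with $c=(n-w)(q-1)/(q^{k-1}-1)$ also satisfies $M\bm{m}=(n-w)\bm{1}$. It then suffices to show that $M$ is nonsingular. By the duality of points and hyperplanes, $MM^{T}$ has diagonal entries $a=(q^{k-1}-1)/(q-1)$ and off-diagonal entries $b=(q^{k-2}-1)/(q-1)$ (the number of points on the intersection of two hyperplanes). Thus
\[
MM^{T}=(a-b)I+bJ \quad\text{with } a-b=q^{k-2}>0,
\]
which is positive definite, so $M$ is invertible. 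Consequently $m_P=m$ for every $P$, with $m$ a positive integer; positivity follows because $n>0$.

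\emph{Step 4: identify the code.} Since every projective point appears exactly $m$ times among the columns, we can permute the columns and rescale each one by a nonzero scalar (a monomial equivalence) so that $G$ becomes $m$ side-by-side copies of the matrix whose columns are fixed representatives of all points of $\mathrm{PG}(k-1,q)$. That matrix generates the $q$-ary simplex code of dimension $k$. Hence $\mathcal{C}$ is monomially equivalent to the $m$-fold replication of the simplex code, padded with the zero coordinates removed in Step 1.

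Steps 1, 2 and 4 are bookkeeping. The only real content is the nonsingularity of the point--hyperplane incidence matrix in Step 3. If the Gram-matrix computation gives trouble, the fallback is to use that $M$ is the incidence matrix of a symmetric $2$-design, whose determinant is known to be nonzero.
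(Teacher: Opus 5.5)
The paper does not prove this lemma; it quotes it from Bonisoli. So there is no internal proof to match, and your argument serves as a self-contained proof in place of the citation.

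Your proof is correct. Encoding the nonzero columns of $G$ as a multiset of points of $\mathrm{PG}(k-1,q)$ turns the one-weight hypothesis into $M\bm{m}=(n-w)\bm{1}$. Your Gram computation $MM^{T}=q^{k-2}I+\frac{q^{k-2}-1}{q-1}J$ then shows $M$ is nonsingular, so the constant vector is the only solution, and the fallback to symmetric designs is not needed. Beyond self-containment, your route also gives the explicit data behind the replication. Here $m=n(q-1)/(q^{k}-1)$ and $w=mq^{k-1}$, where $n$ counts the nonzero coordinates. The bare citation does not make these visible.

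The one thing to tidy is the degenerate case $k=1$. There the only ``hyperplane'' $\bm{u}^{\perp}$ is the zero subspace, so $M$ is the $1\times 1$ zero matrix and your constant $c$ is $0/0$. As written, Step 3 needs $k\ge 2$; for $k=2$ you get $M=I$, which is fine. For $k=1$ the space $\mathrm{PG}(0,q)$ has a single point, so all nonzero columns are proportional. The conclusion is then immediate: after rescaling you get a repetition code, i.e.\ a replicated $[1,1,1]$ simplex code.
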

	\begin{Theorem}  \label{Simplex-q}
		Let $\mathcal{C}$ be the $q$-ary simplex code with parameters $\left[n, m, d\right]_q$ where $n = (q^m - 1)/(q-1)$ and $d = q^{m-1}$. Then  $\mathcal{A}_{d}(\mathcal{C})$ forms a $q$-ary $2$-$(n, q^{m-1},q^{m-2})$ design and $\mathcal{B}_{d}(\mathcal{C})$ is a $2$-$(n,q^{m-1},(q-1)q^{m-2})$ design.
	\end{Theorem}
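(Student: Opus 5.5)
The plan is to apply the Standard Criterion (Theorem \ref{Standard-Criterion}) through the branch $d^{\perp}>s$, and then to compute the index by counting codewords. The simplex code $\mathcal{C}$ is a one-weight code: every nonzero codeword has weight $q^{m-1}$, so $s=1$. Its dual is the Hamming code $[n,n-m,3]_q$. The columns of a generator matrix of $\mathcal{C}$ are representatives of all $n$ points of $\PGL$-projective space, so they are pairwise linearly independent, and any two of them sum to a nonzero vector proportional to a third column. This gives $d^{\perp}=3$. Hence $t=\max\{d-s^{\perp},d^{\perp}-s\}\ge 3-1=2$, and Theorem \ref{Standard-Criterion}(1), or directly Lemma \ref{d'-s}, shows that $\mathcal{A}_{d}(\mathcal{C})$ is a $q$-ary $2$-$(n,q^{m-1},\lambda)$ design for some $\lambda$.

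Next I determine $\lambda$ by double counting. We have $|\mathcal{A}_d(\mathcal{C})|=q^m-1$, and Lemma \ref{q-design-li} with $i=0$, $t=2$ gives
\[
q^m-1=\lambda (q-1)^2\binom{n}{2}\Big/\binom{q^{m-1}}{2}.
\]
Substituting $n=(q^m-1)/(q-1)$, this becomes $\lambda=(q^m-1)q^{m-1}(q^{m-1}-1)/\bigl((q-1)n(n-1)\bigr)$. Since $(q-1)n=q^m-1$ and $n-1=q(q^{m-1}-1)/(q-1)$, we get $\lambda=q^{m-1}(q-1)/\bigl(q(q-1)\bigr)\cdot$ hmm; more carefully, $(q-1)^2 n(n-1)=(q^m-1)q(q^{m-1}-1)$, so $\lambda=q^{m-2}$. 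As a sanity check, I could also argue directly: fixing $x_i=a$, $x_j=b$ imposes two independent linear conditions on the message vector $u\in\mathbb{F}_q^m$, because the two columns are independent. This leaves $q^{m-2}$ solutions, all nonzero since $a,b\neq0$, and each is a weight-$d$ codeword covering the given vector.

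For the support design, I apply Proposition \ref{repeat}. It remains to check that $w=d=q^{m-1}\le h$, i.e.\ that $d-\lfloor (d+q-2)/(q-1)\rfloor<d$, which holds because the floor is at least $1$. Proposition \ref{repeat} then says that $\mathcal{B}_d(\mathcal{C})$ is a $2$-$(n,q^{m-1},(q-1)q^{m-2})$ design. The identification $d^{\perp}=3$ is essentially standard, since the dual is the Hamming code, so there is no real obstacle. The only step needing care is the arithmetic for $\lambda$, and the direct linear-algebra count above serves as an independent check of it.
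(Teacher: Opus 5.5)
Your proposal is correct and follows the paper's proof: the Standard Criterion with $d^{\perp}=3$ and $s=1$, then $\lambda=q^{m-2}$ from $|\mathcal{A}_d(\mathcal{C})|=q^m-1$, then Proposition \ref{repeat} for the support design. You also check $d\le h$ explicitly, which the paper leaves implicit, and your direct count (two independent linear conditions on the message vector leave $q^{m-2}$ solutions) is a self-contained alternative that gives both the design property and the index at once, without Delsarte's theorem.
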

	\begin{proof}
		The dual code $\mathcal{C}^\perp$ is a Hamming code with parameters $[n,n-m,3]_{q}$, implying $d^{\perp}(\mathcal{C})=3$. Moreover, $\mathcal{C}$ is a constant-weight code, which gives $s(\mathcal{C})=1$.
		By the Standard Criterion, $\mathcal{A}_{d}(\mathcal{C})$ forms a $q$-ary $2$-$(n,d,\lambda)$ design.
		The number of codewords of weight $d$ is $|\mathcal{A}_{d}(\mathcal{C})| = q^{m}-1$. Using the relation $|\mathcal{A}_{d}(\mathcal{C})| = {\lambda \binom{n}{2}(q-1)^{2}}/{\binom{d}{2}}$, we obtain $\lambda = q^{m-2}$. Then applying Proposition \ref{repeat}, we conclude that $\mathcal{B}_{d}(\mathcal{C})$ is a $2$-$(n,q^{m-1},(q-1)q^{m-2})$ design.
	\end{proof}
	Theorem \ref{Simplex-q} rediscovers that $\mathcal{B}_{q^{m-1}}(\mathcal{C})$ for the Simplex code $\mathcal{C}$ is a $2$-design, which has been established by \cite[Theorem 10.23]{Ding}.
	Replication or adding zero coordinates to the simplex code generally destroys this design property. Consequently, we have a complete classification.
	\begin{Proposition}
		For $q\ge 3$, under monomial equivalence, the $q$-ary simplex codes are the only equidistant linear codes that hold $q$-ary $2$-designs.
	\end{Proposition}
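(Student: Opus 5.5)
The plan is to combine the structure theorem for equidistant codes (Lemma \ref{one-weight}) with Theorem \ref{Simplex-q}. The work then reduces to showing that every replicated or zero-padded simplex code fails to hold a $q$-ary $2$-design, and this is where the hypothesis $q\ge 3$ enters.

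First I will note that the $q$-ary design property is invariant under monomial equivalence. A monomial map $((a_x);g)$ permutes coordinates and rescales each one by a nonzero scalar, so it induces a bijection on the weight-$t$ vectors of $\mathbb{F}_q^n$ that preserves the covering relation. It also maps $\mathcal{A}_w(\mathcal{C})$ onto $\mathcal{A}_w$ of the image code. We may therefore work with a convenient representative. By Lemma \ref{one-weight}, a nonzero equidistant linear code $\mathcal{C}$ is monomially equivalent to a code obtained from an $[N,m,q^{m-1}]_q$ simplex code $\mathcal{S}$, with $N=(q^m-1)/(q-1)$, in two steps:
\begin{itemize}
\item replicate $\mathcal{S}$ $r\ge 1$ times, so each codeword $\bm{c}\in\mathcal{S}$ becomes $(\bm{c},\bm{c},\dots,\bm{c})$;
\item append $z\ge 0$ zero coordinates.
\end{itemize}
The single nonzero weight of $\mathcal{C}$ is $w=rq^{m-1}$. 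The case $r=1$, $z=0$ is the simplex code itself, and by Theorem \ref{Simplex-q} it does hold a $q$-ary $2$-design. It remains to exclude $z>0$ and $r\ge 2$.

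If $z>0$, let $p$ be an appended zero coordinate. Take any weight-$2$ vector $\bm{x}$ with $x_p\neq 0$. No codeword is nonzero at $p$, so no codeword covers $\bm{x}$. Now take a weight-$2$ vector $\bm{y}$ supported on two coordinates of a single copy of $\mathcal{S}$, with entries equal to those of some nonzero codeword there; such a $\bm{y}$ exists because $w\ge 2$ when $m\ge 2$. Then $\bm{y}$ is covered by at least one codeword. Hence $\mathcal{A}_w(\mathcal{C})$ would have to be a $2$-design with $\lambda=0$, which is impossible for a nonempty block set. (Alternatively, even the $1$-design property fails.)

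If $z=0$ and $r\ge 2$, fix a coordinate $i$ of the first copy and its replica $i'$ in the second copy, so that $c_i=c_{i'}$ for every codeword. Consider the weight-$2$ vectors $\bm{x}$ with $x_i=1$ and $x_{i'}=b$.
\begin{itemize}
\item For $b=1$, the codewords covering $\bm{x}$ are exactly those with $c_i=1$. Since coordinate $i$ is not identically zero in $\mathcal{S}$, there are $q^{m-1}>0$ of them.
\item Because $q\ge 3$, we can choose $b\in\mathbb{F}_q^*\setminus\{1\}$. For this $b$, no codeword covers $\bm{x}$, since that would force $c_i=1\neq b=c_{i'}$.
\end{itemize}
So the number of covering blocks is not constant, and $\mathcal{A}_w(\mathcal{C})$ is not a $q$-ary $2$-design. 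This step is the main obstacle, or rather the only place needing care: it is exactly where $q\ge 3$ is used. For $q=2$ the second choice of $b$ does not exist, and replicated binary simplex codes do support $2$-designs. In the write-up I will also mention the degenerate small cases (e.g.\ $m=1$, where $\mathcal{S}$ has length $1$). There the same replication argument applies whenever $r\ge 2$, and no $2$-design question arises when the length is below $2$.
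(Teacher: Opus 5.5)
Your proposal is correct and follows the paper's proof. Both reduce via Lemma \ref{one-weight} to a replicated and/or zero-padded simplex code. Zero-padding is ruled out by an uncovered weight-$2$ vector through the zero coordinate, and replication by a weight-$2$ vector with distinct nonzero entries on two copied positions, which is where $q\ge 3$ is used. Your extra checks make explicit what the paper leaves implicit: that the design property is invariant under monomial maps, and that a covered weight-$2$ vector exists, so $\lambda=0$ is excluded.
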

	\begin{proof}
		Let $\mathcal{C}$ be an equidistant linear code whose minimum weight codewords form a $q$-ary $2$-design. By Lemma \ref{one-weight}, $\mathcal{C}$ is monomially equivalent to a code constructed from a simplex code by replicating coordinates and possibly appending zero coordinates.
		If zero coordinate is appended, choose a vector $\bm{x}$ of weight $2$ with one nonzero entry in that coordinate. Since every codeword has a zero there, $\bm{x}$ is not covered, violating the $q$-ary $2$-design condition.
		If any coordinate is replicated, let $i,j$ be two positions that are copies of the same original coordinate. For any codeword $\bm{c}$ we have $c_{i}=c_{j}$. Now pick a vector $\bm{x}$ of weight $2$ with support $\{i,j\}$ and with different nonzero symbols in these two positions. Then no codeword can cover $\bm{x}$, again contradicting the $q$-ary $2$-design property.
		Thus, $\mathcal{C}$ must be monomially equivalent to the simplex code itself.
	\end{proof}
	We now turn to two-weight codes. Let $\mathcal{C}$ be an $[n,k,d]_{q}$ linear code with two nonzero weights and dual distance $d^{\perp} \ge 4$.  Then, by the Standard Criterion, we obtain the following:
	\begin{itemize}
		\item The codewords of any fixed nonzero weight in $\mathcal{C}$ form a $q$-ary $(d^{\perp}-2)$-design;
		\item The codewords of any fixed nonzero weight in the dual code $\mathcal{C}^{\bot}$ also form a $q$-ary $(d^{\perp}-2)$-design.
	\end{itemize}
	
%	We will need the following MacWilliams identity.
%	\begin{Lemma}\cite[Theorem 2.4]{Ding}\label{MacWilliams}
%		Let $\mathcal{C}$ be an $[n,k,d]_{q}$ linear code with weight enumerator $A(z) = \sum_{i=0}^{n}A_{i}z^{i}$ and let $A^{\perp}(z)$ be the weight enumerator of its dual code $\mathcal{C^{\perp}}$. Then
%		\begin{align}
%			A^{\perp}(z) = q^{-k}\left (1+(q-1)z\right )^{n}A\left (\frac{1-z}{1+(q-1)z}\right ).
%		\end{align}
%	\end{Lemma}
%	We now employ the MacWilliams identity together with the Standard Criterion to construct $q$-ary $2$-designs.
	
	\begin{Theorem}\label{TF1}
		For an even prime power $q$, let $\cal C$ be a $[q+2,3,q]_{q}$ linear code with weight enumerator
		\begin{align*}
			1+\frac{(q+2)(q^{2}-1)}{2}z^{q}+\frac{q(q-1)^{2}}{2}z^{q+2}.
		\end{align*}
		Then the following hold:
		\begin{itemize}
			\item [(1)] $\mathcal{A}_{q}(\mathcal{C})$ forms a $q$-ary $2$-$(q+2,q,\frac{q}{2})$ design;
			\item [(2)] $\mathcal{A}_{q+2}(\mathcal{C})$ forms a $q$-ary $2$-$(q+2,q+2,\frac{q}{2})$ design;
			\item [(3)] $\mathcal{A}_{4}(\mathcal{C}^{\perp})$ forms a $q$-ary $2$-$(q+2,4,\frac{q}{2})$ design.
		\end{itemize}
	\end{Theorem}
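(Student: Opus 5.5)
The plan is to apply the Standard Criterion (Theorem \ref{Standard-Criterion}). $\mathcal{C}$ has exactly two nonzero weights, $q$ and $q+2$, so $s=2$. We therefore need $d^{\perp}\ge 4$, which makes $t=d^{\perp}-s\ge 2$. Once $t\ge2$ is established, parts (1)--(3) follow by computing indices.

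\textbf{Step 1: the dual distance.} Apply the MacWilliams identity to the given weight enumerator. We need $A_1^{\perp}=A_2^{\perp}=A_3^{\perp}=0$, which should follow from the first few power-moment identities. Recall that with $n=q+2$ and $k=3$, $A^{\perp}_1=0$ and $A^{\perp}_2=0$ are equivalent to the first two Pless moments matching those of a code with $d^{\perp}\ge3$.

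The main obstacle is $A^{\perp}_3=0$, which needs the third moment. For this step I would first try a geometric argument. The weight $q$ and $q+2$ data mean that the $q+2$ columns of a generator matrix meet every line of $\mathrm{PG}(2,q)$ in $0$ or $2$ points, so they form a hyperoval. This is where $q$ even is used. The columns are then pairwise independent and no three are collinear, so $d^{\perp}\ge4$. Moreover $\mathcal{C}$ is MDS-like in the dual sense, with $d^{\perp}=4$, and $\mathcal{C}^{\perp}$ is a $[q+2,q-1,4]_q$ MDS code.

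\textbf{Step 2: the designs.} With $d^{\perp}=4$ we get $t=2$. The Standard Criterion then gives that every nonempty $\mathcal{A}_w(\mathcal{C})$ and $\mathcal{A}_w(\mathcal{C}^{\perp})$ is a $q$-ary $2$-design.

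\textbf{Step 3: the indices.} Use $|\mathcal{A}|=\lambda\binom{n}{2}(q-1)^2/\binom{w}{2}$, which is Lemma \ref{q-design-li} with $i=0$.
\begin{itemize}
\item For $w=q$: $\lambda=\frac{(q+2)(q^2-1)}{2}\cdot\frac{q(q-1)/2}{\frac{(q+2)(q+1)}{2}(q-1)^2}=\frac{q}{2}$.
\item For $w=q+2$: $\lambda=\frac{q(q-1)^2}{2}\cdot\frac{1}{(q-1)^2}=\frac q2$.
\item For the dual: $\mathcal{C}^{\perp}$ is MDS with $d^{\perp}=4$, so $A^{\perp}_4=(q-1)\binom{q+2}{4}$. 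Then $\lambda=(q-1)\binom{q+2}{4}\cdot 6\big/\bigl(\binom{q+2}{2}(q-1)^2\bigr)=\frac{q(q-1)}{2(q-1)}=\frac q2$.
\end{itemize}
The value of $A^{\perp}_4$ can alternatively be read off from MacWilliams.
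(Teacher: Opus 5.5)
Your proof is correct and follows the paper's route: the Standard Criterion with $s=2$ and $d^{\perp}=4$, then the indices from the block counts, and your three index computations match. The one difference is how you get $d^{\perp}=4$: the paper just notes that $d=q=n-k+1$, so $\mathcal{C}$ is MDS and $\mathcal{C}^{\perp}$ is a $[q+2,q-1,4]_q$ MDS code, which makes your MacWilliams/hyperoval detour unnecessary (and that detour does not actually use that $q$ is even, nor did you check that the columns are distinct projective points).
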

	\begin{proof}
		Note that $\cal C$ is an MDS code,
		the dual code $C^{\perp}$ is also MDS and has parameters $[q+2,q-1,4]_{q}$. So, the number of codewords of weight $4$ in $C^{\perp}$ is $\binom{q+2}{4}(q-1)$.
		Thus $d(\mathcal{C^{\perp}}) = 4$. Since  $s(\mathcal{C}) = 2$, by the Standard Criterion, $\mathcal{A}_{q}(\mathcal{C})$, $\mathcal{A}_{q+2}(\mathcal{C})$, and $\mathcal{A}_{4}(\mathcal{C}^{\perp})$ each form a $q$-ary $2$-design with some index $\lambda_{1},\lambda_{2},\lambda_{3}$, respectively.
		For a $q$-ary $2$-$(q+2,w,\lambda)$ design, the relation $|\mathcal{A}_w| = \lambda \binom{q+2}{2}(q-1)^2 / \binom{w}{2}$ holds. Solving for $\lambda$ with the known $w$ and $|\mathcal{A}_w|$ yields $\lambda_{1} = \lambda_{2} = \lambda_{3} = q/2$, completing the proof.
	\end{proof}
	\begin{Theorem} \label{TF3}
		For $q\ge 4$, let $\cal C$ a $[q^{2}+1,4,q^{2}-q]_{q}$ linear code with weight enumerator
		\begin{align*}
			1+(q^{2}-q)(q^{2}+1)z^{q^{2}-q}+(q-1)(q^{2}+1)z^{q^{2}}.
		\end{align*}
		Then the following hold:
		\begin{itemize}
			\item [(1)] $\mathcal{A}_{q^{2}-q}(\mathcal{C})$ forms a $q$-ary $2$-$(q^{2}+1,q^{2}-q,q^{2}-q-1)$ design and $\mathcal{B}_{q^{2}-q}(\mathcal{C})$ forms a $3$-$(q^{2}+1,q^{2}-q,q^{3}-3q^{2}+q+2)$ design;
			\item [(2)] $\mathcal{A}_{q^{2}}(\mathcal{C})$ forms a $q$-ary $2$-$(q^{2}+1,q^{2},q+1)$ design and $\mathcal{B}_{q^{2}}(\mathcal{C})$ is a complete design;
			\item [(3)] $\mathcal{A}_{4}(\mathcal{C}^{\perp})$ forms a $q$-ary $2$-$(q^{2}+1,4,\frac{(q+1)(q-2)}{2})$ design and	$\mathcal{B}_{4}(\mathcal{C}^{\perp})$ forms a $3$-$(q^{2}+1,4,q-2)$ design.
		\end{itemize}
	\end{Theorem}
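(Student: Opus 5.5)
The plan is to apply the Standard Criterion with $t=d^{\perp}-s$. Since $s(\mathcal{C})=2$, the main task is to show $d^{\perp}=4$.

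First, I would feed the given weight enumerator into the MacWilliams identity, with $n=q^{2}+1$ and $k=4$. This should give $A^{\perp}_1=A^{\perp}_2=A^{\perp}_3=0$ and $A^{\perp}_4>0$, and I would record the explicit value of $A^{\perp}_4$. For $A^{\perp}_1,A^{\perp}_2$ one can shortcut: a projective two-weight code has no zero or repeated coordinates. For $A^{\perp}_3$ and $A^{\perp}_4$ the Krawtchouk evaluation has to be carried out. This is routine but it is the main computational obstacle, and the $q$-dependence must cancel exactly to give $A^{\perp}_3=0$.

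With $d^{\perp}=4$, the Standard Criterion gives $t=4-2=2$. Hence $\mathcal{A}_{q^2-q}(\mathcal{C})$, $\mathcal{A}_{q^2}(\mathcal{C})$ and $\mathcal{A}_4(\mathcal{C}^{\perp})$ are all $q$-ary $2$-designs. Each index then follows from
\[
|\mathcal{A}_w|=\lambda\binom{n}{2}(q-1)^2\Big/\binom{w}{2}.
\]
For instance, for $w=q^2$ this gives $(q-1)(q^2+1)=\lambda(q-1)^2(q^2+1)/(q-1)$, so $\lambda=q+1$.

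For the ordinary designs, I would use the Assmus--Mattson Theorem applied to the code $\mathcal{C}^{\perp}$, whose minimum distance is $4$ and whose dual is $\mathcal{C}$. The nonzero weights of $\mathcal{C}$ in the range $[1,n-t]$ with $t=3$, i.e.\ $[1,q^2-2]$, consist only of $q^2-q$, so $m=1$ and $t=4-1=3$.

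Part (1) of Assmus--Mattson then gives that $\mathcal{B}_4(\mathcal{C}^{\perp})$ is a $3$-design. For this I need to check $4\le h$, which holds because $\lfloor (q+2)/(q-1)\rfloor=1$ for $q\ge4$.

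Part (2) gives that $\mathcal{B}_{q^2-q}(\mathcal{C})$ is a $3$-design. Here I check $q^2-q\le\min\{n-3,h^{\perp}\}$: indeed
\[
(q^2-q)-\Big\lfloor \tfrac{q^2-2}{q-1}\Big\rfloor=q^2-2q<q^2-q .
\]

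The indices come from counting. By Lemma \ref{repeat_q-1}, each support at these weights is shared by exactly $q-1$ codewords, so there are $|\mathcal{A}_w|/(q-1)$ distinct blocks. Then $\lambda_3=b\binom{w}{3}/\binom{n}{3}$ gives $q^3-3q^2+q+2$ and $q-2$ respectively.

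For $\mathcal{B}_{q^2}(\mathcal{C})$, I would verify the condition of Lemma \ref{repeat_q-1} at $h=q^2$:
\[
q^2-\Big\lfloor \tfrac{q^2+q-2}{q-1}\Big\rfloor=q^2-q-2<q^2-q .
\]
So the $(q-1)(q^2+1)$ codewords of weight $q^2$ have exactly $q^2+1=\binom{q^2+1}{q^2}$ distinct supports. Every $q^2$-subset is therefore a block, and $\mathcal{B}_{q^2}(\mathcal{C})$ is the complete design.
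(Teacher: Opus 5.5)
Your proposal is correct and follows essentially the same route as the paper: the Standard Criterion with $d^{\perp}=4$ and $s=2$, indices by counting, Assmus--Mattson for the two $3$-designs, and Lemma~\ref{repeat_q-1} for the complete design; the only difference is that you derive $d^{\perp}=4$ and $A^{\perp}_4$ from the MacWilliams identity and check the Assmus--Mattson hypotheses yourself, whereas the paper cites \cite[Chapter 13]{Ding} for both. There are minor slips that do not affect the conclusions: the relation for $w=q^2$ should read $(q-1)(q^2+1)=\lambda(q-1)^2(q^2+1)/(q^2-1)$; $\lfloor (q+2)/(q-1)\rfloor=2$ when $q=4$, which still gives $h\ge 4$; and the ``projective'' shortcut is not available from the hypotheses alone, though MacWilliams yields $A^{\perp}_1=A^{\perp}_2=0$ anyway.
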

	\begin{proof}
		From \cite[Chapter 13]{Ding}, the dual code $\mathcal{C}^{\perp}$ must have parameters $[q^{2}+1,q^{2}-3,4]_{q}$ and its number of codewords of weight $4$ is $(q^{2}+1)q^{2}(q-1)^{2}(q+1)(q-2)/24$.
		Hence, $\cal C$ and $\mathcal{C^{\perp}}$ hold $q$-ary $2$-design at every nonzero weight by the Standard Criterion. The indexes are easily computed, as stated.
		Applying the Assmus-Mattson theorem, \cite[Theorem 13.12]{Ding} shows that $\mathcal{B}_{q^{2}-q}(\mathcal{C})$ forms a $3$-$(q^{2}+1,q^{2}-q,q^{3}-3q^{2}+q+2)$ design and $\mathcal{B}_{4}(\mathcal{C}^{\perp})$ forms a $3$-$(q^{2}+1,4,q-2)$ design.
		It remains to consider $\mathcal{B}_{q^{2}}(\mathcal{C})$.
		By Lemma $\ref{repeat_q-1}$, each support of a codeword of weight $q^{2}$ in $\mathcal{C}$ appears with multiplicity $q-1$.
		Since $|A_{q^{2}}(\mathcal{C})| = (q-1)(q^{2}+1)$. It follows that $\mathcal{B}_{q^{2}}(\mathcal{C})$ consists of all $q^{2}$-subsets; hence $\mathcal{B}_{q^{2}}(\mathcal{C})$ is a complete design.
	\end{proof}
	
	\begin{Remark}
		The code in Theorem \ref{TF1} can be constructed from a hyperoval in $\mathrm{PG}(2,q)$, which is denoted by $\mathrm{TF1}$ in \cite{two}. The code in Theorem \ref{TF3} can be constructed from an ovoid in $\mathrm{PG}(3,q)$, denoted by $\mathrm{TF3}$ in \cite{two}.
	\end{Remark}

	The known $q$-ary two-weight linear codes with dual distance $d^\perp \ge 4$ have been completely classified; see
	\cite[Theorem 7.1.1]{SRGraphs}.
	For $q > 2$ and $t\ge 2$, Table \ref{Table-2weight} summarizes all known instances where these codes support $q$-ary $t$-designs at their two nonzero weights $w_1$ and $w_2$. The table lists the parameters for both the classical designs $\mathcal{B}_{w}(\mathcal{C})$ and the $q$-ary designs $\mathcal{A}_{w}(\mathcal{C})$. The codes are named according to \cite[Figure 1]{two}. Correspondingly, Table \ref{Table-dual} presents the parameters of the $q$-ary $t$-designs held by the minimum weight codewords of the dual codes $\mathcal{C}^\perp$.

	\begin{table}[ht]
		\caption{Two-weight linear codes holding $q$-ary $t$-designs ($q>2$, $d^{\perp}\ge 4$)}\vspace{0.2cm}
		\scriptsize
		\centering
		\renewcommand\arraystretch{1.1}
			\begin{tabular}{cccccc}
				\hline
				Code & $[n,k,d]_{q}$ & $\mathcal{A}_{w_{1}}(\mathcal{C})$ & $\mathcal{B}_{w_{1}}(\mathcal{C})$ & $\mathcal{A}_{w_{2}}(\mathcal{C})$ & $\mathcal{B}_{w_{2}}(\mathcal{C})$ \\
				\hline
				\thead{RT6 \\ (Golay)} & $[11,5,6]_{3}$ & $3$-$(11,6,2)_{3}$ & $4$-$(11,6,3)$ & $3$-$(11,9,7)_{3}$ & \thead{ complete \\design} \\
				\thead{FE2 \\ (Hill \cite{FE2}) }& $[56,6,36]_{3}$ & $2$-$(56,36,63)_{3}$ & $2$-$(56,36,126)$ & $2$-$(56,45,18)_{3}$ &  $2$-$(56,45,36)$ \\
				\thead{FE3 \\ (Hill \cite{FE3})} & $[78,6,56]_{4}$ & $2$-$(78,56,160)_{4}$ & $2$-$(78,56,480)$ & $2$-$(78,64,96)_{4}$ &  $2$-$(78,64,288)$ \\
				\thead{TF1 \\(hyperoval) }& \thead{$[q+2,3,q]_{q}$ \\ $q$ even} & $2$-$(q+2,q,\frac{q}{2})_{q}$ &\thead{ complete \\design} & $2$-$(q+2,q+2,\frac{q}{2})_{q}$  & \thead{ complete \\design} \\
				\thead{TF3 \\(ovoid)} & \thead{$[q^{2}+1,4,d]_q$ \\$d=q^{2}-q$\\ $q\ge4$} & \thead{$2$-$(q^{2}+1,d,\lambda)_{q}$\\ $\lambda = q^{2}-q-1$} & \thead{$3$-$(q^{2}+1,d,\lambda)$\\ $\lambda = q^{3}-3q^{2}+q+2$} &
				\thead{$2$-$(q^{2}+1,q^{2},\lambda)_{q}$ \\ $\lambda = q+1$}  & \thead{ complete \\design}  \\
				\hline
			\end{tabular}
			\label{Table-2weight}
		\end{table}
		
		\begin{table}[ht]
			\caption{Dual codes of two-weight linear codes holding $q$-ary $t$-designs ($q>2$, $d^{\perp}\ge 4$)}\vspace{0.2cm}
			\footnotesize
			%		\small
			\centering
			\renewcommand\arraystretch{1.1}
			\begin{tabular}{cccc}
				\hline
				Code &  $[n,k,d]_{q}$ & $\mathcal{A}_{d}(\mathcal{C}^{\perp})$ & $\mathcal{B}_{d}(\mathcal{C}^{\perp})$ \\
				\hline
				RT6$^{\bot}$  & $[11,6,5]_{3}$ & $3$-$(11,5,1)_{3}$ & $4$-$(11,5,1)$ \\
				FE2$^{\bot}$  & $[56,50,4]_{3}$ & $2$-$(56,4,9)_{3}$ & $2$-$(56,4,18)$ \\
				FE3$^{\bot}$  & $[78,72,4]_{4}$ & $2$-$(78,4,6)_{4}$ & $2$-$(78,4,18)$ \\
				TF1$^{\bot}$  & \thead{$[q+2,q-1,4]_{q}$\\ $q$ even} & $2$-$\bigl(q+2,4,\frac{q}{2}\bigr)_{q}$ & \thead{ complete \\design} \\
				TF3$^{\bot}$  & \thead{$[q^{2}+1,q^{2}-3,4]_{q}$ \\ $q\ge4$ } & $2$-$\bigl(q^{2}+1,4,\frac{(q+1)(q-2)}{2}\bigr)_{q}$ & $3$-$(q^{2}+1,4,q-2)$ \\
				\hline
			\end{tabular}
			\label{Table-dual}
		\end{table}

		\noindent
		\textbf{Open problem:} For $q > 2$, determine whether there exists a $q$-ary two-weight linear code with dual distance $3$ holding a $q$-ary $t$-design with $t\ge 2$.
		
		By applying the Puncturing-Shortening Criterion to the dual of the code $\mathrm{TF1}$, we derive the following classes of $q$-ary $2$-designs.
		\begin{Theorem}
			Let $\mathcal{C}$ be the code $\mathrm{TF1}^{\perp}$ with parameters $[q+2,q-1,4]_{q}$ where $q>2$ is even.
			For any coordinate position $m$,
			consider the punctured code $\mathcal{C}^{\{m\}}$ and the shortened code $\mathcal{C}_{\{m\}}$. Then the set of minimum weight codewords of $\mathcal{C}^{\{m\}}$
			forms a $q$-ary $2$-$(q+1,3,1)$ design, and the set of minimum weight codewords of $\mathcal{C}_{\{m\}}$ forms a $q$-ary $2$-$(q+1,4,(q-2)/2)$ design.
		\end{Theorem}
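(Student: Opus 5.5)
The plan is to apply the Puncturing-Shortening Criterion (Theorem \ref{punctured}) to $\mathcal{C}=\mathrm{TF1}^{\perp}$, whose dual is $\mathrm{TF1}$. The parameters read off from Theorem \ref{TF1} are:
\begin{itemize}
\item $\mathcal{C}$ is MDS with parameters $[q+2,q-1,4]_q$, so $d=4$;
\item $\mathcal{C}^{\perp}=\mathrm{TF1}$ has weight set $W(\mathcal{C}^{\perp})=\{q,q+2\}$, hence external distance $s^{\perp}=2<d$;
\item $n=q+2\in W(\mathcal{C}^{\perp})$.
\end{itemize}
Theorem \ref{punctured} therefore applies with $t=d-s^{\perp}=2$.

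By Theorem \ref{TF1}(3), $\mathcal{A}_4(\mathcal{C})$ is a $q$-ary $2$-$(q+2,4,\frac{q}{2})$ design, so $\lambda=q/2$. Since $d=4$ is the minimum weight, $3\notin W(\mathcal{C})$ and so $4\in W'$.

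For the punctured code, Theorem \ref{punctured}(2) says $\mathcal{A}_3(\mathcal{C}^{\{m\}})$ is a $q$-ary $2$-$(q+1,3,\mu)$ design with
\[
\mu=\lambda\,\frac{w-t}{n-t}=\frac{q}{2}\cdot\frac{2}{q}=1 .
\]
We also need weight $3$ to be the minimum weight of $\mathcal{C}^{\{m\}}$. This holds because puncturing an MDS code with $d\ge 2$ lowers the minimum distance by exactly one: $d(\mathcal{C}^{\{m\}})=3$, as is also used in the proof of Theorem \ref{punctured}. Alternatively, $\mathcal{C}^{\{m\}}$ is a $[q+1,q-1,3]_q$ MDS code.

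For the shortened code, Theorem \ref{punctured}(3) says $\mathcal{A}_4(\mathcal{C}_{\{m\}})$ is a $q$-ary $2$-$(q+1,4,\lambda')$ design with
\[
\lambda'=\lambda\,\frac{n-w}{n-t}=\frac{q}{2}\cdot\frac{q-2}{q}=\frac{q-2}{2}.
\]
Weight $4$ is the minimum weight of $\mathcal{C}_{\{m\}}$: shortening cannot decrease the minimum distance, and the shortened code is a $[q+1,q-2,4]_q$ MDS code. The set is nonempty since $q>2$ gives $\lambda'>0$ (equivalently, the MDS weight count $(q-1)\binom{q+1}{4}$ is positive).

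The only real check is the hypothesis verification, in particular $s^{\perp}=2$. This is the fact that $\mathrm{TF1}$ has exactly two nonzero weights, which is given in Theorem \ref{TF1}. A sanity check on the indices is block counting via Lemma \ref{q-design-li}. For example,
\[
|\mathcal{A}_3(\mathcal{C}^{\{m\}})|=\binom{q+1}{2}(q-1)^2\Big/\binom{3}{2}=(q-1)^2\frac{(q+1)q}{6},
\]
which matches the MDS count $(q-1)\binom{q+1}{3}$ only if the expressions agree. Indeed $(q-1)\binom{q+1}{3}=(q-1)(q+1)q(q-1)/6$, so they do.
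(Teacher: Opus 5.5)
Your proposal is correct and follows the paper's proof essentially verbatim. Like the paper, you read off $d=4$, $s^{\perp}=2$ and $n=q+2\in W(\mathcal{C}^{\perp})$ from the two-weight structure of $\mathrm{TF1}$, take $\lambda=q/2$ from Theorem~\ref{TF1}(3), and obtain the indices $1$ and $(q-2)/2$ from parts (2) and (3) of the Puncturing-Shortening Criterion, checking via the MDS parameters $[q+1,q-1,3]_q$ and $[q+1,q-2,4]_q$ that weights $3$ and $4$ are the minimum weights.
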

		\begin{proof}
			It is easy to have that $\mathcal{C}^{\{m\}}$ has parameters $[q+1,q-1,3]$ and $\mathcal{C}_{\{m\}}$ has parameters $[q+1,q-2,4]$.
			By Theorem \ref{TF1}(3),  $\mathcal{A}_{4}(\mathcal{C})$ forms a $q$-ary $2$-$(q+2,4,{q}/{2})$ design.
			The dual code $\mathcal{C^{\perp}}$ is $\mathrm{TF1}$, whose nonzero weights are $q$ and $q+2$. Hence $n=q+2$ belongs to $W(\mathcal{C^{\perp}})$, $d(\mathcal{C})=4$ and  $s^{\perp}(\mathcal{C})=2$. Applying the Puncturing-Shortening Criterion to $\mathcal{C}$, we conclude that $\mathcal{A}_{3}(\mathcal{C}^{\{m\}})$ is a $q$-ary $2$-$(q+1,3,1)$ design and that $\mathcal{A}_{4}(\mathcal{C}_{\{m\}})$ is a $q$-ary $2$-$(q+1,4,(q-2)/2)$ design.
		\end{proof}

	\subsection{Extremal self-dual codes}
		Extremal self-dual codes, which meet certain theoretical bounds, are another source of $q$-ary $t$-designs. We focus on two primary families: Type III codes over $\mathbb{F}_3$ and Type IV codes over $\mathbb{F}_4$.
		According to \cite[Section 11.1]{Ding}, an extremal Type III code is a self-dual code over $\mathbb{F}_3$ with parameters $\left [n, n/2, 3\lfloor n/12\rfloor+3 \right ]$ and divisor $\Delta = 3$, where $4 \mid n$.
		An extremal Type IV code is a Hermitian self-dual code over $\mathbb{F}_4$ with parameters $\left [n, n/2, 2\lfloor n/6\rfloor+2\right ]$ and divisor $\Delta = 2$, where $2 \mid n$.
		
		\begin{Theorem}
			\label{type3}
			Let $\mathcal{C}$ be a $[12m+4\mu, 6m+2\mu, 3m+3]$ extremal Type III code for $\mu\in\{0,1,2\}$. Then the codewords of any fixed weight in $\mathcal{C}$ form ternary $t$-designs with the strength $t$ given by:
			\begin{itemize}
				\item [(1)] $t=3$ if $\mu = 0$ and $m\ge 1$,
				\item [(2)] $t=2$ if $\mu = 1$ and $m\ge 0$,
				\item [(3)] $t=1$ if $\mu = 2$ and $m\ge 0$.
			\end{itemize}
		\end{Theorem}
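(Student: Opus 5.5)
The plan is to apply the Standard Criterion (Theorem \ref{Standard-Criterion}) to $\mathcal{C}=\mathcal{C}^{\perp}$, which holds because the code is self-dual. Then $d^{\perp}=d=3m+3$ and $s^{\perp}=s$, so the criterion yields a ternary $t$-design at every nonzero weight with
\[
t=d-s=3m+3-s .
\]
The whole proof therefore reduces to bounding the number $s$ of nonzero weights of $\mathcal{C}$.

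Since $\mathcal{C}$ has divisor $\Delta=3$, every nonzero weight is a multiple of $3$ lying in the interval $[3m+3,\,n]$ with $n=12m+4\mu$. Without further input this only gives $s\le \lfloor n/3\rfloor-m$, which is far too large. We need a sharper count.

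\textbf{First attempt: the all-one vector.} Over $\mathbb{F}_3$, self-duality gives $\langle\bm c,\bm c\rangle=\mathrm{wt}(\bm c)\equiv 0 \pmod 3$. One might try to argue that the all-one vector (or an all-$\pm1$ vector) lies in $\mathcal{C}$ and then pair weights $w\leftrightarrow n-w$. This is not available for a general Type III code, so I would avoid relying on it.

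\textbf{Main route: Gleason's theorem.} The weight enumerator of a Type III code is a polynomial in
\[
g_4=x^4+8xy^3 \quad\text{and}\quad g_{12}=y^3(x^3-y^3)^3 .
\]
For an extremal code, the enumerator is uniquely determined by $n$. The number of nonzero weights is then at most the number of multiples of $3$ in $[d,n]$, which is
\[
\lfloor n/3\rfloor-(m+1)+1=\lfloor n/3\rfloor-m .
\]
For $\mu=0$ this equals $4m-m=3m$, giving $t=d-s\ge 3$. For $\mu=1$, $n=12m+4$ and $\lfloor n/3\rfloor=4m+1$, so $s\le 3m+1$ and $t\ge 2$. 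For $\mu=2$, $n=12m+8$ and $\lfloor n/3\rfloor=4m+2$, so $s\le 3m+2$ and $t\ge 1$.

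These match the claimed strengths exactly. The count used only $3\mid w$ and $d\le w\le n$, so the inequality $t\ge 3,2,1$ follows directly.

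For $\mu=0$ the hypothesis $m\ge1$ is needed: when $m=0$ we have $n=0$, which is degenerate. For $\mu=1$ with $m=0$ the code is the $[4,2,3]$ tetracode. There $s=1$ and $t=3-1=2$, consistent with the claim.

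The statement asserts strength $t$ only, so a lower bound suffices. The main point to check is that each design is nontrivial, i.e.\ $\mathcal{A}_w(\mathcal{C})\neq\emptyset$. This is automatic, because the criterion applies to every nonempty $\mathcal{A}_w(\mathcal{C})$. It then remains only to confirm that $t\le d$ and $t\ge 1$ in each case, which are the simple inequalities above.
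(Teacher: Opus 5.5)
Your proposal is correct and is essentially the paper's argument: self-duality gives $d^{\perp}=d$ and $s^{\perp}=s$, so the Standard Criterion yields strength $d-s$, and $s$ is bounded by the number of multiples of $3$ in $[3m+3,\,12m+4\mu]$ (the paper writes this as $\lceil (n-d+1)/3\rceil$), giving $3m$, $3m+1$, $3m+2$. The appeal to Gleason's theorem is superfluous, since, as you yourself note, the count uses only the divisor $\Delta=3$.
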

		\begin{proof}
			The number of nonzero weights $s$ in $\mathcal{C}$ satisfies
			\begin{align*}
				s\le  \left \lceil \frac{(12m+4\mu)-(3m+3)+1}{3} \right \rceil = \left \lceil 3m+\frac{4\mu-2}{3} \right \rceil=
				\begin{cases}
					3m,  & \text{ if } \mu=0, \\
					3m+1,& \text{ if } \mu=1, \\
					3m+2,& \text{ if } \mu=2.
				\end{cases}
			\end{align*}
			By the Standard Criterion, the codewords of any fixed weight in $\mathcal{C}$ form a ternary $t$-design.
			Substituting the above upper bounds for $s$ into the criterion yields the stated strengths:
			$t=3$ when $\mu=0$ and $m\ge 1$, $t=2$ when $\mu = 1$ and $m\ge 0$, and $t=1$ when $\mu = 2$ and $m\ge 0$.
		\end{proof}
		
		Below are some examples of extremal Type III codes and their designs.
		\begin{Example}\label{P12}
			The Pless symmetry code $P_{12}$ is an extremal Type III $[12,6,6]_{3}$ code with weight enumerator
			$$1+264z^{6}+440z^{9}+24z^{12}.$$
			This code yields several classical designs, as noted in \cite[Example 11.26]{Ding}.
			Using SageMath, we further determine that its codewords also form a series of ternary $3$-designs, with strength aligning with Theorem \ref{type3}.
			\begin{align*}
				\text{Classical design parameters: } & 5\text{-}(12,6,1),\ 5\text{-}(12,9,35),\ 5\text{-}(12,12,1); \\
				\text{ternary design parameters: } & 3\text{-}(12,6,3)_{3},\ 3\text{-}(12,9,21)_{3},\ 3\text{-}(12,12,3)_{3}.
			\end{align*}
		\end{Example}
		
		\begin{Example}
			The Pless symmetry code $P_{24}$ is an extremal Type III $[24,12,9]_{3}$ code with weight enumerator
			$$1+4048z^{9}+61824z^{12}+242880z^{15}+198352z^{18}+24288z^{21}+48z^{24}.$$
			This code yields several classical designs, as noted in \cite[Example 11.27]{Ding}.
			Using SageMath, we further determine that its codewords also form a series of ternary $3$-designs, with strength aligning with Theorem \ref{type3}.
			\begin{align*}
				\text{Classical design parameters: } & 5\text{-}(24,9,6),\ 5\text{-}(24,12,576),\ 5\text{-}(24,15,8580),\\
				& 3\text{-}(24,18,29784),\ 5\text{-}(24,21,969),\ 5\text{-}(24,24,1); \\
				\text{ternary design parameters: } & 3\text{-}(24,9,21)_{3},\ 3\text{-}(24,12,840)_{3},\ 3\text{-}(24,15,6825)_{3},\\
				& 3\text{-}(24,18,9996)_{3},\ 3\text{-}(24,21,1995)_{3},\ 3\text{-}(24,24,6)_{3}.
			\end{align*}
		\end{Example}

		\begin{Theorem}\label{type4}
			Let $\mathcal{C}$ be a $[6m, 3m, 2m+2]_{4}$ extremal Type IV code for $m\ge 2$. Then for every nonzero weight $w$, $\mathcal{A}_{w}(\mathcal{C})$ is a quaternary $2$-design.
		\end{Theorem}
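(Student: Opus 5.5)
The plan is to apply the Standard Criterion (Theorem~\ref{Standard-Criterion}) in its form $t=d^{\perp}-s$. This requires two things: the Euclidean dual distance $d^{\perp}$ of $\mathcal{C}$, and an upper bound on the number $s$ of nonzero weights of $\mathcal{C}$.

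The first step is to pass from the Hermitian dual to the Euclidean dual. The Standard Criterion is phrased with the Euclidean dual $\mathcal{C}^{\perp}$, while $\mathcal{C}$ is Hermitian self-dual, $\mathcal{C}=\mathcal{C}^{\perp_H}$. Write $\bar{\bm{x}}=(x_1^{2},\dots,x_n^{2})$ for coordinatewise conjugation over $\mathbb{F}_4$. Then $\langle \bm{x},\bm{y}\rangle_H=\langle \bm{x},\bar{\bm{y}}\rangle$, so $\mathcal{C}^{\perp}=\overline{\mathcal{C}^{\perp_H}}=\bar{\mathcal{C}}$. Conjugation is a bijection preserving Hamming weight. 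Hence $\mathcal{C}^{\perp}$ has the same weight distribution as $\mathcal{C}$, and in particular $d^{\perp}=d=2m+2$.

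The second step is to bound $s$. The divisor $\Delta=2$ means every nonzero weight is even. All nonzero weights therefore lie in the set $\{2m+2,2m+4,\dots,6m\}$, which has $(6m-(2m+2))/2+1=2m$ elements. So $s\le 2m$, and consequently $d^{\perp}-s\ge 2m+2-2m=2>0$.

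The Standard Criterion then gives that every nonempty $\mathcal{A}_w(\mathcal{C})$ is a quaternary $t$-design with $t\ge 2$. By Lemma~\ref{q-design-li}, each such set is in particular a quaternary $2$-design, which is the claim.

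The only point needing care is the identification of the Euclidean dual with the conjugate code, so that $d^{\perp}$ is known. Once that is in place the counting is routine. The hypothesis $m\ge 2$ is not really used, since the bound holds for all $m\ge1$.
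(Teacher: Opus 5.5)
Your proof is correct and follows the paper's argument. You bound the number of nonzero weights by $2m$ using the divisor $2$, then apply the Standard Criterion with $d^{\perp}=2m+2$, adding an explicit justification of $\mathcal{C}^{\perp}=\bar{\mathcal{C}}$ (hence $d^{\perp}=d$) that the paper uses tacitly; your remark that $m\ge 2$ is not needed is also right.
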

		\begin{proof}
			The number of distinct nonzero weights in $\cal C$ is at most
			$\left \lceil \frac{6m-(2m+2)+1}{2} \right \rceil = 2m$.
			Hence, by the Standard Criterion, $\mathcal{A}_{w}(\mathcal{C})$ is a quaternary $2$-design for each nonzero weight $w$.
		\end{proof}
		\begin{Example}
			The extension of quadratic residue code $\overline{QRC}_{0}^{(29,4)}$ is an extremal Type IV $[30,15,12]_4$ code with weight enumerator
			\begin{align*}
				1+118755z^{12}+1151010z^{14}+12038625z^{16}+61752600z^{18}+195945750z^{20}+\\
				341403660z^{22}+312800670z^{24}+129570840z^{26}+18581895z^{28}+378018z^{30}.
			\end{align*}
			This code holds a $5$-$(30,12,220)$ design, as noted in \cite[Example 11.22]{Ding}.
			Using SageMath, we further determine that its codewords of weight $12$ also form a quaternary $2$-$(30,12,2002)$ design, with strength aligning with Theorem \ref{type4}.
		\end{Example}
		The content detailed in this section and the previous one demonstrates the use of general criteria.  However, some interesting families of linear code still go beyond their scope.  In the next section, we will develop new approaches to handle these situations.

	\section{Applications of automorphism groups}
		\setcounter{equation}{0}
		While the general criteria established in section $3$ are powerful, they do not guarantee to provide the largest possible strength $t$ for their $q$-ary designs. This section develops novel techniques to prove the existence of $q$-ary $2$-designs in two families of linear codes: doubly-extended Reed-Solomon codes and a type of trace codes.
		A key strategy for establishing these structures exploits the transitive automorphism groups of the codes.
		While $t$-transitivity ensures the existence of an ordinary $t$-design (Theorem \ref{transitive-t-design}), it does not in general guarantee the presence of a $q$-ary $t$-design (Example \ref{transitive-no-q-design}).
		
		\begin{Theorem}
			\label{transitive-design}
			Let $\mathcal{C}\subseteq\mathbb{F}_{q}^{n}$ be a linear code whose automorphism group $\Aut$$(\mathcal{C})$ is $t$-transitive. If for every vector $\bm{x}\in\mathbb{F}_{q}^{n}$ of weight $t$ whose support is a fixed set of $t$ coordinates, there are exactly $\lambda$ codewords in $\mathcal{A}_{w}(\mathcal{C})$ that cover $\bm{x}$, then $\mathcal{A}_{w}(\mathcal{C})$ forms a $q$-ary $t$-$(n,w,\lambda)$ design.
		\end{Theorem}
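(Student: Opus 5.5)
The approach is to transport the covering count from the fixed coordinate set to any other $t$-set using an automorphism of $\mathcal{C}$, in the same way as the proof of the $1$-transitive theorem above. Let $S_0$ be the fixed $t$-set of coordinates in the hypothesis. For a weight-$t$ vector $\bm{y}$, write $N(\bm{y})$ for the number of codewords in $\mathcal{A}_{w}(\mathcal{C})$ covering $\bm{y}$. By hypothesis $N(\bm{x})=\lambda$ for every weight-$t$ vector $\bm{x}$ with $\mathrm{supp}(\bm{x})=S_0$; note this covers all $(q-1)^t$ choices of nonzero entries on $S_0$. It suffices to show that every weight-$t$ vector $\bm{y}$ has $N(\bm{y})=N(\bm{x})$ for some $\bm{x}$ supported on $S_0$.

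First I would check that every element $\sigma=((a_{x});g,\gamma)\in\Aut(\mathcal{C})$ is a bijection on $\mathbb{F}_q^n$ that preserves weights and the covering relation. Suppose $\bm{a}$ covers $\bm{b}$, so $a_i=b_i$ for every $i$ with $b_i\neq0$. The $g(i)$-th entries of $\sigma(\bm{a})$ and $\sigma(\bm{b})$ are then $a_{g(i)}\gamma(a_i)$ and $a_{g(i)}\gamma(b_i)$, which agree. Zero entries map to zero entries because $\gamma(0)=0$ and the $a_x$ are nonzero. Hence $\sigma(\bm{a})$ covers $\sigma(\bm{b})$, and the same argument applied to $\sigma^{-1}$ gives the converse. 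Since $\sigma$ also maps $\mathcal{C}$ onto itself and preserves weight, it restricts to a bijection of $\mathcal{A}_{w}(\mathcal{C})$. Consequently $N(\sigma(\bm{x}))=N(\bm{x})$ for every weight-$t$ vector $\bm{x}$.

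Now take any weight-$t$ vector $\bm{y}$ with support $T$. Because $\Aut(\mathcal{C})$ is $t$-transitive, and in particular $t$-homogeneous, there is some $\sigma\in\Aut(\mathcal{C})$ whose permutation part sends $T$ onto $S_0$. Then $\sigma(\bm{y})$ is a weight-$t$ vector supported exactly on $S_0$, so $N(\bm{y})=N(\sigma(\bm{y}))=\lambda$ by the hypothesis. Therefore every weight-$t$ vector is covered by exactly $\lambda$ blocks, and $\mathcal{A}_{w}(\mathcal{C})$ is a $q$-ary $t$-$(n,w,\lambda)$ design.

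The main point to get right is that the scalars $a_x$ and the field automorphism $\gamma$ may move $\bm{y}$ to a vector on $S_0$ with different nonzero entries. This is harmless only because the hypothesis is assumed for \emph{every} nonzero labelling of $S_0$, and not just for one. That is exactly why the theorem needs the uniformity over $S_0$ and does not follow from $t$-transitivity alone, as Example \ref{transitive-no-q-design} shows. Scalar multiples of $\bm{y}$ need no separate treatment, since the $S_0$ hypothesis already covers them all.
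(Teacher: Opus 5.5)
Your proof is correct and follows the paper's argument: you transport the covering count between an arbitrary weight-$t$ vector and a vector supported on the fixed $t$-set by means of an automorphism of $\mathcal{C}$. The only difference is cosmetic. You obtain equality in one step by showing that each automorphism is a covering-preserving bijection of $\mathcal{A}_{w}(\mathcal{C})$, which also shows that $t$-homogeneity suffices. The paper instead proves $\lambda\le\lambda'$ and $\lambda'\le\lambda$ using two separate automorphisms $h$ and $h'$.
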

		\begin{proof}
			Let $S=\{i_{1},i_{2},\dots,i_{t}\}$ be a fixed set of $t$ coordinates positions. Define $X=\{\bm{x}\in\mathbb{F}_{q}^{n} : \mathrm{supp}(\bm{x}) = S\}$. Then $|X|=(q-1)^{t}$. By hypothesis, for any $\bm{x}\in X$ there are exactly $\lambda$ codewords in $\mathcal{A}_{w}(\mathcal{C})$ that cover $\bm{x}$.
			Now take an arbitrary vector $\bm{y}\in\mathbb{F}_{q}^{n}$ of weight $t$. Because $\Aut$$(\mathcal{C})$ is $t$-transitive, there exists an automorphism $h\in$ $\Aut$$(\mathcal{C})$ and a vector $\bm{x}\in X$ such that $h(\bm{x})=\bm{y}$.
			Let $\bm{c}_{1},\dots,\bm{c}_{\lambda}$, be the codewords in $\mathcal{A}_{w}(\mathcal{C})$ that cover $\bm{x}$. Their images $h(\bm{c}_{1}),\dots,h(\bm{c}_{\lambda})$ belong to $\mathcal{A}_{w}(\mathcal{C})$ and cover $\bm{y}$. Hence the number $\lambda'$ of codewords in $\mathcal{A}_{w}(\mathcal{C})$ covering $\bm{y}$ satisfies  $\lambda\le \lambda'$.
			Conversely, by the $t$-transitivity again, there exist $h'\in$ $\Aut$$(\mathcal{C})$ and $\bm{x}'\in X$ with $h'(\bm{y})=\bm{x}'$. If $\bm{c}_{1}',\dots,\bm{c}_{\lambda'}'$ are the codewords in $\mathcal{A}_{w}(\mathcal{C})$ covering $\bm{y}$, then $h'(\bm{c}_{1}'),\dots,h'(\bm{c}_{\lambda'}')$ are codewords in $\mathcal{A}_{w}(\mathcal{C})$ covering $\bm{x}'$, which implies $\lambda'\le \lambda$.
			Therefore $\lambda = \lambda'$. Since $\bm{y}$ is arbitrary, $\mathcal{A}_{w}(\mathcal{C})$ is a $q$-ary $t$-$(n,w,\lambda)$ design.
		\end{proof}
		
		Theorem \ref{transitive-design} reduces the problem of proving that a code holds a $q$-ary $t$-design to verifying a local condition on a fixed coordinate set, provided the automorphism group is $t$-transitive. In the following we apply this method to two concrete families of codes.
		
		\subsection{Doubly-extended Reed-Solomon codes}
		Let $P_{k}[x] = \{ f(x) \in \mathbb{F}_{q}[x] : \deg(f(x)) \le k-1 \}$ denote the set of polynomials over $\mathbb{F}_{q}$ of degree less than a positive integer $k$. Take a primitive element $\alpha$ of $\mathbb{F}_{q}$. The \emph{Reed-Solomon} (RS) code is defined as:
		$$\mathcal{C} = \{ (f(1),f(\alpha),f(\alpha ^{2}),\ldots,f(\alpha^{q-2}) ): f \in P_{k}[x] \},$$
		which is a $[q-1, k, q-k]$ linear code over $\mathbb{F}_{q}$. Its doubly-extended version, the DRS code, is obtained by appending two coordinates corresponding to the evaluation of $f$ at $0$ and at infinity:
		$$\mathcal{C}=\{ (f(0),f(1),f(\alpha),f(\alpha ^{2}),\ldots,f(\alpha^{q-2}), f(\infty)) : f \in P_{k}[x] \},$$
		where $f(\infty)$ is defined as the coefficient of $x^{k-1}$ (hence $f(\infty)=0$ if and only if $\deg(f) < k-1$). This DRS code is a linear MDS code with parameters $[q+1, k, q-k+2]_q$. By Theorem \ref{MDS-Q1}, it holds $q$-ary $1$-design at every nonzero weight. We will show that $q$-ary designs with higher strength could be held in the DRS codes.
		
		According to \cite[Remark $3$]{Aut-RS}, the permutation automorphism group of a DRS code over $\mathbb{F}_{q}$ is the projective semilinear group $\mathrm{P\Gamma L}(2,q)$, which acts triply transitive on $\mathrm{PG}(1,q)$. This can be stated as follows.
		\begin{Lemma}\label{DRS-3transitive}
			The automorphism group of a $\mathrm{DRS}$ code is triply transitive.
		\end{Lemma}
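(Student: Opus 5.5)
The plan is to exhibit explicit monomial automorphisms of the DRS code whose permutation parts realize the whole group $\PGL(2,q)$ acting on the coordinate set $\mathcal{P}=\mathrm{PG}(1,q)=\mathbb{F}_{q}\cup\{\infty\}$. Since $\PGL(2,q)$ is already sharply triply transitive on $\mathrm{PG}(1,q)$, this gives triple transitivity of $\Aut(\mathcal{C})$. We could also simply quote \cite[Remark 3]{Aut-RS}, but the direct argument is short and shows that monomial automorphisms suffice. Throughout, identify a codeword with a polynomial $f\in P_k[x]$. It is convenient to homogenize: write $F(X,Y)=Y^{k-1}f(X/Y)$, a form of degree $k-1$. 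Then the coordinate at a point $(a:1)$ is $F(a,1)=f(a)$, and at $\infty=(1:0)$ it is $F(1,0)$, which is the coefficient of $x^{k-1}$, consistent with the definition of $f(\infty)$. So $\mathcal{C}$ is the image of the evaluation map on binary forms of degree $k-1$, taken at fixed representatives of the $q+1$ projective points.

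\textbf{Step 1.} Take $g=\begin{pmatrix}a&b\\ c&d\end{pmatrix}\in \mathrm{GL}(2,q)$ and set $F^{g}(X,Y)=F(aX+bY,cX+dY)$. This is again a form of degree $k-1$, so the space of forms is $\mathrm{GL}(2,q)$-invariant. Evaluating $F^{g}$ at a chosen representative $v_P$ of a point $P$ gives $F(gv_P)$. Now $gv_P$ is a nonzero scalar multiple $\mu_{g,P}\,v_{\bar g P}$ of the representative of $\bar gP$, where $\bar g$ is the induced Möbius map. Hence $F(gv_P)=\mu_{g,P}^{\,k-1}F(v_{\bar gP})$. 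It follows that the codeword of $F^{g}$ is obtained from the codeword of $F$ by permuting coordinates via $\bar g$ and multiplying each coordinate by a nonzero scalar $\mu_{g,P}^{k-1}$. These scalars depend only on $g$ and $P$, not on $F$. So this map is an element of $\MAut(\mathcal{C})\subseteq\Aut(\mathcal{C})$, with permutation part $\bar g$ (or $\bar g^{-1}$, depending on convention, which is harmless).

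\textbf{Step 2.} As $g$ ranges over $\mathrm{GL}(2,q)$, the permutation parts range over all of $\PGL(2,q)$. Check that $\PGL(2,q)$ is triply transitive on $\mathrm{PG}(1,q)$: given distinct points $P_1,P_2,P_3$, the matrix whose columns are suitably scaled representatives of $P_1$ and $P_2$ sends $0,\infty,1$ to $P_1,P_2,P_3$. To arrange this, write $v_{P_3}=\alpha v_{P_1}+\beta v_{P_2}$ with $\alpha\beta\neq 0$ (possible because the three points are distinct) and use the columns $\alpha v_{P_1},\beta v_{P_2}$. Composing such a map with the inverse of another one sends any ordered triple of distinct points to any other. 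So $\Aut(\mathcal{C})$ is $3$-transitive in the sense defined in Section 2.

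The only delicate point is bookkeeping: confirming that the extra coordinate $f(\infty)$ transforms correctly under the homogenization, so that it really is the $(1:0)$ evaluation. It is also worth noting that the degenerate case $k-1=0$ is trivial. Neither of these presents a real obstacle.
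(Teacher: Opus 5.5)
Your proof is correct, but it takes a different route from the paper. The paper gives no argument of its own. It quotes \cite[Remark 3]{Aut-RS} for the fact that the permutation parts of the automorphisms of the DRS code form $\mathrm{P\Gamma L}(2,q)$, and then invokes the standard triple transitivity of that group on $\mathrm{PG}(1,q)$. You instead build the automorphisms by hand. You homogenize to binary forms of degree $k-1$, which correctly realizes $f(\infty)$ as evaluation at $(1:0)$. You then let $\mathrm{GL}(2,q)$ act by substitution and read off the monomial multipliers $\mu_{g,P}^{k-1}$.

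The citation delivers the full group, including the Galois part, at no cost. Your argument is self-contained and shows that monomial automorphisms whose permutation parts are exactly $\PGL(2,q)$ already suffice. This is precisely the kind of automorphism used in Theorem~\ref{transitive-design}.

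Your construction also shows why the multipliers cannot be dropped. For this normalization of the DRS code with $k\le q$, a scaling $x\mapsto ax$ fixing $\infty$ preserves $\mathcal{C}$ as a pure coordinate permutation only when $a^{k-1}=1$. Under $\gcd(k-1,q-1)=1$ this forces $a=1$. So the paper's phrase ``permutation automorphism group'' must be read as ``group of permutation components of $\Aut(\mathcal{C})$''. That reading matches the definition of $t$-transitivity in Section 2, which is what you verify.

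One harmless slip: with your identification $(a:1)\leftrightarrow a$ and $(1:0)\leftrightarrow\infty$, the matrix with columns $\alpha v_{P_1},\beta v_{P_2}$ sends $\infty,0,1$ (not $0,\infty,1$) to $P_1,P_2,P_3$. Any fixed ordered triple works as the hub, so the conclusion is unaffected.
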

		
		The following lemma is a classical result on subset sums in cyclic groups and serves as a tool for analyzing codewords in DRS codes that cover a given vector.
		%√
		\begin{Lemma}
			\label{countNKB}\cite[Corollary 1.2]{CountSum}
			For an element $b\in \mathbb{Z}_{n}$, let $M(k,b)$ denote the number of k-subsets of $\mathbb{Z}_{n}$ whose elements sum to $b$.  Then
			$$M(k,b)=\frac{1}{n}\sum_{r\mid\gcd(n,k)}(-1)^{k+\frac{k}{r}} \binom{n/r}{k/r}C_{r}(b), $$
			where $C_{r}(b)=\sum_{d\mid \gcd(r,b)}\mu(r/d)d$ and $\mu$ is the M$\ddot{o}$bius function on the integers.
		\end{Lemma}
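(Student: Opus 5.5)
This lemma is the cited Corollary 1.2 of \cite{CountSum}. To prove it directly I would use a generating function together with character orthogonality on $\mathbb{Z}_n$. Let $\zeta=e^{2\pi i/n}$. For each $j\in\mathbb{Z}_n$ set
\[
F_j(y)=\prod_{a\in\mathbb{Z}_n}(1+y\zeta^{ja}).
\]
The coefficient of $y^k$ in $F_j(y)$ is $\sum_{S}\zeta^{j\,\sigma(S)}$, where the sum runs over all $k$-subsets $S$ and $\sigma(S)$ denotes the sum of the elements of $S$. By orthogonality,
\[
M(k,b)=\frac1n\sum_{j=0}^{n-1}\zeta^{-jb}\,[y^k]F_j(y).
\]

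Next I would evaluate $F_j$. Let $r=n/\gcd(n,j)$ be the order of $\zeta^j$. As $a$ runs over $\mathbb{Z}_n$, the value $\zeta^{ja}$ runs over all $r$-th roots of unity, each occurring $n/r$ times. Since $\prod_{\omega^r=1}(1+y\omega)=1-(-y)^r$, we get
\[
F_j(y)=\bigl(1-(-y)^r\bigr)^{n/r}.
\]
Hence $[y^k]F_j$ vanishes unless $r\mid k$. When $r\mid k$ it equals $\binom{n/r}{k/r}(-1)^{k/r}(-1)^{k}$, using $(-(-y)^r)^{k/r}=(-1)^{k/r}(-1)^{k}y^k$. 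This is exactly the sign $(-1)^{k+k/r}$ in the statement. The condition $r\mid n$ holds automatically, so only $r\mid\gcd(n,k)$ contribute.

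Finally I would group the indices $j$ by the order $r$. The $j$ with $\zeta^j$ of exact order $r$ are those for which $\zeta^{j}$ is a primitive $r$-th root of unity. The sum of $\zeta^{-jb}$ over this group is the Ramanujan sum $c_r(b)=\sum_{d\mid\gcd(r,b)}\mu(r/d)\,d$, which is $C_r(b)$. This sum is real, so the conjugation sign causes no trouble. Combining the pieces gives the formula.

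The only step needing care is the Ramanujan sum identity. I would prove it by Möbius inversion: the sum of $\omega^{b}$ over all $r$-th roots $\omega$ equals $r$ when $r\mid b$ and $0$ otherwise, and this total splits by exact order $d\mid r$.
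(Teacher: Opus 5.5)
Your proof is correct. The factorization $F_j(y)=\bigl(1-(-y)^r\bigr)^{n/r}$ with $r=n/\gcd(n,j)$ holds, and so does the resulting coefficient $(-1)^{k+k/r}\binom{n/r}{k/r}$ when $r\mid k$. The restriction to $r\mid\gcd(n,k)$ follows, and the identification of $\sum_{\mathrm{ord}(\zeta^j)=r}\zeta^{-jb}$ with the Ramanujan sum $C_r(b)$ by M\"obius inversion is right.

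The paper gives no argument of its own for this lemma; it simply imports the formula from the cited reference, which, as I recall, treats subset sums in general finite abelian groups and then specializes to $\mathbb{Z}_n$. Your route gives a short self-contained derivation using only additive characters and the identity $\prod_{\omega^r=1}(1+y\omega)=1-(-y)^r$. The citation offers generality that the paper does not need.

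The paper only uses the case $\gcd(k,n)=1$ (Corollary~\ref{N(mc)}), and there your argument simplifies further. For $j\neq 0$ one has $r>1$ with $r\mid n$, hence $r\nmid k$ and $[y^k]F_j=0$. Only the $j=0$ term $\frac{1}{n}\binom{n}{k}$ survives, and no Ramanujan sums are needed.

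That case also has an even more elementary proof. Translating a $k$-subset by $t$ shifts its sum by $kt$, and $t\mapsto kt$ is a bijection of $\mathbb{Z}_n$. So all the $M(k,b)$ coincide, and each equals $\binom{n}{k}/n$. This is the same scaling trick the paper uses in the proof of Theorem~\ref{C123design}.
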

		Under the isomorphism between the multiplicative group $\mathbb{F}_{q}^*$ and the additive group $\mathbb{Z}_{q-1}$, Lemma \ref{countNKB} translates directly to a result on subsets of $\mathbb{F}_{q}^*$ with a prescribed product.
		%计数 √
		\begin{Corollary}
			\label{N(mc)} For any prime power $q$, let $n=q-1$. For a nonzero element $c \in \mathbb{F}^{*}_{q}$, let $N(k,c)$ denote the number of k-subsets of $\mathbb{F}^{*}_{q}$ whose elements multiply to c. If $\gcd(k,n)=1$, then $N(k,c) = \frac{1}{n}\binom{n}{k}$ which is independent of $c$.
		\end{Corollary}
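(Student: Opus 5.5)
Via a fixed generator $\alpha$ of $\mathbb{F}_{q}^{*}$, the map $\alpha^{i}\mapsto i$ is a group isomorphism $\phi:\mathbb{F}_{q}^{*}\to\mathbb{Z}_{q-1}=\mathbb{Z}_{n}$. It carries products to sums and $k$-subsets to $k$-subsets, so a $k$-subset $S\subseteq\mathbb{F}_q^*$ has $\prod_{s\in S}s=c$ exactly when $\sum_{x\in\phi(S)}x=\phi(c)$. Consequently $N(k,c)=M(k,\phi(c))$, and Lemma \ref{countNKB} with $b=\phi(c)$ gives a closed formula. The rest is to simplify that formula under the hypothesis $\gcd(k,n)=1$.

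If $\gcd(n,k)=1$, the sum in Lemma \ref{countNKB} runs only over $r=1$. The sign is $(-1)^{k+k}=1$ and the binomial factor is $\binom{n}{k}$. Also $C_{1}(b)=\sum_{d\mid\gcd(1,b)}\mu(1/d)d=\mu(1)\cdot 1=1$ for every $b$; this includes $b=0$, where $\gcd(1,0)=1$. Hence $M(k,b)=\frac1n\binom nk$ for all $b$, so $N(k,c)=\frac1n\binom{n}{k}$ for all $c\in\mathbb{F}_q^*$, which is independent of $c$.

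As a check that avoids Lemma \ref{countNKB}, one can argue by symmetry. Multiplying every element of a $k$-subset by a fixed $a\in\mathbb{F}_q^*$ is a bijection on $k$-subsets that multiplies the product by $a^{k}$. Since $\gcd(k,q-1)=1$, the map $a\mapsto a^k$ is a bijection of $\mathbb{F}_q^*$, so $N(k,c)=N(k,ca^k)$ takes the same value at every $c$. Summing over the $n$ values of $c$ gives $nN(k,c)=\binom nk$. Neither route has any real obstacle; the only care needed is the degenerate $C_1(0)$ term in the first route.
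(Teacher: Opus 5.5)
Your main argument is correct and is exactly the paper's proof: you pass to $\mathbb{Z}_{n}$ via discrete logarithms, and $\gcd(n,k)=1$ collapses the formula of Lemma \ref{countNKB} to the single $r=1$ term $\frac{1}{n}\binom{n}{k}$. Your scaling check is also valid and self-contained: $S\mapsto aS$ multiplies the product by $a^{k}$, and $a\mapsto a^{k}$ is a bijection of $\mathbb{F}_q^{*}$ when $\gcd(k,q-1)=1$; this is the same device the paper later uses with $h^{6}$ in the proof of Theorem \ref{C123design}.
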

		\begin{proof}		
			The isomorphism $\mathbb{F}_{q}^* \cong \mathbb{Z}_{n}$ maps the product of elements in $\mathbb{F}_{q}^*$ to the sum of their discrete logarithms in $\mathbb{Z}_n$. Thus, $N(k, c) = M(k, b)$, where $b$ is the discrete logarithm of $c$. The condition $\gcd(k, n)=1$ ensures that $M(k, b) = \frac{1}{n}\binom{n}{k}$ by Lemma \ref{countNKB}.
		\end{proof}
		
		Equipped with these combinatorial tools, we now present the main theorem of this subsection.
		%GDRS --> t-design  √
		\begin{Theorem}\label{DRS-q2}
			Let $\mathcal{C}$ be a \textit{DRS} code with length $n=q+1$, dimension $k$, and minimum distance $d=q-k+2$. If $\gcd(k-1,q-1)=1$, then the codewords of weight $d$ in $\mathcal{C}$ form a $q$-ary $2$-$(q+1,q-k+2,\frac{1}{q-1}\binom{q-1}{k-1} )$ design.
		\end{Theorem}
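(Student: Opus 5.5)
By Lemma \ref{DRS-3transitive}, $\Aut(\mathcal{C})$ is triply and hence doubly transitive. So Theorem \ref{transitive-design} with $t=2$ reduces the statement to a local count. Fix the two coordinates labelled by the evaluation points $0$ and $\infty$. We must show that for every pair $a,b\in\mathbb{F}_q^*$, exactly $\frac{1}{q-1}\binom{q-1}{k-1}$ codewords of weight $d=q-k+2$ satisfy $f(0)=a$ and $f(\infty)=b$.

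The first step is to describe these codewords explicitly. A codeword of weight $d=n-k+1$ has exactly $k-1$ zeros among the $q+1$ coordinates. We require $f(0)=a\neq 0$, and $f(\infty)=b\neq0$ means $\deg f=k-1$ exactly. Then all $k-1$ zeros lie in $\mathbb{F}_q^*$ and are exactly the $k-1$ roots of $f$. Hence
\[
f(x)=b\prod_{r\in R}(x-r)
\]
for a $(k-1)$-subset $R\subseteq\mathbb{F}_q^*$. Conversely, every such $f$ has weight exactly $q+1-(k-1)=d$, with nonzero values at $0$ and $\infty$. So $f$ is determined by $b$ and $R$, and distinct $R$ give distinct codewords.

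The remaining condition is $f(0)=b(-1)^{k-1}\prod_{r\in R}r=a$, that is,
\[
\prod_{r\in R} r=(-1)^{k-1}a b^{-1},
\]
which is a fixed element $c\in\mathbb{F}_q^*$. The number of covering codewords is therefore $N(k-1,c)$ in the notation of Corollary \ref{N(mc)}. Since $\gcd(k-1,q-1)=1$, that corollary gives $N(k-1,c)=\frac{1}{q-1}\binom{q-1}{k-1}$, independent of $a$ and $b$. Theorem \ref{transitive-design} then yields the claimed $q$-ary $2$-$(q+1,q-k+2,\frac{1}{q-1}\binom{q-1}{k-1})$ design.

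The main point to verify carefully is that the coordinates $0$ and $\infty$ are legitimate to fix. Theorem \ref{transitive-design} allows any fixed $2$-set, so this is fine. We also need the automorphisms of the DRS code used there to map weight-$w$ codewords covering $\bm{x}$ to codewords covering $h(\bm{x})$, which is part of that theorem's hypothesis setup. A minor edge case is $k=1$: then $R=\emptyset$, $f=b$ is constant, and $f(0)=a$ forces $a=b$. This would break uniformity, so $k\ge2$ is implicitly needed; the count $N(k-1,c)$ is used for $k\ge 2$. A cheap sanity check is to compare with the total $A_d=(q-1)\binom{q+1}{d}$ via $|\mathcal{A}|=\lambda\binom{n}{2}(q-1)^2/\binom{d}{2}$.
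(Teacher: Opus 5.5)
Your proof is correct, and it reaches the paper's count by a shorter route. Both arguments use the same two ingredients. The first is double transitivity of $\Aut(\mathcal{C})$ (Lemma \ref{DRS-3transitive}) fed into Theorem \ref{transitive-design}. The second is the subset-product count of Corollary \ref{N(mc)}.

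The difference is the choice of fixed coordinates. The paper fixes the coordinates labelled $0$ and $1$, so $\infty$ may or may not be a zero of $f$. It therefore splits into $\deg f=k-1$ and $\deg f=k-2$, and pushes the zero sets into $\mathbb{F}_q^*\setminus\{-1\}$ via $\phi(r)=r/(1-r)$. It obtains $\overline{N}(k-1,z')$ and $\overline{N}(k-2,-z')$ separately, then reassembles them into $N(k-1,z')$ according to whether $-1$ belongs to the subset; $-1$ is where $\phi$ would send $\infty$.

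Your pair $\{0,\infty\}$ amounts to the same computation with that projective change of coordinates done in advance. Because $f(\infty)=b\neq 0$ is the leading coefficient, both the degree and the scalar are pinned down. There is no case split, and you land on $N(k-1,(-1)^{k-1}ab^{-1})$ at once.

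Your version also avoids a small imprecision in the paper's write-up. There $S_z$ is described as a set of polynomials with prescribed ratio $f(0)/f(1)$, which literally contains all $q-1$ scalar multiples of each solution, but it is then counted as a set of zero sets. In your version $(b,R)$ determines the covering codeword uniquely, so Lemma \ref{repeat_q-1} is not even needed. The paper's route buys nothing essential here; it only makes explicit how codewords vanishing at $\infty$ are accounted for, which your choice of coordinates makes invisible.

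One minor correction to your closing remark: no extra assumption $k\ge 2$ is required. Since $\gcd(0,q-1)=q-1$, the hypothesis already rules out $k=1$ unless $q=2$. In that case $a=b=1$ is forced, and the $[3,1,3]_2$ repetition code gives a $2$-$(3,3,1)$ design, consistent with $\frac{1}{q-1}\binom{q-1}{0}=1$.
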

		\begin{proof}
			Let $\bm{c}=(f(0),f(1),f(\alpha),f(\alpha ^{2}),\dots,f(\alpha^{q-2}), f(\infty)) $ be a codeword of weight $d$ in $\mathcal{C}$. This implies that $f$ has $k-1$ distinct zeros in $\mathbb{F}_{q}\cup \{\infty\}$.
			Define the zero sets by
			\[
			Z(f) = \{z\in \mathbb{F}_{q}\cup \{\infty\} : f(z)=0 \}, \quad Z'(f) = \{z\in \mathbb{F}_{q} : f(z)=0 \}.
			\]
			Since $f \in P_k[x]$ is a polynomial of degree at most $k-1$, the condition $|Z(f)|=k-1$ implies that $\deg(f)$ must be either $k-1$ or $k-2$.
			
			By Lemma \ref{DRS-3transitive}, the automorphism group of $\mathcal{C}$ is triply transitive. Moreover, by Lemma \ref{repeat_q-1}, each support of a codeword of weight $d$ in $\mathcal{C}$ corresponds to exactly $q-1$ distinct codewords.
			Combining these facts with Theorem \ref{transitive-design}, we obtain that for any vector $\bm{v}$ of weight two with $v_{i}=a$ and $v_{j} = b$ where $a,b\in\mathbb{F}_{q}^{*}$, the number $\lambda_{\bm{v}}$ of codewords $\bm{c}$ that cover $\bm{v}$ equals
			\begin{align*}
				\lambda_{\bm{v}} &=|\left \{\bm{c} \in \mathcal{C} : c_{i}=a,c_{j}=b,\mathrm{wt}(\bm{c})=d \right \}| \\
				&=|\left \{\bm{c}\in \mathcal{C} : c_{1}=a,c_{2}=b,\mathrm{wt}(\bm{c})=d \right \}| \\
				&=|\{f \in P_{k}[x] : f(0)/f(1)=z, |Z(f)|=k-1 \}|,
			\end{align*}
			where $z = a/b$.
			Denote by $S_{z}$ the set of polynomials in the last equation above.
			We now enumerate such polynomials by considering the degree of $f$.
			
			\noindent\textbf{Case 1:} $\deg(f)=k-1$. Then $f(\infty)\ne 0$, and $|Z'(f)|=k-1$. Let
			\[
			f(x)=\tau \prod_{r\in Z(f)}(x-r), \quad \tau \in \mathbb{F}_{q}^{*}, \quad Z(f) \subseteq \mathbb{F}_{q}\setminus\{0,1\}.
			\]
			Evaluating at $x=0$ and $x=1$ gives
			\[
			f(0)=\tau \prod_{r\in Z(f)}(-r),\quad f(1)=\tau \prod_{r\in Z(f)}(1-r).
			\]
			Dividing these equations yields
			$$\prod_{r\in Z(f)} \frac{r}{1-r}=(-1)^{k-1}z=z'. $$		
			Define the bijection $\phi : \mathbb{F}_{q} \setminus\{0,1\} \to \mathbb{F}_{q} \setminus\{0,-1\}$ by $\phi(r)=\frac{r}{1-r}$. Then
			$$ \prod_{r\in Z(f)} \phi(r) = z'.$$
			Hence, the number $x$ of polynomials $f\in P_{k}[x]$ satisfying that $f\in S_{z}$ and $\deg(f) =k-1$ is		
			\begin{align*}
				x&=\left |\left \{f \in P_{k}[x] : \deg(f)=k-1, f(0)/f(1)=z, |Z(f)|=k-1 \right \}\right | \\
				&=\left |\left  \{Z(f) \subseteq  \mathbb{F}_{q} \setminus\{0,1\} : |Z(f)|=k-1, \textstyle\prod_{r\in Z(f)} \phi(r)=z'\right \}\right |\\
				&=\left |\left \{\phi(Z(f)) \subseteq  \mathbb{F}_{q} \setminus\{0,-1\} : |\phi(Z(f))|=k-1, \textstyle\prod_{r\in \phi(Z(f))} r=z'\right \}\right |\\
				&=\overline{N}(k-1,z'),
			\end{align*}
			where $\overline{N}(k,c)$ denotes the number of $k$-subsets of $\mathbb{F}^{*}_{q}\setminus \{-1\}$ whose elements multiply to $c$.
			
			\noindent\textbf{Case 2:} $\deg(f)=k-2$. Then $f(\infty) = 0$, and $|Z'(f)|=k-2$. Let
			\[
			f(x)=\tau \prod_{r\in Z'(f)}(x-r), \quad \tau \in \mathbb{F}_{q}^{*}, \quad Z'(f) \subseteq \mathbb{F}_{q}\setminus\{0,1\}.
			\]
			Similar to Case $1$, we have
			$$\prod_{r\in Z'(f)} \frac{r}{1-r}=(-1)^{k-2}z=-z'.$$
			Applying the same bijection $\phi$, we obtain that the number $y$ of polynomials $f\in P_{k}[x]$ satisfying that $f\in S_{z}$ and $\deg(f) =k-2$ is	
			\begin{align*}
				y &=\left |\left \{f \in P_{k}[x] : \deg(f)=k-2, f(0)/f(1)=z, |Z(f)|=k-1 \right \}\right | \\
				&=\left |\left \{Z'(f) \subseteq  \mathbb{F}_{q} \setminus\{0,1\} : |Z'(f)|=k-2, \textstyle\prod_{r\in Z'(f)} \phi(r)=-z'\right \}\right |\\
				&=\left |\left \{\phi(Z'(f)) \subseteq  \mathbb{F}_{q} \setminus\{0,-1\} : |\phi(Z'(f))|=k-2, \textstyle\prod_{r\in \phi(Z'(f))} r=-z'\right \}\right |\\
				&=\overline{N}(k-2,-z').
			\end{align*}
			
			Combining both cases yields that
			\begin{align*}
				\lambda_{\bm{v}} = |S_{z}|
				= x+y
				= \overline{N}(k-1,z')+\overline{N}(k-2,-z')
				=N(k-1,z')=\frac{1}{q-1}\binom{q-1}{k-1},
			\end{align*}
			where the last equality follows from Corollary \ref{N(mc)}. Then we conclude the proof.		
		\end{proof}

		\subsection{Trace codes $\mathcal{C}_{\{1,2,3\}}$}
		
		A class of trace codes has been studied \cite{BCH-4-design,Yan}, in which infinite families of $4$-designs are supported.
		We further investigate these codes in this subsection.
		Two infinite families of $q$-ary $2$-designs will be disclosed in the codewords of the two lowest weights.
		In short, this class of trace codes with minimum distance $d$ satisfies that $T_{w}=4$ and $T_{w}^{q}=2$ for $w = d,d+1$.

		For positive integers $k$ and $l$, the \emph{elementary symmetric polynomial} (ESP) of degree $l$ in $k$ variables $u_{1},u_{2},\ldots,u_{k}$ is defined as
		$$\sigma_{l}(u_{1},\ldots,u_{k}) = \sum_{I\subseteq [k],|I|=l }\prod_{j\in I}u_{j}.$$
		$\sigma_{l}(u_{1},\ldots,u_{k})$ is also denoted by $\sigma_{l}(B)$ where $B = \{u_{1},\ldots,u_{k}\}$.
		
		Throughout this subsection, we assume that $q=2^{m}$ where $m\ge 5$ is an odd integer. Let $U=\{u\in \mathbb{F}_{q^{2}}^{*} : u^{q+1}=1\}$ be the subgroup of $\mathbb{F}^{*}_{q^{2}}$ of order $q+1$. For any integer $k$ with $1\le k \le q+1$, let $\binom{U}{k}$ be the set of all $k$-subsets of $U$. Define two families of block sets by these polynomials:
		\begin{align}
			&\mathcal{B}_{k,l} = \left\{ B \in \binom{U}{k} :\sigma_{l}(B) = 0 \right\}. \label{B63}\\
			&\mathcal{B}_{k,l}^{b} = \left\{ B \in \binom{U}{k} :\sigma_{l}(B-a) = 0 \text{ for some }a\in B \right\}, \label{B53}
		\end{align}
		where $B-a = \{u-a : u\in B\}$.
%		For a block $B=\{u_{1},u_{2},\dots,u_{k}\}\in\binom{U}{k}$, the evaluated polynomial can be expanded as
%		$$\sigma_{l}(B-a)=\sum_{I\subseteq [k],|I|=l }\prod_{j\in I}(u_{j}-a) = \sum_{i=0}^{l}(-a)^{l-i}\binom{k-i}{l-i}\sigma_{i}(B).$$
		For a detailed analysis of $\mathcal{B}_{k,l}$ and $\mathcal{B}_{k,l}^{b}$, we refer the reader to \cite{BCH-4-design,Yan}. The following two theorems are key foundational results, establishing that these block sets form classical $4$-designs.
		
		\begin{Lemma}
			\cite[Theorem 2]{BCH-4-design} \label{B63-4design} For an odd integer $m\ge 5$ and $q=2^{m}$, the incidence structure $(U,\mathcal{B}_{6,3})$ is a $4$-$(q+1,6,(q-8)/2)$ design.
		\end{Lemma}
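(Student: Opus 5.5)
Lemma \ref{B63-4design} says that the $6$-subsets $B\subseteq U$ with $\sigma_{3}(B)=0$ form a $4$-$(q+1,6,(q-8)/2)$ design. I will split the proof into three parts. First I show that the group generated by multiplication $u\mapsto cu$ ($c\in U$), inversion $u\mapsto u^{-1}$ and the Frobenius $u\mapsto u^{2}$ preserves $\mathcal{B}_{6,3}$; this gives $1$-transitivity but nothing more. Second, I get $2$-, $3$- and $4$-transitivity-like behaviour from the projective group $\PGL(2,q)$, which acts on $U$ through M\"obius transformations preserving the unit circle. Third, I prove the $4$-design property directly by counting.

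\textbf{Invariance.} For $c\in U$ we have $\sigma_{3}(cB)=c^{3}\sigma_{3}(B)$. Since $u^{-1}=u^{q}$ on $U$, inversion gives $\sigma_{3}(B^{-1})=\sigma_{3}(B)/\sigma_{6}(B)$. Hence $\mathcal{B}_{6,3}$ is invariant under these maps.

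\textbf{Reformulation of the condition.} The key step is to rewrite the condition on $B=\{u_{1},\dots,u_{6}\}$ in terms of the polynomial $g(x)=\prod_{i}(x-u_{i})$. Since $B\subseteq U$, the reversed conjugate of $g$ is a scalar multiple of $g$; in characteristic $2$ this makes the condition ``$\sigma_{3}=0$'' a self-reciprocal constraint on $g$. I would then transfer everything to $\mathbb{F}_{q}\cup\{\infty\}$ using a Cayley map $\psi:\mathrm{PG}(1,q)\to U$, $\psi(x)=(x+\theta)/(x+\theta^{q})$ with $\theta\in\mathbb{F}_{q^{2}}\setminus\mathbb{F}_{q}$. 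Under $\psi$, the condition $\sigma_{3}(B)=0$ becomes a polynomial condition on the $6$-subset $\psi^{-1}(B)\subseteq \mathrm{PG}(1,q)$. However, M\"obius maps do not preserve $\sigma_{3}=0$ in general, since $\sigma_{3}$ is not projectively invariant. So I do not expect the automorphism group to be $4$-transitive, and the $4$-design property has to be proved by a direct count.

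\textbf{The count.} Fix four distinct points $u_{1},\dots,u_{4}\in U$ with elementary symmetric functions $e_{1},\dots,e_{4}$. Completing them by a pair $\{x,y\}\subseteq U\setminus\{u_{1},\dots,u_{4}\}$, the condition reads
\[
\sigma_{3}=e_{3}+e_{2}(x+y)+e_{1}xy=0 .
\]
This is linear in $s=x+y$ and $p=xy$. I would then:
\begin{enumerate}
\item parametrize $p\in U$ (note $p\in U$ automatically when $x,y\in U$) and solve for $s=(e_{3}+e_{1}p)/e_{2}$, taking care of the degenerate case $e_{2}=0$;
\item use the criterion that $X^{2}+sX+p$ has both roots in $U$, which in characteristic $2$ is $s^{q}=s\,p^{-1}$ together with a trace condition $\Tr\bigl(p/s^{2}\bigr)=1$ over $\mathbb{F}_{q}$ (the roots must lie outside $\mathbb{F}_{q}$);
\item check that the first condition holds automatically, using $e_{i}^{q}=e_{4-i}/e_{4}$ for points of $U$;
\item count the $p\in U$ satisfying the trace condition via additive character sums over $U$ (Kloosterman-type sums); the oddness of $m$ is what should make the count independent of the quadruple;
\item subtract the pairs meeting $\{u_{1},\dots,u_{4}\}$, and divide by the ordering of $\{x,y\}$.
\end{enumerate}
The result should be $(q-8)/2$.

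\textbf{Main obstacle.} I expect the main difficulty to be showing that the character sum gives exactly $(q-8)/2$ for every quadruple. This is where the oddness of $m$ (so that $\gcd(3,q+1)=3$ behaves and the cubic-twist sums vanish) has to be used. The degenerate cases $e_{1}=0$ or $e_{2}=0$ must then be checked separately.
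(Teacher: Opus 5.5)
The paper only quotes this lemma, so your plan has to stand on its own, and at the decisive point it does not: step~4 is left as an expectation, and the mechanism you predict for it is not the real one. Your reduction through step~3 is correct, and the remaining count is elementary. Put $w=e_1p/e_3\in U$ and $\kappa=e_2^2/(e_1e_3)\in\mathbb{F}_q$. Then $p/s^2=\kappa/(w+w^{-1})$, and $r=1/(w+w^{-1})$ runs exactly twice over $\{r\in\mathbb{F}_q:\Tr_{q/2}(r)=1\}$ as $w$ runs over $U\setminus\{1\}$. So the number of admissible $p$ is $2\,|\{r:\Tr_{q/2}(r)=\Tr_{q/2}(\kappa r)=1\}|=q/2$ whenever $\kappa\neq0,1$. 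Equivalently, when $e_2^2\neq e_1e_3$, the map $M\colon x\mapsto(e_3+e_2x)/(e_2+e_1x)$ sends $U$ to $U$ and its matrix squares to $(e_2^2+e_1e_3)I$ in characteristic $2$; hence $M$ is an involution of $U$ with the single fixed point $\sqrt{e_3/e_1}$.

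What your plan misses is the case $\kappa=1$, i.e.\ $e_2^2=e_1e_3$. There $\sigma_3=e_1(x+a)(y+a)$ with $a=e_2/e_1\in U\setminus\{u_1,\dots,u_4\}$, so every pair through $a$ qualifies and the count is $q-4$, not $(q-8)/2$. This is the only place where oddness of $m$ is used. The quantity $e_2^2+e_1e_3$ is the mod-$2$ reduction of the degree-$2$ invariant of the quartic $\prod_i(X+u_iZ)$. It vanishes exactly when the cross-ratio $\lambda\in\mathbb{F}_q$ of the four points satisfies $\lambda^2+\lambda+1=0$. That is impossible iff $3\nmid q-1$, i.e.\ iff $m$ is odd; this is what your remark $\gcd(3,q+1)=3$ really encodes. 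For even $m$ the constancy genuinely fails, and no twisted character sums are involved.

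You must also prove $e_1\neq0$ rather than treat it as a case to check. If $e_1=0$, then $e_3=0$ and $\sigma_3=e_2(x+y)\neq0$ for $x\neq y$ (note $e_2\neq0$, since $X^4+e_4$ is a fourth power), giving count $0$. This case is excluded because, for $c\neq0$, at most one pair of $U$ has sum $c$, namely the roots of $X^2+cX+c^{1-q}$.

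Finally, step~5 needs that no $u_i$ is the fixed point and that $M(u_i)\notin\{u_1,\dots,u_4\}$. Both follow from distinctness: $e_1u_1^2+e_3=\prod_{j\neq1}(u_1+u_j)$, and $M(u_1)=u_2$ iff $(u_1+u_2)^2(u_3+u_4)=0$. Then exactly four pairs are removed, giving $q/2-4=(q-8)/2$.

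Your dismissal of the projective group is also wrong, and correcting it gives the cleanest proof. In characteristic $2$, the coefficient of $X^3Z^3$ in $F(aX+bZ,cX+dZ)$, for a binary sextic $F$, equals $(ad+bc)^3$ times that of $F$. This is because every other contribution carries an even coefficient $\binom{j}{i}\binom{6-j}{3-i}$ with $j\neq3$. Hence $\sigma_3(B)=0$ is $\PGL(2,q^2)$-invariant, and $\mathcal{B}_{6,3}$ is preserved by $\Stab_U\cong\PGL(2,q)$, which is $3$-transitive on $U$. The paper relies on exactly this in Lemma~\ref{3transitive} and in the Claim inside the proof of Theorem~\ref{C123design}.

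Transport to $\mathrm{PG}(1,q)$ and normalize a quadruple to $\{\infty,0,1,\lambda\}$. The condition on $\{x,y\}$ becomes $\sigma_2(1,\lambda,x,y)=0$, i.e.\ $(x+1+\lambda)(y+1+\lambda)=\lambda^2+\lambda+1$. This is an involution of $\mathrm{PG}(1,q)$ with one fixed point, pairing $\infty,0,1,\lambda$ with $1+\lambda,\ \lambda/(1+\lambda),\ 1/\lambda,\ \lambda^2$. It gives $(q-8)/2$ at once and shows that degeneracy occurs exactly when $\lambda^2+\lambda+1=0$.
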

		\begin{Lemma}
			\cite[Theorem 8]{Yan}\label{B53-4design} For an odd integer $m\ge 5$ and $q=2^{m}$, the incidence structure $(U,\mathcal{B}_{5,3}^{b})$ is a $4$-$(q+1,5,5)$ design.
		\end{Lemma}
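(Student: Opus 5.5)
The plan is to reduce the defining condition of $\mathcal{B}_{5,3}^{b}$ to an explicit algebraic relation among the points of a block. Then, for a fixed $4$-subset $T\subseteq U$, count the blocks containing $T$ directly, exploiting the $3$-transitive group acting on $U$ to normalize where possible.

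\emph{Step 1: rewrite the condition.} Let $B\in\binom{U}{5}$ and $a\in B$. The set $B-a$ contains $0$, so
\[
\sigma_{3}(B-a)=\sigma_{3}(\{u-a:u\in B\setminus\{a\}\})=\prod_{u\in B\setminus\{a\}}(u-a)\cdot\sum_{u\in B\setminus\{a\}}\frac{1}{u-a}.
\]
The product is nonzero, so $B\in\mathcal{B}_{5,3}^{b}$ if and only if there is a point $a\in B$ with $\sum_{u\in B\setminus\{a\}}(u-a)^{-1}=0$. Call such an $a$ a \emph{center} of $B$.

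\emph{Step 2: invariance.} The stabilizer of $U$ in $\PGL(2,q^{2})$ is isomorphic to $\PGL(2,q)$ and acts $3$-transitively on $U$. I would check that the center condition transforms covariantly under these Möbius maps. Applying $x\mapsto 1/(x-a)$ sends $a$ to $\infty$, and the condition becomes ``the other four images sum to $0$''. This is an affine condition, preserved by the stabilizer of $a$ (and hence of $\infty$) in the conjugated group. Consequently $\mathcal{B}_{5,3}^{b}$ is invariant under the group, so by $3$-transitivity it is automatically a $3$-design. The real work is strength $4$.

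\emph{Step 3: count blocks through a fixed $T=\{t_{1},\dots,t_{4}\}$.} A block $B=T\cup\{x\}$ arises in one of two ways.
\begin{enumerate}
\item[(i)] The center is some $t_{i}\in T$. Then $x$ is forced by $(x-t_{i})^{-1}=\sum_{j\ne i}(t_{j}-t_{i})^{-1}$. This case contributes at most $4$ blocks, and each contributes exactly when the resulting $x$ lies in $U\setminus T$.
\item[(ii)] The center is $x$ itself. Then $x\in U\setminus T$ must be a root of $F_{T}(x)=\sum_{i}(t_{i}-x)^{-1}$. Clearing denominators gives a polynomial of degree at most $3$ (note $\sigma_{1}$ terms simplify in characteristic $2$).
\end{enumerate}
The goal is to show that the number of distinct $x$ produced by (i) and (ii) together is $5$ for every $T$. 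To do this I would transfer the problem to $\mathrm{PG}(1,q)$ via the Cayley-type bijection between $U$ and $\mathbb{F}_{q}\cup\{\infty\}$, and use the group to normalize three points of $T$ to $\{0,1,\infty\}$. The remaining point $t\in\mathbb{F}_{q}\setminus\{0,1\}$ is then a single parameter, and both the membership test ``$x\in U$'' in (i) and the root count of the cubic in (ii) reduce to trace/quadratic-character conditions on $t$ over $\mathbb{F}_{2^{m}}$, where oddness of $m$ (so $\Tr(1)=1$ and $\gcd(3,q-1)=1$) is what should make the counts balance.

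\emph{Expected obstacle.} The main difficulty is this last step, for two reasons. First, the cases must be coordinated. A cubic over $\mathbb{F}_{q}$ in characteristic $2$ has $0$, $1$ or $3$ roots depending on a trace condition, and these roots must land in $U$ rather than merely in $\mathbb{F}_{q^{2}}$. One must show that whenever case (ii) contributes $j$ roots, exactly $5-j$ of the four candidates in (i) lie in $U\setminus T$. Second, coincidences must be handled: the same $x$ arising from two different centers, or from both (i) and (ii), must be counted once. I would first try to derive the explicit trace criterion for each candidate in (i) and show that it is complementary to the root-count criterion for the cubic in (ii). If that becomes unwieldy, the fallback is to show that the number of blocks through $T$ equals $\lambda_{3}$ minus an expression that is manifestly independent of $T$ by the $3$-homogeneity. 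This would also pin down $\lambda=5$ via $|\mathcal{B}_{5,3}^{b}|=\lambda\binom{q+1}{4}/5$.
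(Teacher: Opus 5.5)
The paper gives no proof of this lemma; it quotes it from Yan's work, so your argument has to stand on its own.

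\textbf{Steps 1 and 2 are sound.} The center condition is well defined and $\PGL(2,q^2)$-invariant. Any two maps sending $a$ to $\infty$ differ by some $x\mapsto\alpha x+\beta$, and $\sum_{j=1}^{4}(\alpha w_j+\beta)=\alpha\sum_j w_j$ in characteristic $2$. So you do get a $3$-design.

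\textbf{Step 3 is only a plan, and it aims at the wrong mechanism.} This step is the entire content of the lemma. You expect trace conditions on $t$, a cubic with $0$, $1$ or $3$ roots, and the two cases compensating each other. In fact nothing in your case analysis depends on $T$.

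In case (ii), $\sum_i(x+t_i)^{-1}=P'(x)/P(x)$ with $P(x)=\prod_i(x+t_i)$. In characteristic $2$, $P'(x)=\sigma_1(T)x^2+\sigma_3(T)$, so there is at most one root. For four distinct points of $U$ we always have $\sigma_1(T)\neq 0$. Indeed $\sigma_1^q=\sigma_3/\sigma_4$, so $\sigma_1=0$ would force $\sigma_3=0$, and then $P=(x^2+\sqrt{\sigma_2}\,x+\sqrt{\sigma_4})^2$ would have repeated roots. The root lies in $U\setminus T$: from $x^2=\sigma_3/\sigma_1=\sigma_4\sigma_1^{q-1}$ we get $x^{2(q+1)}=\sigma_4^{q+1}\sigma_1^{q^2-1}=1$, and $P'(t_k)=\prod_{j\neq k}(t_k+t_j)\neq 0$.

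In case (i), transfer to $\mathrm{PG}(1,q)$ (legitimate by your Step 2) and send $t_i$ to $\infty$ by an element of $\PGL(2,q)$. The forced fifth point is then $w_1+w_2+w_3\in\mathbb{F}_q$, which differs from each $w_j$. So all four candidates always lie in $U\setminus T$.

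Hence every $4$-set $T$ lies in exactly five pairs (block, center): four with the center in $T$ and one with the center outside. Your bookkeeping ``if (ii) gives $j$ roots then $5-j$ candidates of (i) survive'' could not have worked anyway, since $j=0$ would need five of four candidates.

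\textbf{The missing idea is that a $5$-set never has two centers.} This is what turns five (block, center) pairs into five blocks. Coincidences are not something to ``count once''. Since five is the maximum possible, any coincidence would give fewer than five blocks through $T$, so you must show coincidences never occur.

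In the $\mathrm{PG}(1,q)$ model, place one center at $\infty$, so the other points $w_1,\dots,w_4\in\mathbb{F}_q$ satisfy $\sum_j w_j=0$. Suppose $w_1$ is also a center, and set $d_j=w_j+w_1$. Then $d_4=d_2+d_3$ and $d_2^{-1}+d_3^{-1}+(d_2+d_3)^{-1}=0$, i.e.\ $d_2^2+d_2d_3+d_3^2=0$. So $d_2/d_3\in\mathbb{F}_q$ would be a primitive cube root of unity. This is impossible exactly because $m$ is odd, so $3\nmid 2^m-1$. That is where $\gcd(3,q-1)=1$ really enters, not in counting roots of a cubic.

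With this fact, your framework gives a short, complete proof with $\lambda=5$. Without it, Step 3 is unproved. Your fallback (``$\lambda_3$ minus something manifestly independent of $T$'') is too unspecified to check.
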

		
		We now introduce the linear code of interest.
		Let $\alpha$ be a generator of $\mathbb{F}_{q^{2}}^{*}$ and let $\gamma = \alpha^{q-1}$, which generates the cyclic subgroup $U$. Define the trace code $\mathcal{C}_{\{1,2,3\}}$ over $\mathbb{F}_{q}$ as
		\begin{align}\label{C123}
			\mathcal{C}_{\{1,2,3\}} = \{\bm{c}_{(a,b,c)} : a,b,c \in \mathbb{F}_{q^{2}}\},
		\end{align}
		where the codeword is given by
		$$\bm{c}_{(a,b,c)} = (\mathrm{Tr}_{q^{2}/q}(a\gamma^{i}+b\gamma^{2i}+c\gamma^{3i}))_{i=0}^{q}.$$
		By \cite[Theorem 34]{BCH-4-design}, this code has parameters $[q+1,6,q-5]$.
		
		\begin{Lemma}
			\cite[Lemma 33]{BCH-4-design} \label{Tr60} Let $f(u)=\Tr_{q^{2}/q}(au+bu^{2}+cu^{3})$ where $(a,b,c) \in \mathbb{F}_{q}^{3}\setminus \{\mathbf{0}\}$. Define $Z(f)=\{u\in U : f(u)=0\}$. Then $|Z(f)|=6$ if and only if there exist $B=\{u_{1},\dots,u_{6}\} \in \mathcal{B}_{6,3}$ and $\tau \in \mathbb{F}_q^*$ such that $a={\tau\sigma_{2}(B)}\big/{\sqrt{\sigma_{6}(B)}}$, $b={\tau\sigma_{1}(B)}\big/{\sqrt{\sigma_{6}(B)}}$ and $c={\tau}\big/{\sqrt{\sigma_{6}(B)}}$. Under these conditions, $f(u)$ can be written as
			$f(u)=\frac{c}{u^{3}}\prod_{j=1}^{6}(u+u_{j}).$
		\end{Lemma}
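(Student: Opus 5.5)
The plan is to turn the trace condition on $U$ into a polynomial identity of degree $6$ and then compare coefficients. For $u\in U$ we have $u^{q}=u^{-1}$. So for $u\in U$,
\[
f(u)=au+bu^{2}+cu^{3}+a^{q}u^{-1}+b^{q}u^{-2}+c^{q}u^{-3},
\]
and hence $u^{3}f(u)=g(u)$ with
\[
g(x)=cx^{6}+bx^{5}+ax^{4}+0\cdot x^{3}+a^{q}x^{2}+b^{q}x+c^{q}\in\mathbb{F}_{q^{2}}[x].
\]
Since $0\notin U$, the set $Z(f)$ is exactly the set of roots of $g$ lying in $U$. Because $(a,b,c)\neq\mathbf{0}$, the polynomial $g$ is not identically zero, so $|Z(f)|\le\deg g\le 6$.

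Next I would show that $|Z(f)|=6$ forces $c\neq0$. If $c=0$, then $c^{q}=0$, so $g=x\,h(x)$ with $\deg h\le 4$. Then $g$ has at most $4$ roots in $U$, which is a contradiction. So assume $c\ne0$ and $|Z(f)|=6$, say $Z(f)=B=\{u_1,\dots,u_6\}$. Since the $u_j$ are distinct and $\deg g=6$, we get $g(x)=c\prod_{j}(x+u_j)$; the signs disappear in characteristic $2$. Comparing coefficients gives
\[
b=c\sigma_1(B),\quad a=c\sigma_2(B),\quad \sigma_3(B)=0,\quad a^{q}=c\sigma_4(B),\quad b^{q}=c\sigma_5(B),\quad c^{q}=c\sigma_6(B).
\]
Thus $B\in\mathcal{B}_{6,3}$ and $c^{q-1}=\sigma_6(B)$.

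To obtain the stated normalization, note that $\sigma_6(B)\in U$ and $|U|=q+1$ is odd. Hence squaring is a bijection on $U$, and $s=\sqrt{\sigma_6(B)}$ is a well-defined element of $U$. The solutions of $c^{q-1}=s^{2}$ are exactly $c=\tau/s$ with $\tau\in\mathbb{F}_q^{*}$. Indeed, $(1/s)^{q-1}=s^{1-q}=s\cdot s^{-q}=s\cdot s=s^{2}$, using $s^{q}=s^{-1}$. Any two solutions differ by an element of $\ker(x\mapsto x^{q-1})=\mathbb{F}_q^{*}$. This yields the formulas for $a,b,c$, and the factorization $f(u)=\frac{c}{u^{3}}\prod_j(u+u_j)$ follows directly from $u^3f(u)=g(u)$.

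For the converse, take $B\in\mathcal{B}_{6,3}$ and $\tau\in\mathbb{F}_q^{*}$, and define $a,b,c$ by the given formulas. I must check that $g:=c\prod(x+u_j)$ really has the shape $cx^6+bx^5+ax^4+a^qx^2+b^qx+c^q$. The $x^{3}$ coefficient vanishes because $\sigma_3(B)=0$, and the $x^5,x^4$ coefficients are right by definition. For the remaining coefficients I expect the main (but mild) step to be the reciprocity identity $\sigma_l(B)^{q}=\sigma_{6-l}(B)/\sigma_6(B)$. This holds because conjugation inverts each $u_j\in U$, and $\sigma_l(u_1^{-1},\dots,u_6^{-1})=\sigma_{6-l}/\sigma_6$. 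With it, $c^{q}=c\sigma_6$ from the previous paragraph, and
\[
a^{q}=c^{q}\sigma_2^{q}=c\sigma_6\cdot\sigma_4/\sigma_6=c\sigma_4,
\]
and similarly $b^{q}=c\sigma_5$. Therefore $u^{3}f(u)=g(u)$ on $U$, so $f$ vanishes exactly on the six distinct points of $B$ and $|Z(f)|=6$.
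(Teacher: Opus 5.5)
Your argument is correct and complete. The paper gives no proof of this lemma; it is quoted from \cite[Lemma 33]{BCH-4-design}, so there is no in-paper argument to compare against. Your route is the natural self-contained proof:
\begin{enumerate}
\item multiply by $u^{3}$ and use $u^{q}=u^{-1}$ on $U$ to obtain $g(x)=cx^{6}+bx^{5}+ax^{4}+a^{q}x^{2}+b^{q}x+c^{q}$;
\item exclude $c=0$ by a degree count;
\item compare coefficients with $c\prod_{j}(x+u_{j})$;
\item for the converse, use $\sigma_{l}(B)^{q}=\sigma_{6-l}(B)/\sigma_{6}(B)$ for $B\subseteq U$.
\end{enumerate}
Each step checks, including $c^{q}=\tau s=c\,\sigma_{6}(B)$ for $c=\tau/s$ with $s=\sqrt{\sigma_{6}(B)}\in U$.

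Two small remarks. First, the statement as printed takes $(a,b,c)\in\mathbb{F}_{q}^{3}\setminus\{\mathbf{0}\}$, which must be a typo for $\mathbb{F}_{q^{2}}^{3}\setminus\{\mathbf{0}\}$. The code $\mathcal{C}_{\{1,2,3\}}$ uses $a,b,c\in\mathbb{F}_{q^{2}}$, and $c=\tau/\sqrt{\sigma_{6}(B)}$ lies in $\mathbb{F}_{q}$ only when $\sigma_{6}(B)=1$. You tacitly, and rightly, worked over $\mathbb{F}_{q^{2}}$. Since nothing in your argument uses $a,b,c\notin\mathbb{F}_{q}$, it covers either reading.

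Second, you only use that $q$ is even: the signs vanish, and $|U|=q+1$ is odd, so the square root stays in $U$. The standing hypotheses that $m$ is odd and $m\ge 5$ play no role in this lemma.
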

		\begin{Lemma}
			\cite[Lemma 29]{Yan} \label{Tr53} Let $f(u)=\Tr_{q^{2}/q}(au+bu^{2}+cu^{3})$ where $(a,b,c) \in \mathbb{F}_{q}^{3}\setminus \{\mathbf{0}\}$. Define $Z(f)=\{u\in U : f(u)=0\}$. Then $|Z(f)|=5$ if and only if there exist $B=\{u_{1},\dots,u_{5}\} \in \mathcal{B}_{5,3}^{b}$, a unique element $u_{i}\in B$ and $\tau \in \mathbb{F}_{q}^{*}$ such that
			$a={\tau(\sigma_{2}(B)+u_{i}\sigma_{1}(B))}\big/{\sqrt{u_{i}\sigma_{5}(B)}}$, $b=c{\tau(\sigma_{1}(B)+u_{i})}\big/{\sqrt{u_{i}\sigma_{5}(B)}}$ and $c={\tau}\big/{\sqrt{u_{i}\sigma_{5}(B)}}$. In this case, $f(u)$ can be written as
			$f(u)=\frac{c(u+u_{i})}{u^{3}}\prod_{j=1}^{5}(u+u_{j}). $
		\end{Lemma}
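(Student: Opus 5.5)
We would prove the lemma by turning $f$ into a polynomial of degree $6$ with roots in $U$ and counting its roots with multiplicity. Take $u\in U$. Then $u^{q}=u^{-1}$, and since $q$ is even, signs can be ignored. This gives
\[
f(u)=au+bu^{2}+cu^{3}+a^{q}u^{-1}+b^{q}u^{-2}+c^{q}u^{-3}=u^{-3}g(u),\qquad g(x)=cx^{6}+bx^{5}+ax^{4}+a^{q}x^{2}+b^{q}x+c^{q}.
\]
So $Z(f)$ is exactly the set of roots of $g$ in $U$. The coefficients $a,b,c$ should be read in $\mathbb{F}_{q^2}$, as in \eqref{C123}.

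First we dispose of $c=0$. In that case $g(x)=x\,(bx^{4}+ax^{3}+a^{q}x+b^{q})$. The factor $x$ has no root in $U$, so $|Z(f)|\le 4$. Hence we may assume $c\neq0$, so that $g$ has exactly $6$ roots counted with multiplicity in $\overline{\mathbb{F}}_{q}$. Suppose $u_1,\dots,u_5\in U$ are five distinct roots, and let $r$ be the sixth. The product of all six roots is $c^{q}/c=c^{q-1}$, and $c^{q-1}\in U$. Therefore $r=c^{q-1}/(u_1\cdots u_5)\in U$. If $r$ were a sixth distinct element we would have $|Z(f)|=6$. So $|Z(f)|=5$ forces $r=u_i$ for a unique $i$. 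Then $g(x)=c(x+u_i)\prod_{j=1}^{5}(x+u_j)$, which is the claimed factorisation of $f$. Uniqueness of $u_i$ holds because it is the unique double root.

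Next we read off the coefficients. Let $M$ be the multiset $B\cup\{u_i\}$, where $B=\{u_1,\dots,u_5\}$. Vieta's formulas give $b=c\,\sigma_1(M)=c(\sigma_1(B)+u_i)$ and $a=c\,\sigma_2(M)=c(\sigma_2(B)+u_i\sigma_1(B))$. For the constant term, $c^{q}=c\,\sigma_6(M)=c\,u_i\sigma_5(B)$, so $c^{q-1}=u_i\sigma_5(B)$. Take $s\in U$ with $s^{2}=u_i\sigma_5(B)$; this is possible because $U$ has odd order. Since $s^{q}=s^{-1}$, the solutions of $c^{q-1}=s^{2}$ are exactly $c=\tau/s$ with $\tau\in\mathbb{F}_q^*$. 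This yields the stated formulas for $a,b,c$ (reading the displayed $b$ without the stray factor $c$).

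It remains to handle the vanishing coefficients. The $x^{3}$-coefficient of $g$ is $0$, so $\sigma_3(M)=\sigma_3(B)+u_i\sigma_2(B)=0$. The requirement that the coefficients of $x^{2}$ and $x^{4}$ be conjugate holds automatically. We would check this using the identity $\overline{\sigma_k(B)}=\sigma_{5-k}(B)/\sigma_5(B)$, valid because $B\subseteq U$. For the converse direction, given $B$, $u_i$ and $\tau$ satisfying $\sigma_3(B)+u_i\sigma_2(B)=0$, define $a,b,c$ by these formulas. The same conjugation identity shows that $g=c(x+u_i)\prod_j(x+u_j)$ has exactly the shape of $g$ above. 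Hence $Z(f)=B$ and $|Z(f)|=5$.

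\textbf{Main obstacle.} We must match the condition $\sigma_3(B)+u_i\sigma_2(B)=0$ with membership of $B$ in $\mathcal{B}^{b}_{5,3}$, i.e.\ $\sigma_3(B-a)=0$ for some $a\in B$. A naive expansion in characteristic $2$ gives $\sigma_3(B-a)=\sigma_3(B)+a^{2}\sigma_1(B)$, which is not literally the same expression. We would first try to use the conjugation symmetry on $U$ to transform one condition into the other. Failing that, we would apply a Möbius map $x\mapsto 1/(x+u_i)$, which sends $u_i$ to $\infty$, and recompute; alternatively we would use the equivalent formulations of $\mathcal{B}^{b}_{5,3}$ given in \cite{Yan}. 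This identification is the step most likely to need care.
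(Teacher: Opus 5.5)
The paper gives no proof of this lemma; it imports it from Yan's work, so your proposal has to stand on its own. Up to the last paragraph it does. The reduction $f(u)=u^{-3}g(u)$, the exclusion of $c=0$, and the product-of-roots argument forcing the sixth root into $U$ (hence a unique double root $u_i$) are all correct. So are Vieta's formulas for $a,b,c$, the normalisation $c=\tau/s$ with $s^2=u_i\sigma_5(B)$, and the conjugation check in the converse. Your readings of the typos ($a,b,c\in\mathbb{F}_{q^2}$, no factor $c$ in the formula for $b$) are also right.

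The gap is the step you flag as the main obstacle. You leave it unresolved, and it rests on a miscalculation. Expand each $\prod_{j\in I}(u_j+a)$ over the $10$ triples $I\subseteq B$; each pair of $B$ lies in $3$ triples and each element in $6$. In characteristic $2$ this gives
\[
\sigma_3(B-a)=\sum_{i=0}^{3}\binom{5-i}{3-i}a^{3-i}\sigma_i(B)=10a^{3}+6a^{2}\sigma_1(B)+3a\,\sigma_2(B)+\sigma_3(B)=\sigma_3(B)+a\,\sigma_2(B).
\]
For $a=u_i$ this is exactly $\sigma_3$ of the multiset $B\cup\{u_i\}$, i.e.\ the vanishing $x^3$-coefficient of $g$. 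So there is nothing to identify.

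Your expression $\sigma_3(B)+a^2\sigma_1(B)$ is what one gets with $C=B\setminus\{a\}$ in place of $B$. Since $0\in B-a$, one has $\sigma_3(B-a)=\sigma_3(C-a)=\sigma_3(C)+a^2\sigma_1(C)$, and indeed $\sigma_3(B)+a\sigma_2(B)=\sigma_3(C)+a^2\sigma_1(C)$. Written with $B$, your expression differs from the correct value by $a\bigl(a^2+\sigma_2(C)\bigr)$. The detours you propose (conjugation symmetry, a M\"obius map, other formulations) would therefore try to convert the correct condition into a different one; that is unnecessary and need not be possible.

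With the correct expansion, the forward direction gives $Z(f)\in\mathcal{B}^{b}_{5,3}$ with witness $a=u_i$. The converse goes through exactly as you wrote it, provided $u_i$ is a witness of membership, as you assumed, rather than an arbitrary element of $B$. Otherwise the $x^3$-coefficient $c\,\sigma_3(B-u_i)$ of $c(x+u_i)\prod_j(x+u_j)$ is nonzero, and the formulas do not produce an $f$ with $Z(f)=B$.
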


		Consider the projective general linear group $\PGL(2,q)$ and its natural action on the points of the projective line PG$(1,q)$. An element represented by a matrix
		$\begin{bmatrix}
			a&b \\
			c&d
		\end{bmatrix}\in\PGL(2,q)$ acts on PG$(1,q)$ via
		\[
		(x_{0}:x_{1})\mapsto (ax_{0}+bx_{1} : cx_{0}+dx_{1}).
		\]
		Using the usual identification of $\mathbb{F}_{q}\cup\{\infty\}$ with PG$(1,q)$, this action can also be expressed as the linear fractional transformation
		$x\mapsto\frac{ax+b}{cx+d}$
		where $-d/c$ is mapped to $\infty$, and $\infty$ is mapped to $a/c$ if $c\ne 0$, or to $\infty$ if $c=0$.

		Let $\Stab_{U}=\{g\in\PGL(2,q^{2}) : g(U)=U\}$ denote the setwise stabilizer of the cyclic subgroup $U$ under the action of $\PGL(2,q^{2})$ on PG$(1,q^{2})$.
		The structure of this stabilizer is described in the following lemma.	
		\begin{Lemma}\label{Stab-U-3t}
			\cite[Proposition 6]{3-tran-BCH} The setwise stabilizer $\Stab_{U}$ can be expressed as
			\[
			\left \{
			\begin{pmatrix}
				\beta_{2}^{q}&\beta_{1}^{q} \\
				\beta_{1}&\beta_{2}
			\end{pmatrix}\in \PGL(2,q^{2}) : \beta_{1}^{q+1}\ne \beta_{2}^{q+1}	
			\right \}.
			\]
			Moreover, $\mathrm{Stab}_{U}$ is conjugate in $\mathrm{PGL}(2,q^2)$ to $\mathrm{PGL}(2,q)$, and its action on $U$ is equivalent to the triply transitive action of $\mathrm{PGL}(2,q)$ on $\mathrm{PG}(1,q)$.
		\end{Lemma}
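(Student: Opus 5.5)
The plan is to realize $U$ as the image of the projective subline $\mathrm{PG}(1,q)=\mathbb{F}_q\cup\{\infty\}$ under a suitable element of $\PGL(2,q^2)$. Stabilizers then transport by conjugation. After that, the explicit matrix description follows by a counting argument.

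\emph{Step 1 (Cayley-type transform).} Fix $\theta\in\mathbb{F}_{q^2}\setminus\mathbb{F}_q$ and let $h\in\PGL(2,q^2)$ be $h(x)=(x+\theta)/(x+\theta^{q})$, with $h(\infty)=1$. Its determinant $\theta^q-\theta$ is nonzero, so $h$ is a genuine element of $\PGL(2,q^2)$. For $x\in\mathbb{F}_q$ we have $(x+\theta)^q=x+\theta^q$, hence $h(x)^{q+1}=1$, and also $h(\infty)=1\in U$. Since $h$ is injective and $|\mathrm{PG}(1,q)|=q+1=|U|$, we get $h(\mathrm{PG}(1,q))=U$. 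Consequently $\Stab_U=h\,\Stab_{\mathrm{PG}(1,q)}\,h^{-1}$, and the action of $\Stab_U$ on $U$ is permutation-equivalent, via $h$, to the action of $\Stab_{\mathrm{PG}(1,q)}$ on $\mathrm{PG}(1,q)$.

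\emph{Step 2 (stabilizer of the subline).} I will show that $\Stab_{\mathrm{PG}(1,q)}=\PGL(2,q)$. The inclusion $\supseteq$ is clear. For $\subseteq$, take $g$ in the stabilizer. Then $g(0),g(1),g(\infty)$ are three distinct points of $\mathrm{PG}(1,q)$. Since $\PGL(2,q)$ is triply transitive, there is $g'\in\PGL(2,q)$ such that $g'g$ fixes $0,1,\infty$. Because $\PGL(2,q^2)$ acts sharply triply transitively, $g'g$ is the identity, so $g\in\PGL(2,q)$. This gives the conjugacy claim and the triply transitive equivalence, and in particular $|\Stab_U|=q(q^2-1)$.

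\emph{Step 3 (explicit form).} Let $G$ be the displayed set of matrices $\begin{pmatrix}\beta_2^q&\beta_1^q\\ \beta_1&\beta_2\end{pmatrix}$ with $\beta_1^{q+1}\ne\beta_2^{q+1}$; the condition is exactly that the determinant is nonzero.

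First, $G\subseteq\Stab_U$. For $u\in U$ we have $u^q=u^{-1}$. Put $g(u)=(\beta_2^q u+\beta_1^q)/(\beta_1u+\beta_2)$. Raising to the $q$-th power gives
\[
g(u)^q=\frac{\beta_2u^{-1}+\beta_1}{\beta_1^qu^{-1}+\beta_2^q}=\frac{\beta_2+\beta_1u}{\beta_1^q+\beta_2^qu}=g(u)^{-1},
\]
so $g(u)\in U$. The cases where a denominator vanishes are excluded by the determinant condition, since then $g(u)=\infty\notin U$ would contradict bijectivity onto $U$ (checked directly).

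Second, $|G|=q(q^2-1)$. The number of pairs $(\beta_1,\beta_2)$ with $N(\beta_1)=N(\beta_2)$ is $1+(q-1)(q+1)^2$, so the number of admissible pairs is
\[
q^4-1-(q-1)(q+1)^2=(q^2-1)(q^2-q).
\]
Two admissible pairs give the same projective element iff they differ by a scalar $\lambda$ preserving the shape of the matrix. Such a $\lambda$ must satisfy $\lambda^q=\lambda$, that is, $\lambda\in\mathbb{F}_q^*$. Dividing by $q-1$ gives $|G|=q(q^2-1)=|\Stab_U|$, hence $G=\Stab_U$.

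The step I expect to need the most care is this scalar-identification count in Step 3. I need to make sure that distinct projective classes are counted exactly once, and that the admissible pairs are precisely those with nonzero determinant.
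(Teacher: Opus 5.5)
Your proof is correct. The paper gives no proof of this lemma: it is quoted from Proposition 6 of the cited reference. So there is no in-paper argument to compare against, and yours is a self-contained derivation from standard facts. Those facts are sharp triple transitivity of $\PGL(2,q^2)$, triple transitivity of $\PGL(2,q)$, and the norm map $\mathbb{F}_{q^2}^*\to\mathbb{F}_q^*$ being onto with fibres of size $q+1$.

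Your ordering is efficient. The Cayley-type transform $h$ gives the conjugacy and the equivalence of actions at once. The explicit matrix form then follows from $G\subseteq\Stab_U$ and $|G|=(q^2-1)(q^2-q)/(q-1)=q(q^2-1)=|\Stab_U|$, so you never have to compute $h\,\PGL(2,q)\,h^{-1}$ entrywise. The scalar step you flagged is fine. If two admissible matrices differ by $\lambda\in\mathbb{F}_{q^2}^*$, comparing the $(1,1)$ entry with the $(2,2)$ entry, and the $(1,2)$ entry with the $(2,1)$ entry, gives $(\lambda^q-\lambda)\beta_2^q=(\lambda^q-\lambda)\beta_1^q=0$. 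Since $(\beta_1,\beta_2)\neq(0,0)$, this forces $\lambda\in\mathbb{F}_q^*$. Nothing in your argument needs $q$ to be even.

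There is also a side benefit for the paper. In characteristic $2$, the map $\phi(u)=(\gamma+u)/(1+\gamma u)$ in the proof of Theorem \ref{C123design} equals $u\mapsto h^{-1}(\gamma u)$ for your $h$ with $\theta=\gamma$. Here $\gamma\notin\mathbb{F}_q$ because $U\cap\mathbb{F}_q=\{1\}$. Your Step 1 therefore shows $\phi$ maps $U$ bijectively onto $\mathrm{PG}(1,q)$, with $\phi(\gamma)=0$ and $\phi(\gamma^{-1})=\infty$. That is exactly the paper's unproved claim that $\phi$ is a bijection $U\setminus\{\gamma,\gamma^{-1}\}\to\mathbb{F}_q^*$.

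One point needs tightening. In showing $G\subseteq\Stab_U$, you exclude a vanishing denominator because it "would contradict bijectivity onto $U$", but that bijectivity is what you are proving. Argue directly instead. If $\beta_1u+\beta_2=0$ for some $u\in U$, take $(q+1)$-th powers and use $u^{q+1}=1$ and $(-1)^{q+1}=1$; this gives $\beta_1^{q+1}=\beta_2^{q+1}$, which is excluded. You must also exclude $\beta_2^qu+\beta_1^q=0$, since $\beta_1^q+\beta_2^qu$ is the denominator in your formula for $g(u)^q$ (equivalently, you need $g(u)\neq 0$). The same computation handles it, using $\beta_i^{q(q+1)}=\beta_i^{q+1}$ because norms lie in $\mathbb{F}_q$. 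With both cases excluded, $g(u)^{q+1}=1$ for every $u\in U$, and injectivity of $g$ gives $g(U)=U$.
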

		
		Using Lemmas 8-10 in \cite{3-tran-BCH} and following the same argument as that in the proof of Theorem 21 therein, we have the following lemma.
		
		\begin{Lemma}\label{3transitive}
			Define a subgroup $G$ of $(\mathbb{F}_{q}^{*})^{q+1} \rtimes$ $\Stab_{U}$ by
			\[
			\left \{\left (
			\left ((\beta_{1}u+\beta_{2})^{3(q+1)}\right )_{u\in U} ; \begin{pmatrix}
				\beta_{2}^{q}&\beta_{1}^{q} \\
				\beta_{1}&\beta_{2}
			\end{pmatrix}^{-1}
			\right ) : \beta_{1},\beta_{2}\in\mathbb{F}_{q^{2}},\beta_{1}^{q+1}\ne \beta_{2}^{q+1}\right \}.
			\]
			Then $G$ is a subgroup of $\MAut(\mathcal{C}_{\{1,2,3\}})$. In particular, the automorphism group of $\mathcal{C}_{\{1,2,3\}}$ is triply transitive.
		\end{Lemma}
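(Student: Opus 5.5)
The plan is to verify directly that each element of $G$ maps codewords of $\mathcal{C}_{\{1,2,3\}}$ to codewords. Then, since the permutation parts of $G$ run over all of $\Stab_U$, Lemma \ref{Stab-U-3t} gives triple transitivity of $\Aut(\mathcal{C}_{\{1,2,3\}})$.

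Fix $g=\begin{pmatrix}\beta_2^q&\beta_1^q\\ \beta_1&\beta_2\end{pmatrix}$ with $\beta_1^{q+1}\ne\beta_2^{q+1}$. Its Möbius action is $u\mapsto (\beta_2^q u+\beta_1^q)/(\beta_1 u+\beta_2)$, and it preserves $U$. Take a codeword $\bm{c}_{(a,b,c)}$, viewed as the function $f(u)=\Tr_{q^2/q}(au+bu^2+cu^3)$ on $U$. The monomial element applies the inverse permutation and then scales coordinate $u$ by $(\beta_1u+\beta_2)^{3(q+1)}$, so the image is the function
\[
u\mapsto (\beta_1u+\beta_2)^{3(q+1)}\, f\!\left(\frac{\beta_2^q u+\beta_1^q}{\beta_1u+\beta_2}\right).
\]
Up to the choice of $g$ versus $g^{-1}$, which only changes which matrix parametrization appears, this is the function we must show again lies in the code.

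First, the scalar is legitimate. For $u\in U$ we have $(\beta_1u+\beta_2)^{q}=\beta_1^q u^{-1}+\beta_2^q$, so
\[
(\beta_1u+\beta_2)^{q+1}=(\beta_1u+\beta_2)(\beta_1^qu^{-1}+\beta_2^q)\in\mathbb{F}_q^*,
\]
being a norm that is nonzero because $g(U)=U$ forces $-\beta_2/\beta_1\notin U$. Its cube is then in $\mathbb{F}_q^*$ and commutes with $\Tr_{q^2/q}$. Writing $L=\beta_1u+\beta_2$ and $M=\beta_2^qu+\beta_1^q$, we get $M=u\,L^q$ on $U$. The image function is therefore
\[
\Tr_{q^2/q}\!\left(L^{3q+3}\Big(a\tfrac{M}{L}+b\tfrac{M^2}{L^2}+c\tfrac{M^3}{L^3}\Big)\right)=\Tr_{q^2/q}\!\left(a\,u L^{3q+2}+b\,u^2L^{3q+1}+c\,u^3L^{3q}\right).
\]

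The core computation is to expand these terms using $L^q=\beta_1^qu^{-1}+\beta_2^q$ on $U$, so that each term becomes a Laurent polynomial in $u$ with exponents between $-3$ and $3$ (for example $uL^{3q+2}=u\,(\beta_1u+\beta_2)^2(\beta_1^qu^{-1}+\beta_2^q)^3$). Then use $\Tr(\theta u^{-j})=\Tr(\theta^q u^{j})$, valid since $u^{-1}=u^q$ on $U$, to fold the negative exponents onto positive ones. The constant term contributes $\Tr(\theta)$, and one checks it combines to zero: in characteristic $2$, the $u^0$ coefficient of the whole expression is of the form $\theta+\theta^q$ and so lies in $\mathbb{F}_q$ with trace $2\theta'=0$. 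What remains should be $\Tr(a'u+b'u^2+c'u^3)$ with explicit $a',b',c'\in\mathbb{F}_{q^2}$ linear in $(a,b,c)$.

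This bookkeeping is the main obstacle, in particular confirming that the $u^0$ terms cancel rather than leaving a constant. It is exactly what Lemmas 8--10 of \cite{3-tran-BCH} handle for the analogous code, so I would follow that computation, invoking the argument of Theorem 21 there. Finally, the elements of $G$ are closed under composition, because the multiplier $(\beta_1u+\beta_2)^{3(q+1)}$ behaves as a cocycle for the matrix product. Hence $G\le \MAut(\mathcal{C}_{\{1,2,3\}})\le\Aut(\mathcal{C}_{\{1,2,3\}})$, and its permutation image $\Stab_U$ is triply transitive on $U$ by Lemma \ref{Stab-U-3t}.
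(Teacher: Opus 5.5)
Your route is the same as the paper's. The paper proves this lemma only by citing Lemmas 8--10 and Theorem 21 of \cite{3-tran-BCH}: check that each element of $G$ preserves the code, then read off triple transitivity from Lemma \ref{Stab-U-3t}.

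The gap is in the one computation you actually write down, which is the whole content of the lemma. Since $M=uL^{q}$, we have $M/L=uL^{q-1}$, so the image is
\[
\Tr\bigl(a\,uL^{4q+2}+b\,u^{2}L^{5q+1}+c\,u^{3}L^{6q}\bigr),
\]
not $\Tr(a\,uL^{3q+2}+b\,u^2L^{3q+1}+c\,u^3L^{3q})$. You substituted $u$ for $M$ rather than $uL^{q}$. Your expression already fails for $\beta_1=1$, $\beta_2=0$. There $g(u)=u^{-1}$ and the multiplier is $1$, so the true image is $\Tr(a^{q}u+b^{q}u^{2}+c^{q}u^{3})$. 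Your formula instead produces the constant $\Tr(a+b+c)$, which is not a codeword. So the $u^{0}$ cancellation you single out as the main obstacle genuinely fails for your formula. Moreover, your reason for it, that the constant coefficient is ``of the form $\theta+\theta^{q}$'', is asserted rather than derived.

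The repair is short. Using $L^{3q}=(M/u)^{3}$, the correct expression equals $u^{-3}\bigl(aM^{4}L^{2}+bM^{4}\cdot ML+cM^{4}M^{2}\bigr)$. In characteristic $2$, the powers $M^{4}=\beta_2^{4q}u^{4}+\beta_1^{4q}$, $M^{2}=\beta_2^{2q}u^{2}+\beta_1^{2q}$ and $L^{2}=\beta_1^{2}u^{2}+\beta_2^{2}$ are binomials. Hence $M^{4}L^{2}$ and $M^{6}$ involve only even powers of $u$, and $M^{5}L$ involves only the powers $0,1,2,4,5,6$. So the $u^{3}$-coefficient of the bracket is zero. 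The Laurent polynomial then has exponents in $\{\pm1,\pm2,\pm3\}$ with no constant term at all. Folding via $\Tr(\theta u^{-j})=\Tr(\theta^{q}u^{j})$ gives $\Tr(a'u+b'u^{2}+c'u^{3})$, a codeword. This is exactly where $q=2^{m}$ is used.

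The rest of your argument is fine: the multiplier lies in $\mathbb{F}_q^{*}$, $G$ is closed under composition by the cocycle identity for the bottom row, and triple transitivity follows from Lemma \ref{Stab-U-3t}.
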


		Equipped with these results on the code $\mathcal{C}_{\{1,2,3\}}$, we now present two main theorems of this subsection.
		\begin{Theorem}
			\label{C123design}
			Let $m\ge 5$ be an odd integer, $q=2^{m}$, and let $\mathcal{C}_{\{1,2,3\}}$ be defined by \eqref{C123}. Then the set of codewords with minimum weight in $\mathcal{C}_{\{1,2,3\}}$ forms a $q$-ary $2$-$(q+1,q-5,\lambda)$ design, where $\lambda = (q-2)(q-5)(q-6)(q-8)/{6!} $.
		\end{Theorem}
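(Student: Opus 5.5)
The plan is to combine Theorem~\ref{transitive-design} (with $t=2$) with the explicit monomial automorphisms of Lemma~\ref{3transitive}, and then to obtain the index $\lambda$ by a global count.

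\textbf{Step 1: minimum weight codewords.} By Lemma~\ref{Tr60}, a codeword of weight $q-5$ corresponds to a function $f$ with exactly six zeros on $U$. These zeros form a block $B\in\mathcal{B}_{6,3}$, and $f(u)=\frac{c}{u^{3}}\prod_{j}(u+u_{j})$ with $c=\tau/\sqrt{\sigma_6(B)}$ and $\tau\in\mathbb{F}_q^*$. So every block $B\in\mathcal{B}_{6,3}$ gives exactly $q-1$ codewords, namely the $\mathbb{F}_q^*$-multiples of a fixed one, consistent with Lemma~\ref{repeat_q-1}. Hence $A_{q-5}=(q-1)|\mathcal{B}_{6,3}|$. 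Lemma~\ref{B63-4design} gives
\[
|\mathcal{B}_{6,3}|=\frac{q-8}{2}\binom{q+1}{4}\Big/\binom{6}{4}.
\]

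\textbf{Step 2: uniformity over the entries at two fixed positions.} Lemma~\ref{3transitive} shows that $\Aut(\mathcal{C}_{\{1,2,3\}})$ is $2$-transitive. By Theorem~\ref{transitive-design} it therefore suffices to fix two positions $u,v\in U$ and show that, for all $a,b\in\mathbb{F}_q^*$, the number $N(a,b)$ of minimum weight codewords $\bm c$ with $c_u=a$ and $c_v=b$ does not depend on $(a,b)$.

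Multiplying codewords by a scalar $\tau\in\mathbb{F}_q^*$ shows $N(a,b)=N(\tau a,\tau b)$. So $N(a,b)$ depends only on the ratio $a/b$, and it remains to show that every ratio occurs equally often.

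For this I would use the elements of $G$ whose permutation part fixes both $u$ and $v$. Such an element sends codewords of weight $q-5$ to codewords of weight $q-5$. It multiplies the ratio $c_u/c_v$ by
\[
\left(\frac{(\beta_1u+\beta_2)^{q+1}}{(\beta_1v+\beta_2)^{q+1}}\right)^{3}\in\mathbb{F}_q^*,
\]
possibly after adjusting for the images of $u$ and $v$ under the inverse matrix. Since $m$ is odd, $3\nmid 2^m-1$, so cubing is a bijection of $\mathbb{F}_q^*$. It therefore suffices to show that the map
\[
g\mapsto (\beta_1u+\beta_2)^{q+1}/(\beta_1v+\beta_2)^{q+1}
\]
on the two-point stabilizer of $\{u,v\}$ in $\Stab_U$ is onto $\mathbb{F}_q^*$.

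\textbf{Step 3: the main obstacle.} Step 2 is the step I expect to be hardest. By Lemma~\ref{Stab-U-3t}, this two-point stabilizer is a cyclic group of order $q-1$, conjugate to the diagonal torus of $\PGL(2,q)$. I would first check that the map above is a group homomorphism, using the cocycle property of the multipliers $(\beta_1u+\beta_2)^{3(q+1)}$. Then I would show it is injective, for instance by conjugating $u,v$ to convenient points such as $1$ and $-1$ and computing directly. A homomorphism from a cyclic group of order $q-1$ into $\mathbb{F}_q^*$ that is injective is onto.

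If this direct route becomes messy, the fallback is to count $N(a,b)$ directly. Since $f(u)/f(v)$ is determined by $B$, $N(a,b)$ equals the number of $B\in\mathcal{B}_{6,3}$ avoiding $\{u,v\}$ whose ratio equals $a/b$. One would then show that this ratio is equidistributed, in the spirit of Corollary~\ref{N(mc)}.

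\textbf{Step 4: the index.} Once uniformity is established, $\mathcal{A}_{q-5}(\mathcal{C}_{\{1,2,3\}})$ is a $q$-ary $2$-design, and its index is
\[
\lambda=\frac{A_{q-5}\binom{q-5}{2}}{(q-1)^2\binom{q+1}{2}}.
\]
Substituting $A_{q-5}$ from Step 1 gives
\[
\lambda=\frac{(q-8)(q-2)(q-5)(q-6)}{2\cdot 24\cdot 15}=\frac{(q-2)(q-5)(q-6)(q-8)}{6!},
\]
as claimed.
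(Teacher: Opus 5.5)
Your argument is correct. It shares the paper's skeleton: reduce to two fixed positions via Theorem~\ref{transitive-design}, reduce to the ratio of the two entries using scalar multiples, and let the pointwise stabilizer of the two positions in $\Stab_U$ (a torus of order $q-1$) act on ratios through $h\mapsto h^{\pm 6}$. That map is onto because $\gcd(6,q-1)=1$. Where you differ is in how the effect on the ratio and the index are computed.

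The paper fixes the positions $\gamma,\gamma^{-1}$ and uses the factorization in Lemma~\ref{Tr60} to write the ratio as $\prod_{u\in B}\phi(u)$, with $\phi(u)=(\gamma+u)/(1+\gamma u)$. It uses Lemma~\ref{3transitive} only for permutation parts: $\psi=\phi^{-1}\circ(h\,\cdot)\circ\phi$ preserves $\mathcal{B}_{6,3}$, so it acts on the sets $\phi(B)$ by $Y\mapsto hY$ and multiplies the product by $h^{6}$. The paper then obtains the index as $|\Omega|/(q-1)$, where $|\Omega|$ counts blocks missing both points and is found by inclusion--exclusion.

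You instead read the change of ratio directly off the diagonal multipliers $(\beta_1x+\beta_2)^{3(q+1)}$ of $G$, and you get $\lambda$ from $A_{q-5}=(q-1)|\mathcal{B}_{6,3}|$. This bypasses evaluating $f$ at the two points and the auxiliary sets $\Omega,\Gamma$. The cost is that you rely on the exact form of the multipliers, not just on $\Stab_U$ preserving the blocks.

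Your deferred Step~3 closes with the paper's explicit torus, so no abstract homomorphism or injectivity argument is needed. Note that your suggested pair $1,-1$ is degenerate, since $-1=1$ in characteristic $2$. Take $u=\gamma$, $v=\gamma^{-1}$ and $\beta_1=1+h$, $\beta_2=\gamma^{-1}+\gamma h$ with $h\in\mathbb{F}_q^*$. These matrices fix $\gamma$ and $\gamma^{-1}$, and
$\beta_1\gamma+\beta_2=\gamma+\gamma^{-1}$, $\beta_1\gamma^{-1}+\beta_2=h(\gamma+\gamma^{-1})$.
So the ratio is multiplied by $h^{-3(q+1)}=h^{-6}$, which runs over all of $\mathbb{F}_q^*$.
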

		\begin{proof}
			Let $\bm{c}_{(a,b,c)}=(\Tr_{q^{2}/q}(a\gamma^{i}+b\gamma^{2i}+c\gamma^{3i}))_{i=0}^{q} \in \mathcal{C}_{\{1,2,3\}}$ be a codeword of minimum weight $d=q-5$. Then exactly six positions in $\bm{c}_{(a,b,c)}$ are zeros.
			Correspondingly, identifying $\gamma^{i}$ with $x$, we have that $f(x) = \Tr_{q^{2}/q}(ax+bx^{2}+cx^{3})$ has six distinct zeros in $U$. By Lemma \ref{Tr60}, we express $f(x)$ in terms of its zeros:
			$$f(x)=\frac{c}{x^{3}}\prod_{j=1}^{6}(x+u_{j}),\quad x\in U,$$
			where $\{u_{1},u_{2},\ldots,u_{6}\}\in \mathcal{B}_{6,3}$. The codeword can be written as $$\bm{c}=(f(1),f(\gamma),f(\gamma^{2}),\dots,f(\gamma^{q})).$$
			Thus, each 6-element set $\{u_{1},\dots,u_{6}\}\in \mathcal{B}_{6,3}$ corresponds to a function $f(x)$ and a codeword of weight $d$.
			
			The automorphism group of $ \mathcal{C}_{\{1,2,3\}}$ is triply transitive by Lemma \ref{3transitive}.
			Furthermore, each support of a minimum weight codeword is shared by exactly $q-1$ distinct codewords.
			Combining these facts with Theorem \ref{transitive-design}, for any weight two vector $\bm{v}$ with $v_{i}=a$ and $v_{j}=b$ where $a,b\in\mathbb{F}_{q}^{*}$, $1\le i<j\le q+1$, we obtain that the number $\lambda_{\bm{v}}$ of codewords $\bm{c}$ that cover $\bm{v}$ equals
			\begin{align*}
				\lambda_{\bm{v}} &=\left |\left \{\bm{c}\in \mathcal{C}_{\{1,2,3\}} : c_{i}=a,c_{j}=b,\mathrm{wt}(\bm{c})=d\right \}\right |\\
				&=\left |\left \{\bm{c}\in \mathcal{C}_{\{1,2,3\}} : c_{2}/c_{q+1} = f(\gamma)/f(\gamma^{q}) =z,\mathrm{wt}(\bm{c})=d\right \}\right |\\
				%				&=\left |\left \{\bm{c}\in \mathcal{C}_{\{1,2,3\}} : f(\gamma)/f(\gamma^{q})=z,\mathrm{wt}(\bm{c})=d \right \}\right | \\
				&=\left |\left\{B\in \binom{U}{6} : \sigma_{3}(B)=0, B\cap \{\gamma,\gamma^{-1}\}=\emptyset, f(\gamma)/f(\gamma^{-1})=z\right\}\right |,
			\end{align*}
			where $z=a/b$.
			Evaluating $f(x)$ at $x=\gamma$ and $x=\gamma^{-1}$ yields that
			\[
			f(\gamma)=\frac{c}{\gamma^{3}}\prod_{j=1}^{6}(\gamma+u_{j}),\quad f(\gamma^{-1})=\frac{c}{\gamma^{3}}\prod_{j=1}^{6}(1+\gamma u_{j}).
			\]
			Thus
			$$\dfrac{f(\gamma)}{f(\gamma^{-1})}=\prod_{u\in B} \frac{\gamma+u}{1+\gamma u}=z, $$	
			where $B=\{u_{1},\dots,u_{6}\}\in \mathcal{B}_{6,3} $ and $B\cap \{\gamma,\gamma^{-1}\}=\emptyset$.

			Consider the linear transformation in $\mathrm{GL}(2,q^{2})$ defined by $x\mapsto\dfrac{\gamma+x}{1+\gamma x}$. Denote by $\phi$ its restriction to $U\backslash \{\gamma,\gamma^{-1}\}$. It is easy to see that $\phi$ is a bijection from $U\backslash \{\gamma,\gamma^{-1}\}$ to $\mathbb{F}_{q}^{*}$.
			
			Define
			$$\Omega = \left\{ B\in \binom{U}{6} : \sigma_{3}(B)=0, B\cap \{\gamma,\gamma^{-1}\}=\emptyset \right\}. $$	
			By Lemma \ref{B63-4design}, $\mathcal{B}_{6,3}$ is a $4$-$(q+1,6,(q-8)/2)$ design.
			Then by \eqref{lambda-i} and the principle of inclusion-exclusion, we have
			\begin{align*}
				|\Omega|=\frac{q-8}{2}\left (\frac{\binom{q+1}{4}}{\binom{6}{4}}-2\frac{\binom{q}{3}}{\binom{5}{3}}+\frac{\binom{q-1}{2}}{\binom{4}{2}}\right )
				= (q-1)(q-2)(q-5)(q-6)(q-8)/6!.
			\end{align*}
			Define
			$$\Gamma = \left\{Y\in \binom{F_{q}^{*}}{6} : \phi^{-1}(Y)\in \Omega \right\}.$$
			For $Y=\{y_{1},\dots,y_{6}\}\in \Gamma$, let $u_{i}=\phi^{-1}(y_{i})$. Then $\sigma_{3}(u_{1},\dots,u_{6})=0$.
			The following claim is crucial.
			
			\noindent\textbf{Claim: }If $Y\in\Gamma$, then $hY\in\Gamma$ for any $h\in\mathbb{F}_{q}^{*}$.
			
			\noindent\textit{Proof of the claim. }
			For $h\in \mathbb{F}_{q}^{*}$ and $Y\in\Gamma$, let $v_{i} = \phi^{-1}(hy_{i})$. Substituting $ y_{i}=\phi(u_{i})=\frac{\gamma+u_{i}}{1+\gamma u_{i}}$ into the expression for $v_{i}$ yields:
			\[
			v_{i} = \frac{\beta_{2}^{q}u_{i}+\beta_{1}^{q}}{\beta_{1}u_{i}+\beta_{2}}, \quad \text{where } \beta_{1} = h + 1,\ \beta_{2} = \gamma^{-1}+\gamma h.
			\]
			
			Define $\psi$ on $U\setminus\{\gamma,\gamma^{-1}\}$ by $\psi(x) = \frac{\beta_{2}^{q}x+\beta_{1}^{q}}{\beta_{1}x+\beta_{2}}$.
			Because $\beta_{1}^{q+1}\ne\beta_{2}^{q+1}$,
			$\psi$ is a bijection from $U\setminus\{\gamma,\gamma^{-1}\}$ to itself, fixing $\gamma$ and $\gamma^{-1}$.
			Then $\psi\in\Stab_{U}$ by Lemma \ref{Stab-U-3t}.
			By Lemma \ref{3transitive}, $\psi$ is the permutation part of a monomial automorphism $g\in\MAut(\mathcal{C}_{\{1,2,3\}})$
			that preserves codewords of weight $d$. Consequently, if a codeword $\bm{c}_{1}\in\mathcal{C}_{\{1,2,3\}}$ of weight $d$ has its zero set represented by $\{u_1, \dots, u_6\} \in \Omega$, then the image codeword $\bm{c}_{2} = g(\bm{c}_{1})$ is also of weight $d$ and corresponds to the zero set $\{v_1, \dots, v_6\} =  \{\psi(u_1), \dots, \psi(u_6)\}\in\mathcal{B}_{6,3}$. 	
			Moreover $\{v_1, \dots, v_6\} \in \Omega$.
			Since $hY = \{h y_1, \dots, h y_6\} = \{\phi(v_1), \dots, \phi(v_6)\}$, we conclude $hY \in \Gamma$.

			The map $\mu : \mathbb{F}_{q}^{*}\to \mathbb{F}_{q}^{*} $ defined by $\mu(x)=x^{6}$ is a bijection since $\gcd(6,q-1)=1$.
			Given $z_{1}\in\mathbb{F}_{q}^{*}$ and $Y_{1}\in \Gamma$ with $\prod_{y\in Y_{1}}y=z_{1}$, for any $z_{2}\in \mathbb{F}_{q}^{*}$, choose $h\in \mathbb{F}_{q}^{*}$ such that $h^{6}=z_{2}z_{1}^{-1}$. Then let $Y_{2}=hY_{1}$. By the claim, $Y_{2}\in \Gamma$, so we have $$\prod_{y\in Y_{2}}y=\prod_{y\in Y_{1}}hy=h^{6}\prod_{y\in Y_{1}}y=z_{2}.$$
			This establishes a bijection between the two sets $\Gamma_{z_{1}}$ and $\Gamma_{z_{2}}$, where
			$\Gamma_{x} = \left \{ Y\in\Gamma :  \prod_{y\in Y}y=x \right \}$ for $x\in\mathbb{F}_{q}^{*}$.
			Since $|\Gamma|=|\Omega|$, we obtain
			\begin{align*}
				\lambda_{\bm{v}} &=\left |\left \{ B\in\Omega :  \textstyle\prod_{u\in B}\frac{\gamma+u}{1+\gamma u}= z\right \}\right | \\
				&= \left |\left \{ Y\in\Gamma :  \textstyle\prod_{y\in Y}y=z\right \}\right | \\
				&= |\Omega|/{(q-1)} \\
				&= {(q-2)(q-5)(q-6)(q-8)}/{6!}.
			\end{align*}
			Thus the required parameters hold, completing the proof.
		\end{proof}
		
		Next, we show that the set of codewords of weight $d+1$ in $\mathcal{C}_{\{1,2,3\}}$ also forms a $q$-ary $2$-design.
		\begin{Theorem}
			\label{C123-d+1}
			Let $m\ge 5$ be an odd integer, $q=2^{m}$, and let $\mathcal{C}_{\{1,2,3\}}$ be defined by \eqref{C123}. Then the set of codewords of weight $q-4$ in $\mathcal{C}_{\{1,2,3\}}$ forms a $q$-ary $2$-$(q+1,q-4,\lambda)$ design, where $\lambda = (q-2)(q-4)(q-5)/{4!}$.
		\end{Theorem}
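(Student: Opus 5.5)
The argument will follow the proof of Theorem \ref{C123design} closely, with Lemma \ref{Tr53} in place of Lemma \ref{Tr60} and the $4$-design $\mathcal{B}^{b}_{5,3}$ of Lemma \ref{B53-4design} in place of $\mathcal{B}_{6,3}$.

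\textbf{Step 1 (structure of the codewords).} A codeword of weight $d+1=q-4$ has exactly five zero coordinates. By Lemma \ref{Tr53} its defining function can be written as
$$f(x)=\frac{c(x+u_i)}{x^{3}}\prod_{j=1}^{5}(x+u_j),$$
where $B=\{u_1,\dots,u_5\}\in\mathcal{B}^{b}_{5,3}$ and $u_i\in B$ is the unique distinguished element (a double root). So each support corresponds to a block $B$ of $\mathcal{B}^{b}_{5,3}$, together with its uniquely determined $u_i$. By Lemma \ref{repeat_q-1}, exactly $q-1$ codewords share each such support, coming from the scalar $\tau\in\mathbb{F}_q^*$.

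\textbf{Step 2 (reduction to a fixed pair of coordinates).} By Lemma \ref{3transitive} and Theorem \ref{transitive-design}, it suffices to count codewords covering a weight-$2$ vector supported on the coordinates indexed by $\gamma$ and $\gamma^{-1}$. These must satisfy $f(\gamma)/f(\gamma^{-1})=z$. Evaluating at these points gives
$$\frac{f(\gamma)}{f(\gamma^{-1})}=\phi(u_i)\prod_{u\in B}\phi(u),\qquad \phi(x)=\frac{\gamma+x}{1+\gamma x},$$
with $\phi:U\setminus\{\gamma,\gamma^{-1}\}\to\mathbb{F}_q^*$ the bijection used before. Set
$$\Omega'=\{B\in\mathcal{B}^{b}_{5,3}: B\cap\{\gamma,\gamma^{-1}\}=\emptyset\}.$$
By \eqref{lambda-i} applied to the $4$-$(q+1,5,5)$ design and inclusion–exclusion,
$$|\Omega'|=5\left(\frac{\binom{q+1}{4}}{\binom{5}{4}}-2\frac{\binom{q}{3}}{\binom{4}{3}}+\frac{\binom{q-1}{2}}{\binom{3}{2}}\right).$$
This should simplify to $(q-1)(q-2)(q-4)(q-5)/4!$, and dividing by $q-1$ gives the claimed $\lambda$.

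\textbf{Step 3 (equidistribution of the product).} With $Y=\phi(B)$ and $y_0=\phi(u_i)\in Y$, the relevant quantity is $y_0\prod_{y\in Y}y$, i.e. a product of six factors in which $y_0$ appears twice. As in the Claim of Theorem \ref{C123design}, I will show that $Y\in\Gamma$ implies $hY\in\Gamma$ for every $h\in\mathbb{F}_q^*$, and moreover that the distinguished element of $hY$ is $hy_0$. The first part follows because the map $\psi$ constructed there is the permutation part of a monomial automorphism $g$. Applying $g$ to the codeword gives another weight-$(d+1)$ codeword whose zero set is $\psi(B)$. The distinguished element must also be transported, $\psi(u_i)$, because it is intrinsically attached to the codeword: it is the unique $u_i$ in Lemma \ref{Tr53}, and $g$ acts on $f$ by substitution times a nonzero scalar, so the double root moves to $\psi(u_i)$.

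\textbf{Step 4 (conclusion).} Multiplying $Y$ by $h$ multiplies the weighted product by $h^{6}$. Since $\gcd(6,q-1)=1$, the map $x\mapsto x^6$ is a bijection of $\mathbb{F}_q^*$. Hence the sets of blocks with each fixed value $z$ are equinumerous, and $\lambda_{\bm v}=|\Omega'|/(q-1)$.

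\textbf{Main obstacle.} The delicate point is Step 3, namely justifying that the distinguished element is carried along under the automorphism. I would first verify it by showing that the multiplier $(\beta_1u+\beta_2)^{3(q+1)}$ together with the substitution turns $\frac{(x+u_i)}{x^3}\prod(x+u_j)$ into a function of the same shape with roots $\psi(u_j)$ and double root $\psi(u_i)$. The alternative is to argue via the uniqueness clause of Lemma \ref{Tr53} directly.
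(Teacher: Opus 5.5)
Your proposal is correct and follows the paper's proof essentially step for step: Lemma \ref{Tr53}, the reduction via Theorem \ref{transitive-design} to the positions $\gamma,\gamma^{-1}$, the count of $\Omega$ from the $4$-$(q+1,5,5)$ design, and the $h^{6}$-scaling argument. Your Step 3 also justifies, via the explicit multiplier and substitution, that the distinguished root $u_i$ is carried to $\psi(u_i)$, which the paper uses without comment when it writes $y_i'=hy_i$.

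One small correction: the integer $h$ of Lemma \ref{repeat_q-1} equals $q-5=d$ here, so that lemma does not cover weight $q-4$. The fact that each support carries exactly $q-1$ codewords should instead come from the uniqueness of $u_i$ and the $\tau$-parametrization in Lemma \ref{Tr53}, as your phrase ``coming from the scalar $\tau$'' already suggests.
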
	
		\begin{proof}
			We adopt a methodology analogous to the proof of Theorem \ref{C123design}.
			Let $\bm{c}_{(a,b,c)}=(\Tr_{q^{2}/q}(a\gamma^{i}+b\gamma^{2i}+c\gamma^{3i}))_{i=0}^{q} \in \mathcal{C}_{\{1,2,3\}}$ be a codeword of weight $d+1 = q-4$. Then exactly five positions in $\bm{c}_{(a,b,c)}$ are zeros.
			Correspondingly, identifying $\gamma^{i}$ with $x$, we have that $f(x) = \Tr_{q^{2}/q}(ax+bx^{2}+cx^{3})$ has five distinct zeros in $U$.			
			By Lemma \ref{Tr53}, we write $f(x)$ in terms of its zeros:
			$$f(x)=\frac{c(x+u_{i})}{x^{3}}\prod_{j=1}^{5}(x+u_{j}),\quad x\in U,$$
			where $\{u_{1},\dots,u_{5}\}\in \mathcal{B}_{5,3}^{b}$ and $u_{i}\in \{u_{1},\dots,u_{5}\}$ is fixed. The codeword can be written as $$\bm{c}=(f(1),f(\gamma),f(\gamma^{2}),\dots,f(\gamma^{q})).$$

			By Lemma \ref{repeat_q-1}, Theorem \ref{transitive-design} and Lemma \ref{3transitive}, it follows that for any weight two vector $\bm{v}$ with $v_{i}=a$ and $v_{j}=b$ where $a,b\in\mathbb{F}_{q}^{*}$, we obtain that the number $\lambda_{\bm{v}}$ of codewords $\bm{c}$ that cover $\bm{v}$ equals
			\begin{align*}
				\lambda_{\bm{v}} &=\left |\left \{\bm{c}\in \mathcal{C}_{\{1,2,3\}} : c_{i}=a,c_{j}=b,\mathrm{wt}(\bm{c})=d+1\right \}\right |\\
				&=\left |\left \{\bm{c}\in \mathcal{C}_{\{1,2,3\}} : c_{2}/c_{q+1} = f(\gamma)/f(\gamma^{q}) =z,\mathrm{wt}(\bm{c})=d+1 \right \}\right | \\
				&=\left |\left\{ B\in\mathcal{B}_{5,3}^{b} : B\cap \{\gamma,\gamma^{-1}\}=\emptyset, f(\gamma)/f(\gamma^{-1})=z\right\}\right |
			\end{align*}
			where $z = a/b$.
			Evaluating $f(x)$ at $x=\gamma$ and $x=\gamma^{-1}$ yields
			\[
			f(\gamma)=\frac{c(\gamma + u_{i})}{\gamma^{3}}\prod_{j=1}^{5}(\gamma+u_{j}),\quad f(\gamma^{-1})=\frac{c(1+\gamma u_{i})}{\gamma^{3}}\prod_{j=1}^{5}(1+\gamma u_{j}),\quad
			\frac{\gamma+u_{i}}{1+\gamma u_{i}}\prod_{u\in B} \frac{\gamma+u}{1+\gamma u}=z,
			\]
			where $B=\{u_{1},\dots,u_{5}\}\in \mathcal{B}_{5,3}^{b}$ and $B\cap \{\gamma,\gamma^{-1}\}=\emptyset$.
			Define the bijection $\phi : U\backslash \{\gamma,\gamma^{-1}\} \to \mathbb{F}_{q}^{*} $ by
			$\phi(u)=\frac{\gamma+u}{1+\gamma u}.$ Let
			$$\Omega = \left\{ B\in \mathcal{B}_{5,3}^{b} : B\cap \{\gamma,\gamma^{-1}\}=\emptyset \right\},\quad \Gamma = \left\{ Y\in \binom{F_{q}^{*}}{5} : \phi^{-1}(Y)\in \Omega \right\}. $$
			By Lemma \ref{B53-4design}, $\mathcal{B}_{5,3}^{b}$ is a $4$-$(q+1,5,5)$ design. Then by \eqref{lambda-i} and  the principle of inclusion-exclusion, we have
			$$|\Omega|=(q-1)(q-2)(q-4)(q-5)/4!.$$
			
			For arbitrary $h\in \mathbb{F}_{q}^{*}$ and $Y=\{y_{1},\dots,y_{5}\}\in\Gamma$, we have $hY \in \Gamma$.	
			Furthermore, the map $\mu : \mathbb{F}_{q}^{*}\to \mathbb{F}_{q}^{*} $ defined by $\mu(x)=x^{6}$ is a bijection.
			For $Y_{1}\in \Gamma$ with $y_{i}\prod_{y\in Y_{1}}y=z_{1}\in\mathbb{F}_{q}^{*}$, $y_{i}\in Y_{1}$, and for any $z_{2}\in \mathbb{F}_{q}^{*}$, choose $h\in \mathbb{F}_{q}^{*}$ such that $h^{6}=z_{2}z_{1}^{-1}$. Then $Y_{2}=hY_{1}\in \Gamma$, and
			$$y_{i}'\prod_{y\in Y_{2}}y=hy_{i}\prod_{y\in Y_{1}}hy=h^{6}y_{i}\prod_{y\in Y_{1}}y=z_{2}.$$
			Since $|\Gamma|=|\Omega|$, we conclude that
			\begin{align*}
				\lambda_{v} &= \left |\left\{ B\in\Omega :  f(\gamma)/f(\gamma^{-1})=z\right\}\right | \\
				&=\left |\left \{ Y\in\Gamma :  y_{i}\textstyle\prod_{y\in Y}y=z\text{ for some }y_{i}\in Y\right \}\right | \\
				&= |\Omega|/{(q-1)} \\
				&= {(q-2)(q-4)(q-5)}/{4!}.
			\end{align*}
			This completes the proof.	
		\end{proof}

	\section{Summary and concluding remarks}
		This paper systematically bridges coding theory and combinatorial designs by constructing $q$-ary $t$-designs from linear codes.
		To some extent, this is a deeper investigation of constructing ordinary designs from linear codes.
		We have consolidated known methods, provided concrete classifications, and introduced novel proofs for infinite families of codes. Our principal contributions are as follows.
		
		First, we refined classical results into two unified, general criteria: a Standard Criterion based on four fundamental parameters of a code, and a Puncturing–Shortening Criterion for deriving designs through code operations. The relationship between $q$-ary $t$-designs and $t$-regular codes have also been proposed.
		
		Second, we established new characterizations linking code properties directly to design structures: a linear code is MDS if and only if its minimum weight codewords form a specific $q$-ary $1$-design; a linear code with minimum distance $2e+1$ is perfect if and only if its minimum weight codewords form a $q$-ary $(e+1)$-design with $\lambda=1$.
		We presented a complete classification for codes with few weights: under monomial equivalence, simplex codes are the only one-weight codes yielding $q$-ary $2$-designs; we listed all five known classes of two-weight codes with dual distance $d^\perp \ge 4$ that produce $q$-ary $t$-designs; and we confirmed that extremal Type III and Type IV self-dual codes are a source of $q$-ary $t$-designs.
		
		Third, for codes falling outside the scope of the two criteria, we developed a new method leveraging the automorphism groups. Using this approach, we proved the following:
		\begin{itemize}
			\item[1.] A doubly-extended Reed-Solomon code $[q+1,k,q-k+2]_q$ yields a $q$-ary $2$-$(q+1,q-k+1,\lambda)$ design if $\gcd(k-1, q-1)=1$, where $\lambda = \binom{q-1}{k-1}/(q-1)$.
			\item[2.] For $q=2^m$ with $m \ge 5$, the trace codes $\mathcal{C}_{\{1,2,3\}}$ with parameters $[q+1,6,q-5]_q$ yield two infinite families of $q$-ary $2$-designs:
			$q$-ary $2$-$(q+1,q-5,\lambda_{1})$ designs and $q$-ary $2$-$(q+1,q-4,\lambda_{2})$ designs, where $\lambda_{1} = (q-2)(q-5)(q-6)(q-8)/6!$, $\lambda_{2} = (q-2)(q-4)(q-5)/4! $.
		\end{itemize}
		
		For $t\ge2$, the following interesting open problems arise naturally:
		\begin{itemize}
			\item [1.] Do codewords of other weights in the two families above also form $q$-ary $t$-designs? Moreover, do their dual codes hold $q$-ary $t$-designs?
			\item [2.] Whether extended generalized Reed-Solomon  codes, a generalization of doubly-extended Reed–Solomon codes, can give rise to $q$-ary $t$-designs?
			\item [3.] Does there exist an infinite family of linear codes holding $q$-ary $t$-designs with $t\ge3$?
		\end{itemize}

	\end{document}